\documentclass[11pt]{article}

\usepackage[utf8]{inputenc}
\usepackage{authblk}
\usepackage{fullpage}
\usepackage{bbm}
\usepackage{graphicx}
\usepackage{upref}
\usepackage{enumerate}
\usepackage{latexsym}
\usepackage{color,graphics}
\usepackage{comment} 
\usepackage{relsize}

\usepackage[section]{algorithm} 
\usepackage{amsmath,amsthm,amssymb,algorithmic,epsfig,graphicx, tikz}

\usepackage{amsfonts}
\usepackage{varioref}

\usepackage{amsmath}
\usepackage{amsfonts}
\usepackage{tikz} 

\usepackage{amssymb}
\usepackage{varioref}

\newtheorem{theorem}{Theorem}[section]
\newtheorem{lemma}[theorem]{Lemma}
\newtheorem{claim}[theorem]{Claim}
\newtheorem{proposition}[theorem]{Proposition}
\newtheorem{corollary}[theorem]{Corollary}

\newtheorem{defn}[theorem]{Definition}

\newcommand{\R}{\mathbb{R}}

\newcommand{\C}{\mathbb{C}}

\def\ket#1{\mathinner{|{#1}\rangle}}
\newcommand{\braket}[2]{\langle #1|#2\rangle}

\renewcommand{\part}[2]{\frac{\partial #1}{\partial #2}}

\newcommand{\all}[2]{\begin{align}\label{#2} #1\end{align}}
\newcommand{\al}[1]{\begin{align} #1\end{align}}

\newcommand{\enum}[1]{\begin{enumerate}#1\end{enumerate}}

\newcommand{\en}[1]{\left ( #1 \right )}

\newcommand{\nl}{\notag \\}

\newcommand{\norm}[1]{\lVert#1\rVert}

\long\def\rem#1{}

\newcommand{\thmref}[1]{\hyperref[#1]{{Theorem~\ref*{#1}}}}
\newcommand{\lemref}[1]{\hyperref[#1]{{Lemma~\ref*{#1}}}}
\newcommand{\remref}[1]{\hyperref[#1]{{Remark~\ref*{#1}}}}
\newcommand{\corref}[1]{\hyperref[#1]{{Corollary~\ref*{#1}}}}
\newcommand{\eqnref}[1]{\hyperref[#1]{{Equation~(\ref*{#1})}}}
\newcommand{\claimref}[1]{\hyperref[#1]{{Claim~\ref*{#1}}}}
\newcommand{\remarkref}[1]{\hyperref[#1]{{Remark~\ref*{#1}}}}
\newcommand{\propref}[1]{\hyperref[#1]{{Proposition~\ref*{#1}}}}
\newcommand{\factref}[1]{\hyperref[#1]{{Fact~\ref*{#1}}}}
\newcommand{\defref}[1]{\hyperref[#1]{{Definition~\ref*{#1}}}}
\newcommand{\exampleref}[1]{\hyperref[#1]{{Example~\ref*{#1}}}}
\newcommand{\hypref}[1]{\hyperref[#1]{{Hypothesis~\ref*{#1}}}}
\newcommand{\secref}[1]{\hyperref[#1]{{Section~\ref*{#1}}}}
\newcommand{\chapref}[1]{\hyperref[#1]{{Chapter~\ref*{#1}}}}
\newcommand{\apref}[1]{\hyperref[#1]{{Appendix~\ref*{#1}}}}

\title{Quantum machine learning with subspace states}
\author{ 
Iordanis Kerenidis \thanks{
IRIF, CNRS - University of Paris, France and  
QC Ware, Palo Alto, USA and Paris, France. 
Email: {\tt jkeren@irif.fr}.} 
\; \;  
Anupam Prakash \thanks{QC Ware, Palo Alto, USA and Paris, France. 
Email: { \tt anupamprakash1@gmail.com}.}
}

\date{\today}

\begin{document}

\maketitle 
\begin{abstract} 
We introduce a new approach for quantum linear algebra based on quantum subspace states and present three new quantum machine learning algorithms. The first is a quantum determinant sampling algorithm that samples from the distribution $\Pr[S]= det(X_{S}X_{S}^{T})$ for $|S|=d$ using $O(nd)$ gates and with circuit depth $O(d\log n)$. The state of art classical algorithm for the task requires $O(d^{3})$ operations \cite{derezinski2019minimax}. The second is a quantum singular value estimation algorithm for compound matrices $\mathcal{A}^{k}$, the speedup for this algorithm is potentially exponential. It decomposes a $\binom{n}{k}$ dimensional vector of order-$k$ correlations into a linear combination of subspace states corresponding to $k$-tuples of singular vectors of $A$. The third algorithm reduces exponentially the depth of circuits used in quantum topological data analysis from $O(n)$ to $O(\log n)$. 

Our basic tool is the quantum subspace state, defined as $\ket{Col(X)} = \sum_{S\subset [n], |S|=d} det(X_{S}) \ket{S}$ for matrices $X \in \R^{n \times d}$ such that $X^{T} X = I_{d}$, that encodes $d$-dimensional subspaces of $\mathbb{R}^{n}$ and for which we develop two efficient state preparation techniques. The first using Givens circuits uses the representation of a subspace as a sequence of Givens rotations, while the second uses efficient implementations of unitaries  $\Gamma(x) = \sum_{i} x_{i} Z^{\otimes (i-1)} \otimes X \otimes I^{n-i}$ with  $O(\log n)$ depth circuits that we term Clifford loaders.

\end{abstract} 

\section{Introduction} 
Quantum machine learning is one of the most promising avenues for demonstrating quantum advantage for real-world applications. The underlying idea for quantum machine learning is quantum linear algebra, which in turn 
relies on the ability of a quantum computer to efficiently relate the eigenspaces of a unitary matrix to the corresponding eigenvalues. Given a quantum circuit corresponding to a unitary matrix $U$, the quantum {\em phase estimation} algorithm \cite{K95}, approximates the eigenvalue corresponding to each eigenspace of $U$ in coherent superposition. The phase estimation algorithm is at the core of many quantum algorithms with exponential speedups, including Shor's factoring algorithm \cite{shor1999polynomial}.

The first quantum machine learning algorithms focused on potentially exponential speedups for sparse matrices using the celebrated algorithm of Harrow, Hassidim and Lloyd \cite{HHL09} and related techniques, however these approaches had a number of caveats that make it difficult to establish end-to-end speedups \cite{A15}. One of the main challenges was that while sparse and well-conditioned matrices were well suited for the applications of the HHL algorithm, the matrices arising in machine learning were dense and often had good low-rank approximations. 

In order to address some of these challenges, quantum algorithms for preparing vector states $\ket{x} = \sum_{i} x_{i} \ket{i}$ using quantum random access memory (QRAM) data structures and algorithms for singular value estimation (SVE), an algorithmic primitive that extends phase estimation to estimating singular values of arbitrary matrices were proposed \cite{prakash2014quantum}. These results led to a quantum algorithm for recommendation systems \cite{KP16}, a strong candidate for an end-to-end speedup that overcame many of the caveats of previous approaches, namely the state preparation routines were explicitly specified, the application required only sampling from the preference matrix and the running time was polynomial (in fact sub-linear) in the rank instead of matrix dimensions. Similar QRAM based techniques have been used subsequently to obtain quantum machine learning applications, including classification, regression and clustering \cite{KL21,KLL19,KL20, KLP20, KP21}.

Subsequently, there has been a striking refutation of the potential for exponential speedups for the quantum recommendation systems and other QML algorithms based on low-rank linear algebra. 
Tang \cite{T19} obtained a quantum-inspired classical algorithm with running time polynomial in the rank and poly-logarithmic in matrix dimensions for the recommendation systems problem. This ``dequantization'' result has since been improved and extended to refute other exponential speedups for quantum machine learning problems \cite{chia2020sampling}. 
Nevertheless, the extent of quantum speedups for low-rank linear algebra is still an active area of research. The quantum algorithms for low-rank approximation and least squares regression retain an 8th-power polynomial speedup over the state of the art quantum-inspired algorithms,   \cite{gilyen2020improved}. Recent work relates the quantum-inspired approaches to state of the art leverage score sampling-based approaches in randomized numerical linear algebra, providing some evidence that a high degree polynomial speedup for the quantum algorithms may indeed survive \cite{chepurko2020quantum}.
Let us also note that the quantum-inspired algorithms are known to be considerably slower in practice than the classical algorithms with a polynomial dependence on the matrix dimension, and hence a practical quantum advantage for low-rank linear algebra, based on a theoretical high-degree polynomial speedup, remains a very viable possibility.

As discussed above, there are important bottlenecks to quantum machine learning based on both sparse and low-rank quantum linear algebra and an end-to-end exponential speedup has remained elusive for both approaches, though high-degree polynomial speedups still exist. It is therefore important to investigate if there are alternative approaches to quantum linear algebra and machine learning that mitigate some of these problems and provide both theoretical and practical quantum advantages. 

A first attempt to address the issues surrounding QML algorithms was to replace the QRAM based state preparation procedures by parametrized circuits called data loaders \cite{johri2021nearest}. A simple parameter counting argument shows that the number of qubits times the circuit depth must be $\Omega(n)$ for a data loader circuit that can construct any $n$-dimensional vector state. Near term applications with depth limited devices motivated the use of unary data loader circuits with $n$ qubits and depth $\log n$. Unary data loaders have been further used for estimating the distance or inner product between $n$-dimensional data points, and have been applied to near term similarity-based classification and clustering, for example via nearest centroid or $k$-nearest neighbors algorithms \cite{johri2021nearest} and quantum neural networks \cite{mathur2021medical}. 

The quantum advantage using these simple tools is in terms of circuit depth rather than circuit size. This implies that a single quantum processor with a native ability to perform gates on different qubits in parallel (see e.g. \cite{benchmarking2019}) can outperform a single classical processor, though one may argue that such parallelism for quantum architectures may not scale up and so the correct comparison should be against a classical parallel algorithm running on multiple classical processors, in which case only the circuit size and not depth matters. It has been an open question to find quantum speedups also in terms of circuit size using unary encodings of vectors into quantum states. 

In this paper, we present an alternative approach to quantum linear algebra and machine learning distinct from both the sparse and low-rank linear algebra approaches. The power of this approach comes from the fact that it can operate directly with subspaces of arbitrary dimension rather than just vectors (i.e. one-dimensional subspaces). Further, there are subspace analogs of most quantum machine learning algorithms including inner product estimation, singular value estimation (SVE) and transformation (SVT) with substantially larger speedups than the vector versions of these algorithms. 


At first sight, the approach of encoding $n$-dimensional vectors with $n$-qubit systems may look unpromising as it does not seem to utilize the quantum advantage offered by the exponentially large Hilbert space. However, the exponential size of the Hilbert space is instead leveraged to encode not only vectors as superpositions over bit strings of Hamming weight 1, but also all $k$-dimensional subspaces of $\R^{n}$ into superpositions over bit strings of Hamming weight $k$. The main conceptual contribution of our work is to introduce subspace states, which are quantum states encoding subspaces of arbitrary dimension. 
\begin{defn} \label{def1} 
Let $\mathcal{X}$ be a $d$-dimensional subspace of $\R^{n}$ and let $X \in \R^{n \times d}$ be a matrix with orthonormal columns ($X^{T} X = I_{d}$) such that $Col(X)= \mathcal{X}$. The subspace state $\ket{\mathcal{X} }$ is defined as, 
\al{ 
\ket{\mathcal{X}}  = \ket{ Col(X)  } = \sum_{|S|=d} det(X_{S}) \ket{ S }. \notag 
} 
where $X_{S}$ is the restriction of $X$ to the rows in $S$ and the sum is over size $d$ subsets of $[n]$. 
\end{defn} 
\noindent Subspace states represent the underlying subspace $\mathcal{X}$ and are invariant under column operations applied to the representing matrix $X$. The subspace state defined above is correctly normalized, this follows from the Cauchy Binet identity. Note also that for $d=1$, the subspace state reduces to the well-known unary amplitude encoding of a single vector.

The first question we answer in this paper is the efficient preparation of such subspace states. We present two efficient approaches to subspace state preparation in this paper. The first approach relies on the fact that any orthogonal matrix (i.e. real-valued unitary matrix) can be decomposed into a sequence of elementary Givens rotation matrices. Further, we show that given a subspace state $\ket{Col(A)}$ of arbitrary dimension, 
the state $\ket{ Col(G(i, j, \theta) A) } $  where $G(i,j, \theta)$ is a Givens rotation matrix acting on qubits $i$ and $j$, is obtained by applying a quantum gate that we call the 
$FBS$ (Fermionic Beam Splitter) gate to $\ket{Col(A)}$. 
Thus starting with the initial state $\ket{1^{k} 0^{n-k}}$, an arbitrary $k$-dimensional subspace state can be prepared by applying a suitable sequence of FBS gates. The FBS gate generalizes the $RBS$ (Reconfigurable Beam Splitter) gates previously used in data loader circuits for preparing amplitude encodings of vectors \cite{johri2021nearest}. The $FBS_{ij}(\theta)$ gate on qubits $i$ and $j$ acts as either the $RBS(\theta)$ or the $RBS(-\theta)$ gate with the sign depending on the parity of the qubits interleaved between $i$ and $j$. Note that the application of the $FBS$ gate implements the effect of a Givens rotation independent of the dimension $d$ of the subspace state, classically this operation would require $O(d)$ arithmetic operations.

Corresponding to every decomposition of an orthogonal matrix $U$ into Givens rotations, the Givens circuit $\mathcal{G}(U)$ is obtained by implementing each Givens rotation using the corresponding $FBS$ gate. We show that the subspace state corresponding to a subset $S$ of the columns of $U$ can be prepared by applying $\mathcal{G}(U)$ to $\ket{S}$. The different decompositions of the matrix $U$ into Givens rotations can have different gate complexities and depths, but they implement the same underlying unitary operation. Two such decompositions using pyramid circuits \cite{kerenidis2021classical} and the sine-cosine decomposition \cite{stewart1982computing, gawlik2018backward} are advantageous in terms of locality and circuit depth respectively. Subspace state preparation with Givens circuits is advantageous for preparing $\ket{Col(A)}$ when $A$ embeds into an orthogonal matrix $U$ with low Givens complexity, that is $U$ can be written as a product of a small number of elementary Givens rotations. The disadvantage of this approach is that for the worst case, they require a pre-processing overhead of up to $O(nd^{2})$ for computing the Givens decomposition.

The second approach to subspace state preparation mitigates this pre-processing overhead by introducing Clifford loader circuits $\mathcal{C}(x)$. Clifford loader circuits can be defined using unary data loaders. A unary data loader $D(x)$ for vector $x \in \R^{n}$ is a quantum circuit implemented with RBS gates such that $D(x) \ket{ e_1 } = \sum_{i} x_{i} \ket{ e_i}$, where $e_i$ is the $n$-bit string $0^{i-1}10^{n-i}$. Clifford loader circuits are defined in terms of the data loader circuits $D_{F}(x)$ where the $RBS$ gates have been replaced by the corresponding $FBS$ gates. 
\begin{defn} \label{cloader} 
Given a unary data loader $D(x)$, the Clifford loader is the circuit $\mathcal{C}(x) = D_F(x)(X\otimes I^{\otimes n-1}) D_F(x)^{*}$ where $D_F(x)$ is obtained by replacing each $RBS$ gate in $D(x)$ by the corresponding $FBS$ gate. 
\end{defn} 
\noindent The main results on Clifford loaders show that the Clifford loader circuit $\mathcal{C}(x)$ is independent of the unary data loader circuit $D(x)$ and always implements the unitary operator, 
\al{ 
\Gamma(x) = \sum_{i} x_{i} Z^{\otimes (i-1)} \otimes X \otimes I^{\otimes (n-i)} 
} 
That is $\mathcal{C}(x)$ encodes the vector $x$ as a sum of the mutually anti-commuting operators generating the Clifford algebra.
The action of the Clifford loader circuit on the entire Hilbert space can be described in terms of subspace states, if $\ket{Col(Y)}$ contains the vector $x$ then the Clifford loader creates the subspace state $\ket{Col(Y')}$ such that $Col(Y)= (Y', x)$ and if $\ket{Col(Y)}$ is orthogonal to $x$, then it creates the state $\ket{Col(Y, x)}$ by adding $x$
to the subspace $Col(Y)$. Thus, a subspace state $\ket{Col(X)}$ can be prepared by applying successively the Clifford loader circuits corresponding to the columns of the representing matrix. 

The gate complexity, circuit depth, and pre-processing overheads for the different subspace state preparation methods 
are summarized in Table \ref{tab1}.

\begin{table} \label{tab1} 
\begin{center} 
  \begin{tabular}{|l|c|c|c|} 
   \hline
  &&&\\
   \textbf{Method} & \textbf{Gates} & \textbf{Depth} & \textbf{Preprocessing}  \\
      &&&\\
    \hline
      &&&\\
      Pyramid Circuit & $O(nd)$ & $n+d$ & $O(nd^{2})$ \\
    &&&\\
    \hline
     &&&\\
    Sine Cosine Decomposition   & $O(nd)$ & $d \log n$ & $O(nd^{2})$ \\
      &&&\\
    \hline 
     &&&\\
      Clifford Loaders  & $O(nd)$ & $d \log n$ & $O(nd)$ \\
    &&&\\
    \hline  
   \end{tabular} 
   \caption {Gate complexity, circuit depth, and pre-processing overheads for subspace state preparation methods.  } \label{table1} 
  \end{center} 
   \end{table}

Using the subspace states and the techniques developed for preparing them, we present three new quantum machine learning algorithms. The first algorithm is for determinant sampling, a task with several applications to numerical linear algebra and machine learning \cite{derezinski2021determinantal}. Given matrix $A \in \R^{n \times d}$, the determinant sampling problem is to sample subsets $S$ of the rows with $|S|=d$ such that $\Pr[S]= \frac{\det(A_{S}^{2})}{\det(A^{T}A)}$. 

It follows from the definition of subspace states that determinant sampling is equivalent to preparing the subspace state $\ket{Col(A)}$ and measuring in the standard basis. As the determinant sampling distribution depends only on the column space, classical algorithms typically pre-process the matrix to have orthonormal columns. The best known classical algorithm for determinant sampling requires $O(d^{3})$ arithmetic operations for generating a single sample from the determinant distribution after pre-processing the matrix to have orthonormal columns which takes $O(nd^{2})$ operations  \cite{derezinski2019minimax}. 

The two approaches to subspace state preparation in this work can be used to obtain algorithms for quantum determinant sampling. The first step for the quantum determinant sampling algorithm is the same as in the classical algorithms, namely to pre-process the matrix to have orthonormal columns which takes $O(nd^{2})$ operations. Then, the Clifford loader circuits provide an algorithm for generating samples from the determinant distribution using a quantum circuit with $O(nd)$ gates and depth $O(d \log n)$. As the above informal description of the action of the Clifford loader indicates, applying sequentially the Clifford loader for the columns of a matrix $A$ with orthonormal columns suffices to create the subspace state $\ket{Col(A)}$. An alternative proof of this result is also given using a path integral style argument to compute explicitly the amplitudes for the composition of Clifford loaders and show that they equal the corresponding determinants. 

\begin{theorem} 
Given $A \in \R^{n \times d}$, there is a quantum determinant sampling algorithm that requires $O(nd^{2})$ pre-processing to orthogonalize $A$, and generates subsequent samples 
using a quantum circuit with $O(nd \log n)$ gates and with total depth $O(d \log n)$. 
\end{theorem}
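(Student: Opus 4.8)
The plan is to reduce the sampling task to preparing a subspace state and measuring it, and then to realize that preparation as a composition of Clifford loaders. First I would perform the classical preprocessing: compute an orthonormal basis $Q \in \R^{n\times d}$ for $Col(A)$ via a QR factorization $A = QR$, which costs $O(nd^2)$ operations and is done once, not per sample. This step is exactly what makes the distributions match: restricting rows to $S$ gives $Q_S = A_S R^{-1}$, so $\det(Q_S)^2 = \det(A_S)^2/\det(R)^2$, while $\det(A^T A) = \det(R^T R) = \det(R)^2$; hence $\det(Q_S)^2 = \det(A_S)^2/\det(A^T A)$. Therefore measuring $\ket{Col(Q)}$ in the computational basis yields a set $S$ with probability $|\det(Q_S)|^2$, which is precisely the target determinant-sampling probability, and (as noted in the excerpt) is independent of the chosen basis for $Col(A)$.

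Second, I would show that $\ket{Col(Q)}$ is prepared by applying the Clifford loaders $\mathcal{C}(q_1),\dots,\mathcal{C}(q_d)$ for the columns of $Q$ in sequence to the vacuum $\ket{0^n}$, which is the unique $0$-dimensional subspace state ($\det$ of the empty minor equals $1$). Invoking the action of the Clifford loader established earlier in the paper, each $\mathcal{C}(q_j)$ appends the vector $q_j$ to the current subspace, and since the $q_j$ are orthonormal the intermediate states are exactly $\ket{Col([q_1])}, \ket{Col([q_1,q_2])}, \dots, \ket{Col(Q)}$.

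Third, to make this rigorous—and as the promised alternative argument—I would compute the amplitudes directly. Since $\mathcal{C}(q_j) = \Gamma(q_j) = \sum_i (q_j)_i\, Z^{\otimes (i-1)}\otimes X \otimes I^{\otimes (n-i)}$, the product $\Gamma(q_d)\cdots\Gamma(q_1)\ket{0^n}$ expands as a sum over index tuples $(i_1,\dots,i_d)$ of coefficients $\prod_j (q_j)_{i_j}$ times a string of Paulis applied to the vacuum. The $X$ factors flip the selected bits, so only tuples with distinct indices survive, producing $\ket{S}$ for $S=\{i_1,\dots,i_d\}$; each $Z^{\otimes(i_j-1)}$ contributes a $-1$ for every already-set bit below position $i_j$. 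Tracking this over all orderings of a fixed $S=\{s_1<\dots<s_d\}$ gives a total sign $(-1)^{\binom{d}{2}}\,\mathrm{sgn}(\pi)$ for the ordering indexed by $\pi$, so the amplitude on $\ket{S}$ collapses to $(-1)^{\binom{d}{2}}\sum_\pi \mathrm{sgn}(\pi)\prod_j (q_j)_{s_{\pi(j)}} = (-1)^{\binom{d}{2}}\det(Q_S)$, an irrelevant global phase times $\ket{Col(Q)}$. The main obstacle is exactly this sign bookkeeping: one must verify that commuting the $Z^{\otimes(i-1)}$ prefixes past the already-placed excitations generates precisely the permutation sign, so that the sum over orderings yields the determinant and not some other antisymmetric form. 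This is the statement that the operators satisfy the Clifford relations $\{\Gamma(q_i),\Gamma(q_j)\} = 2\langle q_i,q_j\rangle I$, which for orthonormal columns encode the antisymmetry of $\det$; checking the base case and the inductive step carefully is the only delicate part.

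Finally, the resource counts follow by assembling the $d$ loaders. Each $\mathcal{C}(q_j) = D_F(q_j)(X\otimes I^{\otimes n-1})D_F(q_j)^{*}$ is built from a binary-tree unary loader of $O(n)$ $FBS$ gates and depth $O(\log n)$; accounting for the parity-dependent sign in implementing each $FBS$ gate contributes the extra logarithmic factor, giving $O(nd\log n)$ gates overall and total depth $O(d\log n)$, on top of the one-time $O(nd^2)$ orthogonalization.
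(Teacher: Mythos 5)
Your proposal is correct and follows essentially the same route as the paper: orthogonalize via QR in $O(nd^2)$, apply the column-wise Clifford loaders to $\ket{0^n}$, justify correctness both by the subspace-action theorem and by a direct expansion of $\Gamma(q_d)\cdots\Gamma(q_1)\ket{0^n}$ into the Leibniz sum for $\det(Q_S)$, and assemble the gate and depth counts from the logarithmic-depth loader. Your explicit verification that the distribution is basis-independent and your tracking of the harmless global phase $(-1)^{\binom{d}{2}}$ (which the paper's sign convention silently absorbs) are slightly more careful than the paper's presentation, but the argument is the same.
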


\noindent For the case $d=O(n)$, the quantum determinant sampling algorithm has gate complexity $O(n^{2})$ compared to the state-of-the-art classical algorithm with gate complexity $O(n^{3})$. Note also that the quantum circuit is also quite shallow with depth only $O(n \log n)$.
The source of this speedup is the implementation of a classical row operation of complexity $O(n)$ by a single quantum $FBS$ gate. We also identify a family of matrices where this speedup can be cubic, these are matrices embedding into unitary matrices that can be written as a product of $O(n \log n)$ elementary Givens rotations. This class of matrices is known to include Fourier, wavelet and related transforms  \cite{rusu2021fast} and has recently attracted attention in the machine learning literature \cite{frerix2019approximating}. For this case, the additional speedup is due to the low Givens complexity of the matrix, while it remains to be seen if more efficient classical determinant sampling algorithms can be given for this setting.

Although the speedups obtained for quantum determinant sampling are polynomial, the tools developed for these speedups namely Givens circuits and Clifford loaders can also be used to obtain higher degree polynomial and 
potentially exponential quantum speedups. We provide such an application for both of these tools.

The second main result of this work is a singular value decomposition algorithm for compound matrices using Givens circuits. Compound matrices represent the action of a matrix on subspaces rather than vectors and are defined as follows. 
\begin{defn} \label{def:cmpd} 
[Compound matrix] Given a matrix $A \in \R^{n \times n}$, the compound matrix $\mathcal{A}^{k}$ for $k \in [n]$ is the $\binom{n}{k}$ dimensional matrix with entries 
$\mathcal{A}^{k}_{IJ} = det(A_{IJ})$ where $I$ and $J$ are subsets of rows and columns of $A$ with size $k$. 
\end{defn} 
\noindent The compound matrices are of exponential size for large enough $k$, however it is possible to perform $\ell_{2}$-sampling on the rows and columns of compound matrices using classical determinant sampling algorithms. The quantum inspired algorithms that have been successful in ``dequantizing'' singular value transformations for low rank matrices require $\ell_{2}$-sampling access, the high computational cost of $\ell_{2}$ sampling for compound matrices makes quantum SVD methods for these matrices more resistant to dequantization methods. Also, if the matrix $A$ has a good rank-$r$ approximation, the rank of the compound matrices $\mathcal{A}^{k}$ is exponential in $r$ for large $k$, 
and thus a dequantization method that relies on the existence of a low rank approximation would not be applicable.  It is therefore plausible that the quantum algorithm indeed achieves an exponential speedup for some settings, but further work on quantum inspired SVD for compound matrices would be needed to substantiate the claim of an exponential speedup. An equally important question is whether this speedup can be shown in practice, whether exponential or high-degree polynomial.

The singular vectors for the $k$-th order compound matrix $\mathcal{A}^{k}$ are the subspace states corresponding to subspaces spanned by $k$ tuples of singular vectors of $A$. Further, a block encoding for $A$ in orthogonal matrix $U$ extends to a block encoding of $\mathcal{A}^{k}$ in $\mathcal{U}^{k}$ and this can be used to derive the quantum singular value estimation algorithm for $\mathcal{A}^{k}$. In effect, the quantum SVD for compound matrices 
$\mathcal{A}^{k}$ takes a vector of order $k$ correlations and decomposes it as a linear combination of correlations arising from subspaces of singular vectors of $A$. 

\begin{theorem} 
(Subspace SVE) Let $A=U\Sigma V^{T}$ be the singular value decomposition for matrix $A \in \R^{m \times n}$ embedding in unitary $U \in \R^{N \times N}$. Then the mapping, 
\al{ 
\ket{\phi_{k}} := \sum_{S \in \mathcal{H}_{k}} \alpha_{S} \ket{Col(V_{S}) } \to  \sum_{S \in \mathcal{H}_{k}} \alpha_{S} \ket{Col(V_{S}) } \ket{ \overline{\theta_{S}} }
} 
such that $|\cos(\overline{\theta_{S}}) - \prod_{i \in S} \sigma_{i} | = O(\epsilon)$ has complexity $O(T(U)/\epsilon)$ where $T(U)$ is the complexity of implementing the Givens circuits for $U$. 
\end{theorem}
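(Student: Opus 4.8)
The plan is to reduce the subspace singular value estimation for the compound matrix $\mathcal{A}^{k}$ to a standard phase-estimation-based SVE applied to $\mathcal{A}^{k}$ treated as a single matrix. This rests on two facts: that $\mathcal{A}^{k}$ has an explicit singular value decomposition whose right singular vectors are exactly the subspace states $\ket{Col(V_{S})}$, and that the orthogonal embedding $U$ of $A$ lifts to an orthogonal embedding $\mathcal{U}^{k}$ of $\mathcal{A}^{k}$ that is realized, on the Hamming-weight-$k$ sector, by the Givens circuit $\mathcal{G}(U)$.

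First I would record the compound SVD. Writing $A=U\Sigma V^{T}$ and using the multiplicativity of the compound operation (a consequence of the Cauchy--Binet identity) together with the fact that the compound of an orthogonal matrix is orthogonal, one obtains $\mathcal{A}^{k} = \mathcal{U}^{k}\,\Lambda\,(\mathcal{V}^{k})^{T}$, where $\Lambda$ is the diagonal matrix with entries $\mu_{S} := \prod_{i\in S}\sigma_{i}$ indexed by the size-$k$ subsets $S$. Hence the singular values of $\mathcal{A}^{k}$ are the products $\mu_{S}$, the right singular vectors are the columns $\ket{Col(V_{S})} = \mathcal{V}^{k}\ket{S}$, and the left singular vectors are $\ket{Col(U_{S})}$. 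This already explains the target quantity: each $\mu_{S}\in[0,1]$, so $\mu_{S}=\cos\overline{\theta_{S}}$ for a well-defined angle, and estimating that angle is precisely estimating the singular value of $\mathcal{A}^{k}$ along $\ket{Col(V_{S})}$. Next I would establish the efficient embedding: since $A$ is a sub-block of the orthogonal matrix $U$, for subsets $I,J$ indexing rows and columns inside the $A$-block one has $U_{IJ}=A_{IJ}$, hence $\mathcal{U}^{k}_{IJ}=\det(U_{IJ})=\det(A_{IJ})=\mathcal{A}^{k}_{IJ}$, so $\mathcal{A}^{k}$ is the corresponding sub-block of the orthogonal $\mathcal{U}^{k}$. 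By the earlier Givens-circuit result, $\mathcal{G}(U)\ket{S}=\ket{Col(U_{S})}=\mathcal{U}^{k}\ket{S}$ on every weight-$k$ basis state, so $\mathcal{G}(U)$ restricted to the weight-$k$ sector implements $\mathcal{U}^{k}$ at cost $T(U)$, and $\mathcal{G}(U)^{*}$ its inverse.

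With an implementable orthogonal embedding of $\mathcal{A}^{k}$ in hand, I would run the standard SVE. Let $\Pi_{\mathrm{in}},\Pi_{\mathrm{out}}$ be the projectors onto the input and output $A$-block subspaces inside the weight-$k$ space and form the walk operator $W = (2\Pi_{\mathrm{out}}-I)\,\mathcal{U}^{k}\,(2\Pi_{\mathrm{in}}-I)\,(\mathcal{U}^{k})^{*}$ (the qubitization operator associated with the block encoding). By Jordan's lemma this operator leaves invariant the two-dimensional subspaces attached to each singular vector, acting on the one containing $\ket{Col(V_{S})}$ as a rotation by $\pm 2\theta_{S}$ with $\cos\theta_{S}=\mu_{S}$. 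Phase estimation of $W$ with the input register prepared in $\ket{Col(V_{S})}$ produces an estimate $\overline{\theta_{S}}$, and by linearity the whole superposition $\ket{\phi_{k}}=\sum_{S\in\mathcal{H}_{k}}\alpha_{S}\ket{Col(V_{S})}$ is mapped coherently to $\sum_{S}\alpha_{S}\ket{Col(V_{S})}\ket{\overline{\theta_{S}}}$, each branch carrying its own estimate.

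For complexity and accuracy, each application of $W$ uses a constant number of calls to $\mathcal{G}(U)$ and $\mathcal{G}(U)^{*}$, each of cost $T(U)$, and phase estimation to additive precision $\epsilon$ in the angle requires $O(1/\epsilon)$ such applications, for a total of $O(T(U)/\epsilon)$; a Lipschitz bound on $\cos$ then gives $|\cos(\overline{\theta_{S}})-\mu_{S}|=O(\epsilon)$. I expect the main obstacle to be the verification that the real subspace states $\ket{Col(V_{S})}$ sit correctly inside the Jordan two-dimensional invariant subspaces of $W$ with rotation angle exactly $2\theta_{S}$, $\cos\theta_{S}=\mu_{S}$ --- that is, checking that the reflections built from $\mathcal{U}^{k}$ reproduce the Jordan structure of $\mathcal{A}^{k}$ with these specific angles, and handling the $\pm$ sign ambiguity and error propagation of phase estimation. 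The conceptual core, however, lies in the first two steps (the compound SVD and the functoriality of the embedding under the compound operation, realized by $\mathcal{G}(U)$); once those are in place the singular value estimation is a black-box application of the standard construction.
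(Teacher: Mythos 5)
Your proposal is correct and follows essentially the same route as the paper: the compound SVD via Cauchy--Binet multiplicativity, the lifting of the block encoding of $A$ in $P^{T}Q$ to a block encoding of $\mathcal{A}^{k}$ in $(\mathcal{P^{T}Q})^{k}$ realized on the weight-$k$ sector by the Givens circuit, Jordan's lemma identifying $(\ket{Col(PU_{S})},\ket{Col(QV_{S})})$ as principal vector pairs with $\cos\theta_{S}=\prod_{i\in S}\sigma_{i}$, and phase estimation at cost $O(T(U)/\epsilon)$. If anything, your explicit description of the qubitization-style walk operator is more detailed than the paper's, which states the algorithm and leaves the assembly of these ingredients implicit.
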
 
\noindent Similar to the application of the SVE algorithm to recommender systems, the subspace SVE algorithm can have applications to data analysis and for collaborative filtering style recommender algorithms where recommendations are made jointly to groups of users.

The third main result of this work is an exponential depth reduction for quantum topological data analysis, the depth of the necessary quantum circuit is reduced from $O(n)$ to $O(\log n)$ using the logarithmic depth Clifford loader circuit.
Topological data analysis is based on spectral analysis of the Dirac operator for a simplicial complex, which is an exponentially large but sparse matrix. The main observation for the depth reduction result is that the 
Dirac operator for an arbitrary simplicial complex embeds into the Clifford loader unitary $\Gamma(x)$ for the uniform vector. 
\begin{theorem} 
The Dirac operator of an orientable simplicial complex $C$ is the submatrix of  $U= \frac{1}{\sqrt{n}} \sum_{i \in [n]} Z^{\otimes (i-1)}\otimes  X \otimes I^{\otimes (n-1)}$ indexed by the simplices that belong to $C$. 
\end{theorem}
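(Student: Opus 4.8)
The plan is to identify $U$ (up to its scalar normalization) with the Hodge--Dirac operator $\partial + \delta$ of the complete simplex on $n$ vertices, and then to observe that forming the principal submatrix indexed by the downward-closed family of simplices of $C$ cuts out exactly the Dirac operator of $C$. First I would fix the dictionary between the computational basis and simplices: a basis vector $\ket{S}$ with $S \subseteq [n]$ of Hamming weight $k+1$ is identified with the $k$-simplex on vertex set $S$, oriented by the natural order of $[n]$. The generators $\gamma_i := Z^{\otimes(i-1)} \otimes X \otimes I^{\otimes(n-i)}$ are the Jordan--Wigner Majorana operators; writing $X = \sigma^+ + \sigma^-$ I would split $\gamma_i = a_i + a_i^*$ into the fermionic annihilation operator $a_i = Z^{\otimes(i-1)}\otimes \sigma^- \otimes I^{\otimes(n-i)}$ and its adjoint $a_i^*$.

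Next I would compute the action of $\gamma_i$ on a basis state. The Pauli $X$ in slot $i$ toggles membership of $i$ in $S$, while the $Z$-string to its left contributes the sign $(-1)^{|\{j \in S : j < i\}|}$, giving
\al{
\gamma_i \ket{S} = (-1)^{|\{j \in S : j<i\}|}\,\ket{S \triangle \{i\}},
}
a lowering map $\ket{S}\mapsto \pm\ket{S\setminus\{i\}}$ when $i \in S$ and a raising map $\ket{S}\mapsto \pm\ket{S\cup\{i\}}$ when $i \notin S$. Summing over $i$, the lowering part $\sum_{i\in S} a_i$ reproduces the simplicial boundary term by term, because for $S=\{i_0<\cdots<i_k\}$ the string exponent $|\{j\in S: j<i_\ell\}|=\ell$ is exactly the position of $i_\ell$ in the ordered vertex list, matching the standard sign $(-1)^\ell$ in $\partial[i_0,\dots,i_k]=\sum_\ell(-1)^\ell[i_0,\dots,\widehat{i_\ell},\dots,i_k]$. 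Symmetrically the raising part $\sum_i a_i^*$ is the coboundary $\delta=\partial^*$. Hence $\sqrt{n}\,U = \sum_i \gamma_i = \partial + \delta$ on all of $(\C^2)^{\otimes n}$, so $U$ is the normalized Hodge--Dirac operator of the complete simplex, the factor $1/\sqrt n$ being precisely what makes $U$ the unitary (involutive) Clifford loader of the unit uniform vector.

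Finally I would pass from the complete simplex to $C$. Since a simplicial complex is downward closed, every face $S\setminus\{i\}$ of a simplex $S \in C$ lies in $C$, so $\partial$ preserves $\mathcal{H}_C := \mathrm{span}\{\ket{S} : S \in C\}$; the only terms of $\sum_i a_i^*$ that leave $\mathcal{H}_C$ are those creating a coface $S \cup \{i\} \notin C$, and these are exactly the entries $\bra{T}U\ket{S}$ with $S \in C$, $T \notin C$ that are discarded when forming the submatrix indexed by $C$. The principal submatrix of $U$ on rows and columns indexed by $C$ is thus the compression $P_C U P_C$, which equals $\tfrac{1}{\sqrt n}(\partial+\delta)$ restricted to $\mathcal{H}_C$; since compression commutes with taking adjoints, the restricted raising part is automatically the adjoint of the restricted boundary, so this is by definition the (normalized) Dirac operator of $C$.

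I expect the main obstacle to be the sign bookkeeping: verifying that the Jordan--Wigner string signs coincide with the orientation signs of the simplicial boundary simultaneously for every simplex. This is where the orientability hypothesis enters — one needs a single globally consistent choice of orientations, supplied here by the vertex order on $[n]$, so that the induced orientations on all faces agree with the one $Z$-string sign convention at once. I would isolate this as a short lemma relating the position of a vertex in the ordered list of $S$ to the parity of its Jordan--Wigner prefix, after which the identification of lowering and raising parts with $\partial$ and $\delta$ is immediate.
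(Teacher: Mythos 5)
Your proposal is correct and follows essentially the same route as the paper: first identify $\sum_i Z^{\otimes(i-1)}\otimes X\otimes I^{\otimes(n-i)}$ with the Dirac operator $\partial+\delta$ of the complete simplex by matching the Jordan--Wigner string sign $(-1)^{|\{j\in S:\,j<i\}|}$ with the orientation sign $(-1)^{\ell}$ of the boundary map, then use downward closure of $C$ to see that the principal submatrix retains exactly the boundary and coboundary entries internal to $C$. Your version is somewhat more explicit about the splitting into $a_i+a_i^*$ and about why compression commutes with taking adjoints, but the argument is the same.
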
 
\noindent The efficient logarithmic depth construction of the corresponding Clifford loader circuit therefore 
also provides a succinct block encoding for the Dirac operator improving exponentially on previous constructions. We note that recently there have been a series of works \cite{ubaru2021quantum, gyurik2020towards, hayakawa2021quantum} on quantum TDA that improve substantially on the original LGZ algorithm \cite{lloyd2016quantum}, by constructing more efficient encodings for the Dirac operator. Using Clifford loaders to implement the Dirac operator provides an exponential imrpovement on the circuit depth on all these constructions as well.

We have provided three applications of quantum subspace states and the circuits used for preparing them. It is likely that further applications will be found and that these tools will be broadly useful for obtaining polynomial 
and possibly even exponential speedups for machine learning and topological data analysis problems. 
The paper is organized as follows. In section 2, we introduce subspace states and Fermionic beam splitter gates ($FBS$) and demonstrate that these gates applied to subspace states implement Givens rotations irrespective of the subspace dimension. 
This yields subspace state preparation algorithms using Givens circuits in section 3. In section 4, we introduce Clifford loaders and an alternative subspace state preparation algorithm. In section 5, we present the results on quantum determinant sampling. Section 6 describes the 
quantum SVD algorithm for compound matrices using Givens circuits. Finally in section 7, we describe the exponential depth reduction for topological data analysis using Clifford loaders.

\section{Subspace states} 
\noindent This section introduces subspace states as higher dimension generalizations of the unary encodings of vectors. Subspace states encode subspaces of $\R^{n}$ of arbitrary dimension and are invariant under unitary operations applied to the subspaces. We start by providing a high level overview of subspace states.  

Let $H$ be the $2^{n}$-dimensional Hilbert space for an $n$-qubit quantum system. The standard basis for $H$ is indexed by bit strings in the $n$-dimensional hypercube $\mathcal{H} = \{ 0, 1\}^{n}$. A quantum state $\ket{\phi} \in H$ can be written as $\ket{\phi} = \sum_{d \in [n]} \ket{\phi_{d} }$ where $\ket{\phi_{d}}$ is a superposition over bit strings of Hamming weight $d$. We denote the Hilbert space spanned by Hamming weight $d$ strings as $H_{d}$, and note that $H_{d}$ has dimension $\binom{n}{d}$. 

It is well known that there is a correspondence between unit vectors in $\R^{n}$ and quantum states in $H_{1}$ given by the unary amplitude encodings of the vectors, the $i$-th coordinate of the vector becomes the amplitude of the unary basis state $\ket{e_i}$ in the unary encoding. Further, one can map the state $\ket{e_{1}}$ into the unary encoding of an arbitrary vector $x$ by applying a suitable sequence of two-qubit gates. 
Such a construction follows from the fact that for all vectors $x$, there exists a unitary matrix $U$ such that $Ue_{1} = x$ and all unitaries can be decomposed as a product of elementary Givens rotation matrices
that act non trivially on only two rows and columns. Each Givens rotation can then be applied through a two-qubit gate called a Reconfigurable Beam Splitter ($RBS$) gate.

In this section, we extend this correspondence to an embedding of $d$-dimensional subspaces of $\R^{n}$ into the Hilbert space $H_{d}$. For this, we introduce subspace states that are quantum states in $H_{d}$ corresponding to $d$-dimensional subspaces of $\R^{n}$. Further, we show that for any subspace state $\ket{\phi}$ there exists a sequence of unitary operations that each modifies only two qubits and that map the state $\ket{1^{d} 0^{n-d}}$ to $\ket{\phi}$. 
Similar to the case of unary amplitude encodings, this follows from the fact that for an arbitrary $d$-dimensional subspace with orthonormal basis $(x_{1}, x_{2}, \cdots, x_{d})$, there exists a unitary matrix $U$ such that $Ue_{i} = x_{i}$ for $i \in [d]$ and that the unitary $U$ can be decomposed as a product of elementary Givens rotation matrices. We define the Fermionic Beam Splitter ($FBS$) gate, a generalization of the $RBS$ gate, that implements the effect of a Givens rotation on a subspace state. Interestingly, an FBS gate implements the effect of a Givens rotation independent of the dimension of the subspace state. Replacing the Givens rotation matrices by the corresponding $FBS$ gates yields a quantum circuit for subspace state preparation. 

The next subsection recalls basic results on the correspondence between $H_{1}$ and vectors in $\R^{n}$ and the action of the $RBS$ gates, and subsequently these results are generalized to higher dimensional subspace states and the action of the $FBS$ gates.

\subsection{Unary amplitude encodings} 
Bit strings with Hamming weight $1$ are commonly used to encode $n$-dimensional vectors into quantum states in quantum machine learning \cite{johri2021nearest}. 
We recall the definition of the unary amplitude encoding of a unit vector. 
\begin{defn} \label{def1} 
Let $x \in \R^{n}$ be a unit vector, the unary amplitude encoding $\ket{ un(x) } $ is defined as, 
\al{ 
\ket{ un(x) } = \sum_{i \in [n]} x_{i}  \ket{ e_i }, 
} 
where $e_i$ is the string $0^{i-1}10^{n-i}$.
\end{defn} 
\noindent Unary amplitude encodings are useful for quantum machine learning as they can be prepared using parametrized circuits of depth $\log n$, in contrast to binary encodings that either require depth $O(n)$ or specialized hardware like quantum random access memory (QRAM) for state preparation. The state preparation circuits for preparing unary amplitude encodings are known as unary data loaders  \cite{johri2021nearest}. 

We next define the action of the orthogonal group on the unary amplitude encodings. Note in this work we restrict ourselves to quantum states with real amplitudes and therefore consider the orthogonal group instead of the unitary group. Let us start by defining the elementary Givens rotation matrices that act non trivially on two dimensional subspaces. 
\begin{defn} 
The Givens rotation matrix $G(i, j, \theta)$ in $\R^{n\times n}$ for $i>j$ has non zero entries given by $G_{kk} = 1 $ if $k\neq i, j$, $G_{ii} = G_{jj} = \cos(\theta)$ and $G_{ij} = -G_{ji} = \sin(\theta)$.  
\end{defn} 
\noindent An orthogonal matrix can be decomposed into a sequence of elementary Givens rotation matrices. The action of a single Givens rotation matrix on a unary amplitude encoding is given exactly by a two-qubit gate called the Reconfigurable Beam Splitter ($RBS$) gate. 

An $RBS$ gate is a two-qubit gate parametrized by a single angle $\theta$. Letting the basis states in the standard basis be ordered as $\ket{00}, \ket{01}, \ket{10}, \ket{11}$, the $RBS$ gate has the following representation in the 
standard basis, 
\begin{equation} 
RBS(\theta) =  \begin{pmatrix} 
&1  &0 & 0 & 0 \\  
&0  & \cos(\theta) &\sin(\theta)  &0 \\
&0 & -\sin(\theta)   & \cos(\theta)   & 0 \\
&0 & 0 & 0     & 1 
\end{pmatrix}. 
\label{btheta} 
\end{equation}  
Note that the conjugate of the $RBS(\theta)$ gate is obtained by reversing the angle, that is $RBS(\theta)^{*}= RBS(-\theta)$. The notation $RBS_{ij} (\theta)$ denotes the $RBS(\theta)$ gate applied to qubits $i$ and $j$. 
The $RBS_{ij} (\theta)$ gate implements the action of a Givens rotation matrix on a 
unary amplitude encoding of a unit vector $x \in \R^{n}$, that is
\all{ 
RBS_{ij} (\theta)\ket{  un(x) } = \ket{ un(G(i, j, \theta) x) }. 
} {eq:givens} 
The logarithmic depth construction of the unary data loader (Figure \ref{fig3}) uses $RBS_{ij} (\theta)$ gates and can be viewed as an application of the above equation showing that a single $RBS_{ij} (\theta)$ gate implements a Givens rotation matrix. 

\begin{figure}[H] 
	\centering
	\includegraphics[scale = 0.6]{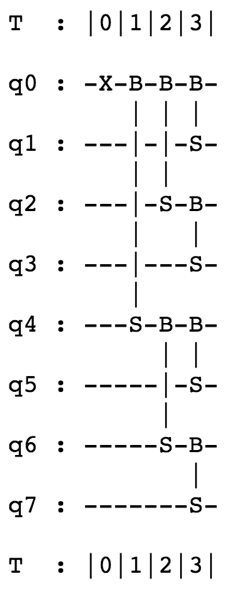}
	\caption{Quantum circuit corresponding to the unary data loader implemented using RBS gates on 8 qubits. } \label{fig3} 
\end{figure}

The $RBS$ gate is an example of a matchgate which is a two qubit gate that decomposes into rotations on the subspaces spanned by $\ket{01}, \ket{10}$ and $\ket{00}, \ket{11}$. Quantum circuits consisting only of matchgates can be simulated classically with a polynomial overhead \cite{terhal2002classical} if the matchgates act on adjacent qubits. Matchgates with longer range interactions are in fact universal for quantum computing \cite{brod2012geometries}. 
For quantum linear algebra algorithms using $O(n)$ qubits to operate with $n$ dimensional matrices, the efficient simulability of the circuits is not a bottleneck for obtaining polynomial speedups. The algorithms achieving potential exponential speedups apply controlled versions of circuits composed of $FBS$ gates and therefore use gates besides the ones that can be efficiently simulated.

\subsection{Subspace states} 
This section defines quantum subspace states that generalize the  unary amplitude encodings of a vector $x \in \R^{n}$ to quantum states encoding $d$-dimensional subspaces of $\R^{n}$ 
for $d \in [n]$. 
A vector $x \in \R^{n}$ defines a one-dimensional subspace of $\R^{n}$. More generally, a $d$-dimensional subspace is specified by an orthonormal basis $(x^{1}, x^{2}, \cdots, x^{d})$ where each $x^{i} \in \R^{n}$ is a unit vector 
and $\braket{ x^{i} } { x^{j}} =\delta_{ij}$. 

\begin{defn} \label{def2} 
Let $\mathcal{X}$ be a $d$-dimensional subspace of $\R^{n}$ for $d \in [n]$ and let $X \in \R^{n \times d}$ be a matrix with orthonormal columns ($X^{T} X = I_{d}$) such that $Col(X)= \mathcal{X}$. The subspace state $\ket{\mathcal{X} }$ is defined as, 
\al{ 
\ket{\mathcal{X}}  = \ket{ Col(X)  } = \sum_{|S|=d} det(X_{S}) \ket{ S }.
} 
where $X_{S}$ is the restriction of $X$ to the rows in $S$ and the sum is over size $d$ subsets of $[n]$. 
\end{defn} 
Some remarks on this definition follow. First, there is a sign ambiguity in the representation of vectors and subspaces as quantum states. In the case of vectors $\ket{x}$ and $\ket{-x}$ correspond to the same quantum state. 
In the case of subspaces, if any two columns of the representing matrix are swapped all the determinants $X_{S}$ change in sign but the subspace state remains invariant. The subspace state remains invariant under sign flips and  permutations applied to the columns of the representing matrix $X$ as these operations preserve the underlying subspace. 

In fact, definition \ref{def2} implies that a subspace state is independent of the choice of the orthonormal basis for $\mathcal{X}$ as the determinants $det(X_{S})$ are invariant under column operations and two different orthonormal bases for $\mathcal{X}$ can be mapped into one another with a suitable sequence of column operations. The subspace state depends only on the subspace $\mathcal{X}$ and not on the particular orthonormal basis used to represent it. Finally a remark on the notation. As observed above, the subspace state can be defined intrinsically in terms of the subspace $\ket{\mathcal{X}}$ and also in terms of the representing matrix $\ket{Col(X)}$. Both these notations are used and can be interpreted from the context.

The Cauchy Binet identity stated below shows that that the subspace state is a correctly normalized quantum state if and only if $(x^{1}, x^{2}, \cdots, x^{d})$ are an orthonormal basis for $\mathcal{X}$, the orthonormality ensures that 
$\sum_{|S|=k} det(X_{S})^{2} = det(X^{T} X)=1$. 

\begin{claim} \label{cbi} 
(Cauchy-Binet Identity) Let $X\in \R^{n \times d}$ and $Y \in R^{n \times d}$ then, 
$$det(X^{T}Y) = \sum_{|S|=d} det(X_{S}) det(Y_{S}) $$. 
\end{claim}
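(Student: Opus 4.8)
The plan is to prove the identity by a direct expansion using the Leibniz (permutation) formula for the determinant, which is the most elementary route and keeps all the combinatorics explicit. Write $M = X^{T}Y$, a $d \times d$ matrix with entries $M_{ij} = \sum_{k \in [n]} X_{ki} Y_{kj}$. First I would apply the Leibniz formula $\det(M) = \sum_{\sigma} \mathrm{sgn}(\sigma) \prod_{j \in [d]} M_{j \sigma(j)}$, where $\sigma$ ranges over permutations of $[d]$, and substitute the entry expansion, so that each factor $M_{j \sigma(j)}$ becomes an inner sum over a row index $k_j \in [n]$. Expanding the product of these $d$ sums turns the whole expression into a single sum indexed by functions $f : [d] \to [n]$, with $f(j) = k_j$.

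Next I would interchange the order of summation, pulling the sum over $f$ outside and grouping the $X$ factors $\prod_{j} X_{f(j),j}$ separately from the $Y$ factors. The inner object that remains, $\sum_{\sigma} \mathrm{sgn}(\sigma) \prod_{j} Y_{f(j), \sigma(j)}$, is exactly the determinant of the $d \times d$ matrix whose $j$-th row is the $f(j)$-th row of $Y$. The crucial simplification is that whenever $f$ is not injective this matrix has two equal rows, so its determinant vanishes; hence only injective $f$ contribute to the sum.

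Finally I would reorganize the sum over injective $f$ by recording its image $S = f([d])$, a size-$d$ subset of $[n]$, together with the permutation $\tau$ of $[d]$ describing the order in which $f$ lists the elements $s_1 < \cdots < s_d$ of $S$. Factoring $\mathrm{sgn}(\tau)$ out of both the $X$-product and the $Y$-determinant, the residual sum over $\tau$ reassembles precisely into the Leibniz formulas for $\det(X_{S})$ and $\det(Y_{S})$, leaving $\sum_{|S|=d} \det(X_{S}) \det(Y_{S})$ as required.

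The step I expect to require the most care is the last one, namely the sign bookkeeping. One must check that converting from the function $f$ to the ordered subset $S$ introduces the \emph{same} factor $\mathrm{sgn}(\tau)$ in the $X$-term and in the $Y$-term, so that the two copies combine via $\mathrm{sgn}(\tau)^2 = 1$ rather than cancel, and that the leftover sum over $\tau$ is genuinely the determinant of the row-selected block with rows taken in the canonical increasing order. A cleaner but less self-contained alternative would be to invoke the $d$-th exterior power: $\det(X^{T}Y)$ equals the pairing of $x^{1} \wedge \cdots \wedge x^{d}$ with $y^{1} \wedge \cdots \wedge y^{d}$, and expanding each wedge in the basis $\{ e_S \}$ of $\Lambda^{d}\R^{n}$ gives the claim immediately, with all signs absorbed into the exterior algebra.
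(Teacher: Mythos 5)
Your proof is correct. Note, however, that the paper does not prove this claim at all: it states the Cauchy--Binet identity as a known classical fact and immediately moves on to its consequences (normalization of subspace states and the inner-product formula $\braket{\mathcal{X}}{\mathcal{Y}} = det(X^{T}Y)$), so there is no in-paper argument to compare against. Your Leibniz-expansion derivation is the standard one and all the key steps are present: expanding $M_{j\sigma(j)} = \sum_{k} X_{kj} Y_{k\sigma(j)}$, interchanging sums to isolate $\sum_{\sigma} \mathrm{sgn}(\sigma)\prod_{j} Y_{f(j),\sigma(j)}$ as the determinant of a row-selected matrix, killing the non-injective $f$ by the repeated-row argument, and regrouping injective $f$ by image $S$ and reordering permutation $\tau$. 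The sign bookkeeping you flag does work out: the single factor $\mathrm{sgn}(\tau)$ extracted from the $Y$-determinant, multiplied against $\prod_{j} X_{s_{\tau(j)},j}$ and summed over $\tau$, is exactly the Leibniz formula for $\det(X_{S})$ (using $\det(A)=\det(A^{T})$ to read the product over rows rather than columns), so nothing cancels. The exterior-algebra alternative you mention is equally valid and is arguably closer in spirit to how the paper uses the identity, since the subspace state $\ket{Col(X)}$ is precisely the coordinate expansion of $x^{1}\wedge\cdots\wedge x^{d}$ in the basis indexed by size-$d$ subsets.
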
 
\noindent The Cauchy Binet identity also can be used to compute the inner product between subspace states $\braket{\mathcal{X}} { \mathcal{Y} } = det(X^{T} Y)$. The inner product can be interpreted as the product of the cosines of the principal angles between the subspaces.

Subspace states of dimension $d$ are determined by exactly $O(nd)$ parameters and thus occupy a small fraction of the $\binom{n}{d}$-dimensional subspace $\mathcal{H}_{d}$ spanned by the bit strings 
of Hamming weight $d$. Each standard basis state $\ket{S}$ with $Ham(S)=d$ is a subspace state corresponding to the matrix $X$ with $X_{s(i) i} =1$ where $s(i), i \in [d]$ are the non zero elements of $S$. 
Unlike vector states, the linear combination of subspace states is not a a subspace state, the amplitudes of subspace states are in fact characterized by quadratic constraints. 
 These constraints are called the Van Der Waerden syzigies, and they provide a complete set of quadratic constraints specifying the algebraic variety of the subspace states \cite{sturmfels2008algorithms}.

Subspace states are invariant under column operations on $X$, however row operations change the column space and also the subspace state. We now want to show that an arbitrary $d$-dimensional subspace state can be generated starting from the state
$\ket{1^{d} 0^{n-d}}$ and applying a suitable sequence of quantum gates. The building block for the construction is the implementation of the Givens rotation $G(i, j, \theta)$ on subspace states using an $FBS$ gate defined in the next section. This can be viewed as a generalization of equation \eqref{eq:givens} which showed that $RBS$ gates implement the Givens rotation on unary encodings of vectors.

\subsection{Fermionic Beam Splitter gates} 

In this section we show that the orthogonal group has a natural action on the subspace states for all $d \in [n]$. We introduce a family of gates that we call the Fermionic Beam Splitter ($FBS$) gates 
and show that they implement the action of Givens rotation matrices on subspace states generalizing equation \eqref{eq:givens}.

The definition of a Fermionic gate assumes an ordering on the qubits, in particular the action of the $FBS_{ij}(\theta)$ gate acting on qubits $i$ and $j$ depends on the parity of the qubits between $i$ and $j$. 

\begin{defn} \label{def:fbs} 
Let $i, j \in [n]$ be qubits and let $f(i, j, S)= \sum_{i< k <j } s_{k}$ for $S \in \{ 0, 1 \}^{n}$ where $n$ is the total number of qubits. Then the FBS gate acts on qubits $i$ and $j$ as the unitary matrix below, 
\begin{equation} 
FBS_{ij} (\theta) \ket{S} =  \begin{pmatrix} 
&1  &0 & 0 & 0 \\  
&0  & \cos(\theta) &(-1)^{f(i, j, S)}\sin(\theta)  &0 \\
&0 & (-1)^{f(i, j, S)+1}  \sin(\theta)   & \cos(\theta)   & 0 \\
&0 & 0 & 0  & 1 
\end{pmatrix} 
\label{fbs} 
\end{equation} 
The gate acts as identity on the remaining qubits. 
\end{defn} 
\noindent It follows from the definition that $FBS_{ij} (\theta)$ acts as $RBS_{ij} (\theta)$  if the parity $\oplus_{i<k<j} x_{k}$ of the qubits 'between' $i$ and $j$ is even and is the conjugate gate $RBS_{ij}(-\theta)$ otherwise. 
Also note that the $FBS$ gate is identical to the $RBS$ gate if $i$ and $j$ are consecutive qubits, meaning there are no qubits between $i$ and $j$. It is also identical to the $RBS$ gate on $\mathcal{H}_{1}$
as $f(i,j,S)=0$ if the only $1$ in the bit string is present at index $i$ or $j$. This is also why for the unary data loader we can simply use the $RBS$ gates.

In the general setting, the $FBS_{ij} (\theta)$ is a non local gate that can can be implemented using an $RBS$ gate together with $O(|i-j|)$ additional two qubit parity gates with a circuit of depth $O(\log |i-j|)$. The next claim provides such an efficient implementation for the $FBS_{ij} ( \theta)$ gate. 

\begin{proposition} \label{fbsimp}
The gate $FBS_{ij} (\theta)$ can be implemented using an RBS gate together with a circuit of depth $O(\log |i-j|)$ with $O( |i-j|)$ additional two qubit controlled-$X$ and controlled-$Z$ gates. 
\end{proposition}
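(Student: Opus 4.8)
The plan is to exhibit $FBS_{ij}(\theta)$ as a conjugation of $RBS_{ij}(\theta)$ by a diagonal operator carrying the Jordan--Wigner parity string, and then to implement that diagonal operator in logarithmic depth. First I would establish the algebraic fact that conjugating an $RBS$ gate by a Pauli $Z$ on one of its two qubits reverses the angle, $Z_i\, RBS_{ij}(\theta)\, Z_i = RBS_{ij}(-\theta)$. This is a $2\times 2$ computation on the $\{\ket{01},\ket{10}\}$ block, noting that $Z_i$ acts trivially (up to the overall sign pattern) on the $\{\ket{00},\ket{11}\}$ block where $RBS$ is the identity. Since $RBS(-\theta)$ is exactly $RBS(\theta)$ with its $\sin\theta$ off-diagonal entries negated, this reproduces precisely the sign flip that the factor $(-1)^{f(i,j,S)}$ in \eqref{fbs} introduces whenever the intervening parity is odd.

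Next I would define the diagonal operator $D$ by $D\ket{S} = (-1)^{s_i\, f(i,j,S)}\ket{S}$, i.e. apply $Z_i$ exactly when the parity $f(i,j,S)=\bigoplus_{i<k<j}s_k$ is odd, and verify directly that $D\, RBS_{ij}(\theta)\, D = FBS_{ij}(\theta)$. On the $\{\ket{01},\ket{10}\}$ block $D$ acts as $\mathrm{diag}(1,(-1)^{f})$, and conjugating the rotation block by this yields the $(-1)^{f}$ and $(-1)^{f+1}$ signs of \eqref{fbs}, while leaving the diagonal and the $\{\ket{00},\ket{11}\}$ block unchanged; here one uses $D^2=I$ and the fact that both $RBS_{ij}$ and $D$ preserve the parity $f$, since $RBS_{ij}$ acts only on qubits $i,j$ and $D$ is diagonal.

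It then remains to implement $D$ cheaply. Since $(-1)^{s_i \sum_{i<k<j} s_k} = \prod_{i<k<j}(-1)^{s_i s_k}$, we have $D = \prod_{i<k<j} CZ_{k,i}$, a product of $O(|i-j|)$ mutually commuting controlled-$Z$ gates; applied naively these share the target qubit $i$ and so have depth $O(|i-j|)$. To obtain depth $O(\log|i-j|)$ I would instead accumulate the parity $\bigoplus_{i<k<j}s_k$ onto a single intervening qubit (say $i+1$) with a balanced binary tree of controlled-$X$ gates of depth $O(\log|i-j|)$ and $O(|i-j|)$ gates, realize $D$ with the single gate $CZ_{i+1,i}$, and reverse the tree to restore the intervening qubits. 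The full circuit is then (parity tree)$\cdot CZ_{i+1,i}\cdot RBS_{ij}(\theta)\cdot CZ_{i+1,i}\cdot$(inverse parity tree), using $O(|i-j|)$ controlled-$X$ gates, two controlled-$Z$ gates, and one $RBS$ gate, in depth $O(\log|i-j|)$.

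The main obstacle is the correctness bookkeeping of this depth-reduced version rather than any deep idea. The critical point to check is that the accumulated parity held on qubit $i+1$ is untouched between the two $CZ_{i+1,i}$ gates: this holds because $RBS_{ij}(\theta)$ acts only on qubits $i,j$ and the controlled-$Z$ gates are diagonal, so the conjugation $D\cdot RBS_{ij}\cdot D$ is faithfully realized even though $RBS_{ij}$ may change $s_i$, and reversing the tree cleanly restores all intervening qubits. For $|i-j|=1$ there are no intervening qubits, $f\equiv 0$, and $FBS_{ij}=RBS_{ij}$, so the claim is trivial; the construction applies whenever $|i-j|\ge 2$.
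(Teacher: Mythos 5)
Your proposal is correct and follows essentially the same route as the paper: both rest on the identity $Z_i\,RBS_{ij}(\theta)\,Z_i = RBS_{ij}(-\theta)$ and both realize the parity-dependent sign by accumulating $\bigoplus_{i<k<j}s_k$ onto qubit $i+1$ with a logarithmic-depth CX tree, sandwiching $CZ_{i+1,i}\,RBS_{ij}(\theta)\,CZ_{i+1,i}$ between the tree and its inverse. Your explicit diagonal operator $D$ and the commutation check with $RBS_{ij}$ are a slightly more formal packaging of the paper's two-step argument (the $|i-j|=2$ base case followed by the parity reduction), but the resulting circuit and gate/depth counts are identical.
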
 
\begin{proof} 
Consider first the case $|i-j|=2$, without loss of generality assume that the $FBS$ gate depends on three qubits in ascending order, these three qubits can be labeled $1,2$ and $3$ for convenience. The $FBS_{13} (\theta)$ gate acts as follows: if qubit 2 is 0 the gate $RBS_{13}(\theta)$ is applied to qubits 1 and 3 and if qubit 2 is 1 it the gate $RBS_{13}(-\theta)$ is applied instead. This implies the following decomposition
as a sequence of three two-qubit gates, 
\all{ 
FBS_{13} ( \theta)(\theta) = CZ_{21} RBS_{13}(\theta) CZ_{21}
} {eq:3} 
where the $CZ_{21}$ is the controlled-$Z$ gate with qubit $2$ acting as the control gate. If qubit $2$ is $0$ then the $CZ_{21}$ gates act as identity and $CZ_{21} RBS_{13}(\theta) CZ_{21} = RBS_{13}(\theta)$ matching the action 
of $FBS_{13} (\theta)$. If qubit 2 is $1$ then the identity $CZ_{21} RBS_{13}(\theta) CZ_{21} = (Z \otimes I)  RBS_{13}(\theta)  (Z \otimes I) = RBS_{13}(-\theta)$ holds, establishing the correctness of equation \eqref{eq:3}. 

The implementation of $FBS_{ij} ( \theta)$ can be reduced to the case $|i-j|=2$ by computing first the parity of the qubits between $i$ and $j$ in qubit $i+1$ and then using the three gate sequence in equation \eqref{eq:3} and undoing the 
parity computations. 
For $|i-j|=k+1$, let $P(i+1, j)$ be the circuit with $(k-1)$ CX gates and depth $\lceil \log k \rceil $ that computes the parity of the $k$ qubits between $i$ and $j$ on qubit $(i+1)$, then it follows that $FBS_{ij} (\theta)(\theta) = P(i+1, j) CZ_{i+1,i} RBS_{ij}(\theta) CZ_{i+1, i} P^{\dagger}(i+1, j)$. 

\end{proof} 

\subsection{Quantum Givens rotations} 
The next result shows that an $FBS$ gate acting on subspace states implements the action of the Givens rotation matrix. Interestingly, the result does not depend on the 
dimension of the subspace state, so the application of a single $FBS$ gate can simulate the effect of a row operation on an $n \times d$ matrix, an operation with complexity $O(d)$ in the classical case. 

A state $\ket{\phi} = \sum_{k} \ket{ \phi_{k}}$ in the $n$ qubit Hilbert space $H$ can be regarded as a linear combination of Hamming weight $k$ dimensional components  $\ket{\phi_{k}} \in H_{k}$. The $FBS$ gate is Hamming weight preserving, generalizing equation \eqref{eq:givens}, we show that 
the application the $FBS$ gate implements the effect of a  Givens rotation on all the components $\ket{ \phi_{k}}$. 
Note that in general $\ket{ \phi_{k}}$ is a linear combination of subspace states, the effect of the $FBS$ gate is to implement the effect of a Givens rotation on each component of the linear combination.

\begin{theorem} \label{givens} 
An application of the gate $FBS_{ij} ( \theta)$ applies a Givens rotation to the subspace state, that is, 
\begin{equation} 
FBS_{ij} ( \theta) \ket{Col(X)}  = \ket{ Col(G(i, j, \theta) X) }. 
\end{equation} 
where $X \in\R^{n \times d} $ is the matrix representing an arbitrary subspace state of dimension $d \in [n]$. 
\end{theorem}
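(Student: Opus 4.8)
The plan is to verify the claimed identity amplitude-by-amplitude in the standard basis. Since $FBS_{ij}(\theta)$ is Hamming-weight preserving and $\ket{Col(X)}$ lives in $H_d$, it suffices to show that for every size-$d$ subset $S \subseteq [n]$ the coefficient of $\ket{S}$ agrees on both sides, i.e. that the amplitude produced by $FBS_{ij}(\theta)$ on $\ket{S}$ equals $\det\bigl((G(i,j,\theta)X)_S\bigr)$, the coefficient of $\ket{S}$ in $\ket{Col(G(i,j,\theta)X)}$. The guiding observation is that left multiplication by $G=G(i,j,\theta)$ alters only rows $i$ and $j$ of $X$, so I would partition the subsets $S$ according to $S\cap\{i,j\}$ into the four cases $\{i,j\}\subseteq S$, $\{i,j\}\cap S=\emptyset$, $i\in S$ with $j\notin S$, and $j\in S$ with $i\notin S$, which correspond exactly to the four basis states $\ket{11},\ket{00},\ket{10},\ket{01}$ of the $FBS$ matrix.

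For the two balanced cases the amplitudes are unchanged. When neither $i$ nor $j$ lies in $S$, the submatrix $X_S$ is untouched by $G$, so $\det((GX)_S)=\det(X_S)$; when both lie in $S$, the two rows of $X_S$ at positions $i$ and $j$ are rotated by the $2\times 2$ block of $G$, whose determinant is $\cos^2\theta+\sin^2\theta=1$, so again $\det((GX)_S)=\det(X_S)$. These reproduce the two diagonal $1$'s of the $FBS$ matrix on $\ket{00}$ and $\ket{11}$.

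The content lies in the unbalanced case, say $i\in S$ and $j\notin S$; set $S'=(S\setminus\{i\})\cup\{j\}$. Here row $i$ of $GX$ is a linear combination of the rows $X_i$ and $X_j$ of $X$, while all other rows indexed by $S$ are unchanged, so expanding $\det((GX)_S)$ by multilinearity in that single row produces a term $\cos\theta\,\det(X_S)$ together with a cross term $\pm\sin\theta\,\det(M)$, where $M$ is $X_S$ with the row $X_i$ replaced by $X_j$. To identify $\det(M)$ with $\det(X_{S'})$, the amplitude of the swapped set, I would re-sort the rows: the substituted row $X_j$ occupies the slot that $i$ held in the sorted order of $S$, but belongs in the slot of $j$ in the sorted order of $S'$, and the number of adjacent transpositions required is exactly the number of elements of $S$ strictly between $i$ and $j$, namely $f(i,j,S)=\sum_{i<k<j}s_k$. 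This yields the sign $(-1)^{f(i,j,S)}$, so $\det((GX)_S)=\cos\theta\,\det(X_S)\pm(-1)^{f(i,j,S)}\sin\theta\,\det(X_{S'})$, matching the off-diagonal entry of $FBS_{ij}(\theta)$ that couples $\ket{S}$ and $\ket{S'}$. The symmetric subcase $j\in S$, $i\notin S$ is handled identically and supplies the conjugate sign $(-1)^{f+1}$; note that $f(i,j,S)=f(i,j,S')$ since $S$ and $S'$ agree strictly between $i$ and $j$, so the parity is well defined for the coupled pair.

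I expect the main obstacle to be the sign bookkeeping in the unbalanced case: pinning the permutation parity down as precisely $f(i,j,S)$ and then reconciling it with the sign convention fixed in the definition of $G(i,j,\theta)$, so that the $(-1)^f$ and $(-1)^{f+1}$ factors land on the correct off-diagonal entries of the $FBS$ matrix rather than the opposite ones. This fermionic sign, which is invisible for $d=1$ where at most one of $i,j$ is occupied and $f=0$ so that we recover \eqnref{eq:givens}, is exactly what compels the beam-splitter to carry the parity of the interleaved qubits.
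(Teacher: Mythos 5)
Your proposal is correct and follows essentially the same route as the paper's proof: both compare amplitudes coefficient-by-coefficient over the four cases determined by $S\cap\{i,j\}$, expand the determinant by multilinearity in the single altered row, and identify the fermionic sign $(-1)^{f(i,j,S)}$ with the parity of the row swaps needed to re-sort the substituted row past the elements of $S$ strictly between $i$ and $j$. Your added observation that $f(i,j,S)=f(i,j,S')$ for the coupled pair is a useful consistency check that the paper leaves implicit.
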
 
\begin{proof} 
The proof proceeds by computing explicitly the amplitudes for the states $FBS_{ij} (\theta)\ket{Col(X)}$ and  $\ket{ Col(G(i, j, \theta) X) }$ and demonstrating equality. 
The $FBS_{ij} ( \theta) $ is a Hamming weight preserving gate and $\ket{ Col(X) }$ is a superposition over bit strings of Hamming weight $d$, so it follows that $FBS_{ij} (\theta)\ket{Col(X) } = \sum_{ |S|=d} g(S) \ket{S}$ for some amplitudes $g(S)$. The 
amplitudes $g(S)$ for all $|S|=d$ can be computed from definition \ref{def:fbs} as,  
\al{ 
g(S)= \begin{cases} 
det(X_{S}) \;\; \; \; &\text{if $i, j \not \in S$ or if $i, j \in S$} \nl 
\cos(\theta) det(X_{S})  - (-1)^{f(S)} \sin(\theta) det(X_{S\setminus j \cup i}) \;\; \; \; &\text{if $i \not \in S$ and $ j \in S$} \nl 
\cos(\theta) det(X_{S})  + (-1)^{f(S)} \sin(\theta) det(X_{S\setminus i \cup j})  \;\; \; \; &\text{if $j \not \in S$ and $i \in S$} 
\end{cases} 
} 
In the above equations, $det(X_{S\setminus i \cup j} )$ is the determinant of the 
matrix where the rows indexed by $S\setminus i \cup j$ are sorted in ascending order.

Computing the amplitudes for $\ket{ Col(G(i, j, \theta) X) }$, which is the subspace state corresponding to the matrix with rows $(x_{1}, \cdots, x_{i-1}, \cos(\theta) x_{i} + \sin(\theta) x_{j}, x_{i+1} \cdots, x_{j-1}, 
\cos(\theta) x_{j} - \sin(\theta) x_{i}, x_{j+1}, \cdots, x_{n})$. The amplitudes for this state are given by,  
\al{ 
h(S)= \begin{cases} 
det(X_{S}) \;\; \; \; &\text{if $i, j \not \in S$ or if $i, j \in S$} \nl 
\cos(\theta) det(X_{S})  -  \sin(\theta) det(X_{S: x_{j} \to x_{i} })  \;\; \; \; &\text{if $i \not \in S$ and $ j \in S$} \nl 
\cos(\theta) det(X_{S})  +  \sin(\theta) det(X_{S: x_{i} \to x_{j} })   \;\; \; \; &\text{if $j \not \in S$ and $i \in S$} 
\end{cases} 
} 
Note that $det(X_{S: x_{i} \to x_{j} })$ is the determinant of the matrix where $x_{i}$ has been replaced by $x_{j}$.  The number of row swaps needed to transform  $X_{S: x_{i} \to x_{j} }$ to $X_{S\setminus i \cup j}$ 
(and also $X_{S: x_{j} \to x_{i} }$ to $X_{S\setminus j \cup i}$) is equal to the number of elements of $S$ that lie between $i$ and $j$, thus, 
\al{ 
 det(X_{S: x_{i} \to x_{j} }) = (-1)^{f(S)} det(X_{S\setminus i \cup j}) \nl
  det(X_{S: x_{j} \to x_{i} }) = (-1)^{f(S)} det(X_{S\setminus j \cup i}). 
} 
It follows that the amplitudes $g(S)$ and $h(S)$ are equal for all $S$ such that $|S|=d$ and thus $FBS_{ij} ( \theta) \ket{Col(X)}  = \ket{ Col(G(i, j, \theta) X) }$ for all $d\in [n]$.

\end{proof}

\section{Subspace state preparation with Givens circuits} \label{ssp} 

We are now ready to describe the first approach for preparing the subspace states. Note that a $d$-dimensional subspace state has up to $\binom{n}{d}$ non zero amplitudes that are specified as the determinants of sub-matrices of the representing matrix. Subspace state preparation is a non trivial problem as it creates an exponentially large superposition, where each amplitude requires time $O(d^{\omega})$ to compute classically. The input for the subspace state preparation problem is a matrix $X \in \R^{n \times d}$ with orthonormal columns and the desired output is the state $\ket{Col(X)}$. In the following sections we present efficient quantum circuits for subspace state preparation.

Theorem \ref{givens} suggests an algorithm for the subspace state preparation problem using Givens decompositions of unitary matrices. Let $I_{n,d}$ be the matrix consisting of the first $d$ columns of the $n \times n$ identity matrix, the corresponding subspace state is $\ket{1^{d} 0^{n-d}}$. The first step is to find a decomposition of the matrix $X$ as a product of Givens rotation matrices, that is a decomposition of the form $X = \prod_{i, j \in [n]} G(i, j, \theta_{ij}) I_{n,d}$. The subspace state $\ket{\mathcal{X}}$ can then be constructed by applying sequentially the $FBS_{ij} (\theta)$ gates corresponding to the Givens rotation matrices starting from the state $\ket{1^{d} 0^{n-d}}$.

In this section we show that this approach indeed suffices for the construction of subspace states.  
We define a class of quantum circuits consisting of $FBS$ gates and prove a general result by describing the action of such circuits on the standard basis for the $n$ qubit Hilbert space.

\begin{defn} \label{def:U} 
A Givens circuit $\mathcal{G}= \prod_{(i, j, \theta) \in \mathcal{T} } FBS_{ij} (\theta)$ is the product of $FBS_{ij} (\theta)$ gates for some set of indices $(i, j, \theta) \in \mathcal{T}$. The unitary $U(\mathcal{G}) =  \prod_{(i, j, \theta) \in \mathcal{T} } G(i, j, \theta)$ corresponding to a Givens circuit is the product of the corresponding Givens rotation matrices. 
\end{defn} 

\noindent The next theorem describes the action of a Givens circuit on the standard basis for a Hilbert space of $n$ qubits. It shows that the Givens circuit for a unitary $U$ acting on standard basis state $\ket{S}$ prepares the subspace state corresponding to the matrix $U_{S}$, obtained by selecting the columns of $U$ indexed by $S$. The Givens circuit can therefore be used to prepare subspace states corresponding to any subset of the columns of $U$. 
\begin{theorem} \label{thm: action} 
Let $\mathcal{G}$ be a Givens circuit on $n$ qubits and let $U= U(\mathcal{G})$ be the corresponding unitary, then the action of $\mathcal{G}$ on a standard basis state $\ket{S}$ for $S\in \{0, 1\}^{n}$ is given as, 
\al{ 
\mathcal{G} \ket{S} =\ket{   Col(U_{S} ) }, 
} 
that is $\mathcal{G}$ maps the standard basis state $\ket{S}$ to the subspace state for matrix $U_{S}\in \R^{n\times |S|}$, the submatrix of $U$ with columns indexed by $S$. 
\end{theorem}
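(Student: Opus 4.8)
The plan is to reduce the statement to a repeated application of Theorem \ref{givens} by an induction on the number of $FBS$ gates in $\mathcal{G}$. The one preliminary observation needed is that every standard basis state is already a subspace state. Indeed, writing $d = |S|$ for the Hamming weight of $S$ and letting $I_{n,S} \in \R^{n \times d}$ denote the submatrix of the identity $I_n$ consisting of the columns indexed by $S$, the matrix $I_{n,S}$ has orthonormal columns, and its minors satisfy $det((I_{n,S})_{S'}) = \delta_{S,S'}$ over all size-$d$ subsets $S'$ (the restriction to the rows $S' = S$ is the identity, while any other row restriction contains a zero row). By Definition \ref{def2} this gives $\ket{S} = \ket{Col(I_{n,S})}$, so the input to the circuit is exactly the subspace state of $I_{n,S}$.

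For the induction, I would index the gates so that $\mathcal{G} = FBS_g \cdots FBS_2 FBS_1$ as they appear in the product defining $\mathcal{G}$, with $FBS_k = FBS_{i_k j_k}(\theta_k)$ and corresponding Givens rotation $G_k = G(i_k, j_k, \theta_k)$, so that correspondingly $U(\mathcal{G}) = G_g \cdots G_1$. The inductive claim is that applying the first $k$ gates produces $\ket{Col(G_k \cdots G_1 \, I_{n,S})}$. The base case $k = 0$ is the observation above. For the inductive step, the partial product $G_k \cdots G_1$ is orthogonal and $I_{n,S}$ has orthonormal columns, hence $G_k \cdots G_1 I_{n,S}$ again has orthonormal columns, so Theorem \ref{givens} applies verbatim and gives $FBS_{k+1}\ket{Col(G_k \cdots G_1 I_{n,S})} = \ket{Col(G_{k+1} G_k \cdots G_1 I_{n,S})}$, advancing the claim.

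After all $g$ gates the representing matrix is $G_g \cdots G_1 I_{n,S}$, and since $U(\mathcal{G})$ is defined as the product of the same Givens rotations in the same order, this equals $U(\mathcal{G}) \, I_{n,S}$. Right-multiplying $U = U(\mathcal{G})$ by $I_{n,S}$ selects precisely the columns of $U$ indexed by $S$, so $U(\mathcal{G}) I_{n,S} = U_S$ in the column convention of the theorem statement, yielding $\mathcal{G}\ket{S} = \ket{Col(U_S)}$.

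The argument is essentially bookkeeping, so I do not expect a deep obstacle; the only points demanding care are the conventions. I would double-check that the order in which the gates are applied to the state accumulates the Givens matrices by \emph{left} multiplication in exactly the order that defines $U(\mathcal{G})$, and that the column-selection convention $U_S$ of this theorem (columns indexed by $S$) is kept distinct from the row-restriction convention $X_S$ of Definition \ref{def2}. The preservation of column-orthonormality at each intermediate step, which is what licenses the use of Theorem \ref{givens} throughout, is immediate from the orthogonality of the partial products of Givens rotations.
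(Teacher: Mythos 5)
Your proposal is correct and follows essentially the same route as the paper's proof: identify $\ket{S}$ with the subspace state of the column-restricted identity matrix, then apply Theorem \ref{givens} gate by gate so that the representing matrix accumulates the Givens rotations into $U_S$. The paper compresses the induction into a single sentence, so your version is simply a more explicit write-up of the same argument.
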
 
\begin{proof} 
For $S= \ket{0^{n}}$ or  $S= \ket{1^{n}}$, the claim holds as these states are invariant under $FBS_{ij} (\theta)$ for all $i, j \in [n]$. 
It follows that $\mathcal{G} \ket{e_i}= \ket{un(u_{i})}$ since applying the sequence of Givens rotations corresponding 
to $\mathcal{G}$ on  $\ket{e_i}$ creates the state $\ket{un(u_{i})}$ corresponding to the unary encoding of the $i$-th column of $U$ by definition \ref{def:U}.  

The state $\ket{S}$ can be viewed as a subspace state corresponding to the matrix $I_{S}$ obtained by restricting the identity matrix $I \in \R^{n \times n}$ to the columns in $S$. 
Applying Theorem \ref{givens} to the circuit $\mathcal{G}$ viewed as a sequence of $FBS$ gates maps $\ket{I_{S}}$ to $\ket{Col(X)}$ for the matrix $X$ obtained by applying the corresponding Givens rotations to $I_{S}$. 
Definition  \ref{def:U} of the unitary $U(\mathcal{G})$ corresponding to the Givens circuit implies that this subspace state is $\ket{  Col( U_{S} ) }$. 
\end{proof} 
In order to prepare $\ket{Col(X)}$ it suffices to embed $X$ into a 
unitary matrix $U$ such that $U_{S}=X$ for some set $S$ of the columns of $U$. The unitary $U$ can then be compiled into a sequence of Givens rotations which yields a 
Givens circuit $\mathcal{G(U)}$ corresponding to the unitary $U$. Applying the Givens circuit $\mathcal{G(U)}$ to the standard basis state $\ket{S}$ produces the subspace state 
$\ket{Col(X)}$. 

This template can be used together with any procedure for decomposing the unitary $U$ into a sequence of Givens rotations. Different decompositions of the matrix as product of Givens rotation matrices yield different subspace state preparation algorithms. The number of Givens rotations needed to express a unitary matrix $U$ can be viewed as a measure of the complexity for subspace state preparation. 
\begin{defn} 
Let $U \in \R^{n \times n}$ be a unitary matrix, the Givens complexity $\gamma(U)$ is defined as the minimum number of Givens rotation matrices in a decomposition $U = \prod_{(i, j, \theta) \in \mathcal{T} } G(i, j, \theta) $. 
\end{defn} 
\noindent A counting argument shows that the Givens complexity of a generic unitary matrix is $O(n^{2})$, however for well-structured matrices the Givens complexity can be much lower. The complexity of 
subspace state preparation for a matrix $X \in \R^{n \times d}$ using Givens circuits is given as, 
\al{ 
\gamma(X) = \min_{U} \{ \gamma(U) | \;\; \exists S \;\;s.t. \;\;U_{S} = X \} 
} 
The subspace state $\ket{Col(X)}$ can be created with cost $\gamma(X)$ if an efficient decomposition of the embedding unitary into a product of Givens rotations is known. 
In this direction, we note that several families of matrices including the Fourier, wavelet and several other well structured matrices \cite{rusu2021fast}  are known to have $\gamma(X)=O(n \log n).$
Matrices with low Givens complexity have also attracted attention recently in the machine learning literature \cite{frerix2019approximating}.

The worst case complexity of decomposing a unitary into a sequence of Givens rotations is the same as that for the Singular Value Decomposition algorithm, namely the pre-processing overhead to compute the Givens circuits for the $X \in \R^{n \times d}$ matrix is $O(nd^{2})$. Such pre-processing costs can be amortized in an online setting or in machine learning applications where the parameters are being updated frequently. We also provide subspace state preparation algorithms with linear pre-processing overhead of $O(nd)$  using logarithmic depth circuits called Clifford loaders in the next section.

In the next subsections, we provide worst case methods for decomposing a unitary into Givens rotations that are advantageous in terms of the circuit depth and total number of gates used respectively.

\subsection{Quantum Pyramid Circuits} 
It is desirable to have a Givens circuit that uses 
only $FBS_{ij} (\theta)$ gates with $|i-j|=1$, since such a circuit avoids the extra overheads incurred by longer range $FBS$ gates and can be implemented using $RBS$ gates only. 
In fact, a matrix with orthonormal columns can be compiled in time $O(nd^2)$ into a sequence of Givens rotations acting on adjacent rows, this construction corresponds to the quantum pyramid circuits constructed in a recent work on quantum orthonormal neural networks \cite{kerenidis2021classical}. 

\begin{theorem} 
A matrix $U \in \R^{n \times d}$ with orthonormal columns can be compiled into a Givens circuit using only $RBS$ gates in time $O(nd^2)$. The resulting pyramid circuit 
has depth $n+d$ and uses $(2n-1-d)d$ gates acting on adjacent qubits. 
\end{theorem}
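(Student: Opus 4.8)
The plan is to reduce the claim to \thmref{thm: action}, which already tells us that a Givens circuit $\mathcal{G}(U)$ applied to $\ket{1^{d}0^{n-d}}$ produces $\ket{Col(U_{S})}$ for $S=\{1,\dots,d\}$. So it suffices to exhibit an orthogonal $U\in\R^{n\times n}$ whose first $d$ columns are $X$ and which factors as a product of Givens rotations $G(i,i+1,\theta)$ acting only on \emph{adjacent} rows. The adjacency is the whole point: by the remark following \defref{def:fbs}, the gate $FBS_{i,i+1}(\theta)$ coincides with $RBS_{i,i+1}(\theta)$, so a Givens circuit assembled only from adjacent rotations uses $RBS$ gates exclusively and pays no parity-gate overhead.

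To build such a $U$ I would run a Givens-QR elimination on $X$ using only neighbouring-row rotations and then reverse it. Concretely, process the columns left to right, and within each column $c$ annihilate the entries $X_{i,c}$ for $i=n,n-1,\dots,c+1$ from the bottom up, each time using the rotation $G(i-1,i,\theta)$ mixing rows $i-1$ and $i$, with $\theta$ chosen to zero the $(i,c)$ entry against the $(i-1,c)$ entry. This creates exactly one new zero per rotation, and the bottom-up, left-to-right order guarantees no previously created zero is destroyed. The structural fact that makes the process terminate at exactly $I_{n,d}$ (rather than at an upper-triangular form still needing cleanup) is that the columns of $X$ are orthonormal: the resulting triangular factor $R$ satisfies $R^{T}R=I_{d}$, hence is diagonal with $\pm 1$ entries, so after the eliminations the top $d\times d$ block is already $\pm I_{d}$ and the bottom block is zero. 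The sign flips are harmless — they can be absorbed into the last rotation of each column, or ignored since the subspace state is sign-invariant. Letting $Q$ be the product of all these adjacent rotations we get $QX=I_{n,d}$, so $U:=Q^{T}$ is orthogonal, factors as a product of adjacent Givens rotations (the reverses of those in $Q$), and satisfies $U_{S}=X$; the preparation circuit is then $\mathcal{G}(U)$, built via \thmref{givens}.

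The remaining work is the accounting. The elimination touches only the strictly sub-diagonal entries $(i,c)$ with $i>c$, of which there are $\sum_{c=1}^{d}(n-c)=nd-\binom{d+1}{2}$; laying these rotations out in the triangular pyramid template of \cite{kerenidis2021classical} and pipelining the per-column elimination waves, so that the wave for column $c$ begins one layer after the wave for column $c-1$, yields a circuit of depth $n+d$. Counting the gate sites of the regular (non-minimal) pyramid template gives the stated $(2n-1-d)d=2\big(nd-\binom{d+1}{2}\big)$, i.e.\ twice the number of nontrivial rotations, the redundancy being the price of the uniform triangular layout. For the preprocessing, there are $O(nd)$ rotations, and applying each one updates two rows of length $d$ in $O(d)$ time, for a total of $O(nd^{2})$.

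I expect the main obstacle to be purely combinatorial: matching the informal bottom-up, left-to-right elimination to the exact pyramid geometry so that the depth is \emph{precisely} $n+d$ and the gate count is \emph{precisely} $(2n-1-d)d$, while certifying that in the pipelined schedule every rotation acts on an adjacent pair and no two gates in a layer compete for the same qubit. The linear-algebra content — that orthonormal columns force the triangular factor to be diagonal, and that \thmref{givens} propagates each adjacent Givens rotation through the subspace state independently of the dimension $d$ — is already in hand; the care is entirely in the bookkeeping of the circuit geometry.
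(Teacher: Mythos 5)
The paper does not actually prove this theorem: it states it and defers entirely to the pyramid-circuit construction of \cite{kerenidis2021classical}, so there is no internal proof to compare against. Your proposal supplies the missing argument, and it is the right one: reduce via \thmref{thm: action} to embedding $X$ as the first $d$ columns of an orthogonal $U$ that factors into adjacent Givens rotations, obtain that factorization by a bottom-up, left-to-right Givens-QR elimination, and use orthonormality of the columns to force the triangular factor to be $\pm I_d$ so the elimination terminates at $I_{n,d}$ with no cleanup. The linear algebra is sound, the zero-preservation argument for the elimination order is correct, the sign issue is handled correctly (either absorbed into the rotation angles or discharged by the global-phase invariance of subspace states), and the $O(nd^{2})$ preprocessing count is right.

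Two bookkeeping points. First, your construction uses $nd-\binom{d+1}{2}$ rotations, which is \emph{half} the stated $(2n-1-d)d$; your attempt to reconcile this by padding to a ``uniform triangular layout'' is unnecessary and slightly at odds with the paper's own follow-up remark that the gate count equals the number of free parameters of an $n\times d$ orthonormal frame --- that dimension is exactly $nd-\binom{d+1}{2}$, so your count is the one consistent with the optimality claim, and you have in effect proved a stronger statement than the literal one. Second, the pipelining constant is off: wave $c$ cannot start one layer after wave $c-1$, since its first gate on rows $(n-1,n)$ conflicts with wave $(c-1)$'s gate on rows $(n-2,n-1)$; it must start two layers later, giving depth $n+d-2$, which still satisfies the stated bound of $n+d$. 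Neither point is a gap --- both are instances where your construction meets or beats the claimed resources.
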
 
\noindent The number of $RBS$ gates used in the pyramid circuit is equal to the number of free parameters for a matrix $U \in \R^{n \times d}$ with orthonormal columns, so this construction is optimal in the number of gates. 
The circuit depth for the pyramid circuit is $O(n+d)$ and we will later see examples of constructions of Givens circuits that reduce the depth to $O(d \log n)$. The pyramid circuit can be used as a representation 
of the subspace, such an approach may be advantageous for hardware with limited qubit connectivity like grid-based architectures. 

\subsection{ Sine-cosine decomposition} 

An alternative decomposition of a unitary into Givens matrices can be found using the sine-cosine decomposition which is closely related to the singular value decomposition. The sine-cosine decomposition \cite{van1985computing}
recursively decomposes the orthogonal matrix $X$ as a product, 
\al{ 
X = \begin{pmatrix} U_{1} & 0 \\  0 & U_{2} \end{pmatrix} \begin{pmatrix} C & S \\  -S & C \end{pmatrix} \begin{pmatrix} V_{1} & 0 \\  0 & V_{2} \end{pmatrix}, 
} 
where $U_{1}, U_{2}, V_{1} , V_{2}$ are unitary matrices of size $n/2$ and $C$ and $S$ are diagonal matrices with entries  $\cos(\theta_{i})$ and $\sin(\theta_{i})$ on the diagonals. 
The sine cosine decomposition algorithm for partial isometries \cite{stewart1982computing, gawlik2018backward} can be used to obtain Givens circuits with $O(nd)$ gates, depth $O(d \log n)$ and pre-processing overhead $O(nd^{2})$.

\begin{algorithm} [H]
\caption{Subspace state preparation with Givens circuits.} \label{SSP1} 
\begin{algorithmic}[1]
\REQUIRE Orthogonal matrix $U \in \R^{n \times n}$ and subset $S \subset [n]$ of the rows or columns of $U$. 
\ENSURE The subspace state $\ket{Col(U_{S})}$ or $\ket{Row(U_{S})}$. 
\STATE Compute Givens decomposition for $U= \prod_{(i, j, \theta) \in \mathcal{T}} G(i, j, \theta)$ using either quantum pyramid circuits or the sine cosine decomposition. 
\STATE Compute Givens circuit $\mathcal{G}(U)$ corresponding to the decomposition computed in step 1. 
\STATE Output $\mathcal{G}(U) \ket{S} = \ket{Col(U_{S})}$ or $\mathcal{G}(U)^{\dagger} \ket{S} = \ket{Row(U_{S})}$. 
\end{algorithmic}
\end{algorithm}

\noindent The subspace state preparation algorithm with Givens circuits is described as algorithm \ref{SSP1}. The algorithm requires a
Givens decomposition of $U$ that can be computed using pyramid circuits or the sine-cosine decomposition and constructs the corresponding Givens circuit $\mathcal{G}(U)$. The subspace states corresponding to $\ket{Col(U_{S})}$ can then be generated by running the Givens
circuits with $\ket{S}$ as an input. Note that the Givens circuit corresponding to the sine-cosine decomposition is structured like a classical butterfly network with layers of $FBS_{i, i+2^{k}} (\theta)$ gates being applied in parallel. We leave open the optimal implementation of the above circuit with minimal overheads in terms of extra qubits and circuit depth, in the next section we provide such an optimal implementation of a different circuit consisting of $FBS$ gates.

\section{Subspace states and Clifford Loaders} \label{cliffload} 

The subspace state preparation algorithm using Givens circuits incurs an $O(nd^{2})$ pre-processing overhead which at times can be prohibitive. In this section, we present a different approach to subspace state 
preparation with a pre-processing overhead of $O(nd)$. The resulting quantum circuits that we term Clifford loaders are closely related to the correspondence between anti-commuting operator systems
and quantum systems on $n$ qubits that can be formalized using Clifford algebras. 

\subsection{Anti-commuting operator systems} 
This section introduces anti commuting operator systems and shows that they correspond to quantum systems with a certain number of qubits. This correspondence underlies the construction of the 
Clifford loaders circuits. 
\begin{defn} 
An anti-commuting operator system of rank $k$ is a set of operators $A_{i}, i \in [k]$ acting on a finite dimensional Hilbert space $H$ such that the relations, 
\al{ 
\frac{1}{2}(A_{i} A_{j} + A_{j} A_{i} )= \delta_{ij} I 
} 
are satisfied for all $i, j \in [k]$. 
\end{defn} 
\noindent Note that each operator in an anti-commuting system is a reflection operator as $A_{i}^{2} = I$. 
The algebra generated by an anti-commuting operator systens of a rank $k$ is isomorphic to a Clifford Algebra of rank $(k-1)/2$, 
implying that $dim(H) \geq 2^{(k-1)/2}$ \cite{brauer1935spinors}. 

We give a different proof of this classical result in the language of quantum information, by establishing a correspondence between rank $(2n+1)$ systems of mutually anti-commuting operators 
and $n$ qubit quantum systems. Jordan's lemma is the main technical tool used to establish this correspondence. 
\begin{lemma} \label{jordan} 
[Jordan's lemma, \cite{J75} ] Let $P$ and $Q$ be two projections acting on a finite dimenisonal Hilbert space $H$ and let $(v_{i}, \lambda_{i})$ be the eigenvectors and eigenvalues for $PQP$, then $H$ decomposes as a 
direct sum of at most two dimensional subspaces $(v_{i}, Qv_{i})$ such that $\braket{v_{i}} {Qv_{i}}^{2}  = \lambda_{i}$. 
\end{lemma}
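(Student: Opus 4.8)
The plan is to reduce the whole statement to the spectral analysis of the single self-adjoint operator $PQP$. First I would record elementary facts: since $P$ and $Q$ are projections, $PQP$ is self-adjoint, $(PQP)^{*}=P^{*}Q^{*}P^{*}=PQP$, and positive semidefinite, since $\langle v, PQPv\rangle = \langle QPv, QPv\rangle = \|QPv\|^{2}\ge 0$. Moreover $PQP$ annihilates $\ker(P)$ and maps $\text{range}(P)$ into itself, so it suffices to diagonalise its restriction to $\text{range}(P)$. Let $\{v_{i}\}$ be an orthonormal eigenbasis of $PQP|_{\text{range}(P)}$ with $PQPv_{i}=\lambda_{i}v_{i}$; because $Pv_{i}=v_{i}$ this is equivalent to $PQv_{i}=\lambda_{i}v_{i}$, and a one-line computation gives $\lambda_{i}=\langle v_{i}, PQPv_{i}\rangle = \langle v_{i}, Qv_{i}\rangle = \|Qv_{i}\|^{2}\in[0,1]$.

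Next I would build the invariant subspaces and check the overlap relation. For each $i$ set $W_{i}=\text{span}\{v_{i}, Qv_{i}\}$. This space is $Q$-invariant, since $Q(Qv_{i})=Qv_{i}$, and $P$-invariant, since $Pv_{i}=v_{i}$ and $P(Qv_{i})=PQv_{i}=\lambda_{i}v_{i}$; hence both $P$ and $Q$ preserve each $W_{i}$, which has dimension at most two. Writing the second basis vector in normalised form $w_{i}=Qv_{i}/\|Qv_{i}\|$, the overlap is $\langle v_{i}, w_{i}\rangle = \langle v_{i}, Qv_{i}\rangle/\|Qv_{i}\| = \lambda_{i}/\sqrt{\lambda_{i}} = \sqrt{\lambda_{i}}$, so $\langle v_{i}, w_{i}\rangle^{2}=\lambda_{i}$, which is the stated identity and exhibits $\lambda_{i}=\cos^{2}\theta_{i}$ as the squared cosine of the principal angle between the line $W_{i}\cap\text{range}(P)$ and the line $W_{i}\cap\text{range}(Q)$.

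I would then verify mutual orthogonality and exhaustiveness. For $i\neq j$ the chain $\langle v_{i}, Qv_{j}\rangle = \langle Pv_{i}, Qv_{j}\rangle = \langle v_{i}, PQv_{j}\rangle = \langle v_{i}, PQPv_{j}\rangle = \lambda_{j}\langle v_{i}, v_{j}\rangle = 0$, together with $\langle Qv_{i}, Qv_{j}\rangle = \langle v_{i}, Qv_{j}\rangle = 0$ and $\langle v_{i}, v_{j}\rangle=0$, shows $W_{i}\perp W_{j}$. To finish, let $V:=\bigoplus_{i}W_{i}$; by construction $V\supseteq\text{range}(P)$ and $V$ is invariant under both projections, so its orthogonal complement $V^{\perp}$ is also invariant and is contained in $\ker(P)$. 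On $V^{\perp}$ we have $P=0$ and $Q$ a projection, so $V^{\perp}$ splits as the direct sum of the two $Q$-eigenspaces, each of which breaks into one-dimensional $P,Q$-invariant pieces. Collecting these with the $W_{i}$ yields the decomposition of $H$ into subspaces of dimension at most two.

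The main obstacle is bookkeeping rather than depth. One must treat the degenerate eigenvalues carefully: $\lambda_{i}=0$ forces $Qv_{i}=0$ and $\lambda_{i}=1$ forces $Qv_{i}=v_{i}$ (equality in Cauchy--Schwarz for unit vectors), so in both cases $W_{i}$ collapses to a one-dimensional subspace; and one must separately account for the remainder $V^{\perp}$, consisting of vectors annihilated by $P$ and either fixed or killed by $Q$, which the eigenvectors of $PQP|_{\text{range}(P)}$ do not detect. Keeping the normalisation of $Qv_{i}$ explicit is exactly what makes the stated identity $\langle v_{i}, Qv_{i}\rangle^{2}=\lambda_{i}$ come out with the correct power.
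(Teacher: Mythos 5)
Your proof is correct and complete; the paper itself states Jordan's lemma as a classical result cited from Jordan (1875) and gives no proof, so there is nothing in the text to compare against. Your argument is the standard one: diagonalise the positive semidefinite operator $PQP$ on $\mathrm{range}(P)$, check that each $W_{i}=\mathrm{span}\{v_{i},Qv_{i}\}$ is invariant under both projections and that the $W_{i}$ are mutually orthogonal, and sweep up the remainder inside $\ker(P)$, where $Q$ acts diagonally. You were also right to flag the normalisation issue: read literally, $\langle v_{i},Qv_{i}\rangle=\lambda_{i}$, so $\langle v_{i},Qv_{i}\rangle^{2}=\lambda_{i}^{2}$; the identity $\lambda_{i}=\cos^{2}\theta_{i}$ in the lemma only comes out after replacing $Qv_{i}$ by $Qv_{i}/\|Qv_{i}\|$, exactly as you do, and your treatment of the degenerate cases $\lambda_{i}\in\{0,1\}$ closes the remaining gaps.
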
 
\noindent The product of the reflections $2P-I, 2Q-I$ also has the same invariant subspace decomposition as given by Jordan's lemma.

The correspondence between qubits and anti-commuting operators using Clifford algebras first appeared in the quantum information literature in Tsirelson's work on Bell inequalities \cite{tsirel1987quantum} where it yields a bound on the dimension of quantum systems required to achieve certain correlations. More recently, this correspondence was also used to construct explicit examples of $n$ dimensional matrices having sub-exponentially large CPSD rank \cite{prakash2018completely}. The following argument establishing 
the correspondence is implicit in the seminal work \cite{reichardt2013classical} on the verification of quantum computations. 

\begin{lemma} 
If there is an anti-commuting operator system of rank $(2n+1)$ acting on a Hilbert space $H$, then $H$ decomposes into a tensor product of $n$ qubit systems, in particular $dim(H) \geq 2^{n}$.  
\end{lemma}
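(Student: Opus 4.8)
The plan is to prove the statement by induction on $n$, extracting one qubit at a time via Jordan's lemma. Write the system as Hermitian reflections $A_1,\dots,A_{2n+1}$ satisfying $A_iA_j+A_jA_i=2\delta_{ij}I$, so that each $A_i^2=I$ and $A_iA_j=-A_jA_i$ for $i\neq j$. The base case $n=0$ (a single reflection, giving $\dim(H)\geq 2^0=1$) is immediate, and the inductive hypothesis is that any rank-$(2m+1)$ anti-commuting system of Hermitian reflections forces a factorization $H\cong(\C^2)^{\otimes m}\otimes M$ for some multiplicity space $M$, so in particular $\dim(H)\geq 2^m$.

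For the inductive step I would first separate a single qubit out of the pair $A_1,A_2$. Writing $A_1=2P-I$ and $A_2=2Q-I$ for projections $P,Q$, Jordan's lemma (\lemref{jordan}) decomposes $H$ into subspaces of dimension at most two that are invariant under both reflections. A one-dimensional common invariant subspace cannot occur: there both $A_1$ and $A_2$ would act as commuting scalars in $\{\pm 1\}$, contradicting $A_1A_2=-A_2A_1$ with $A_1,A_2$ nonzero. Hence every block is two-dimensional, and on each block two anti-commuting nontrivial Hermitian involutions are simultaneously unitarily equivalent to the Pauli matrices $Z$ and $X$. Choosing the Jordan bases uniformly across blocks identifies $H\cong\C^2\otimes H'$ with $A_1=Z\otimes I$ and $A_2=X\otimes I$, where $H'$ is the block-index (multiplicity) space.

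Next I would push the remaining reflections onto the factor $H'$. With $i$ the imaginary unit, define the dressed operators $\tilde A_k=i\,A_1A_2A_k$ for $k=3,\dots,2n+1$. A short sign computation using the anti-commutation relations shows that each $\tilde A_k$ commutes with both $A_1$ and $A_2$; since $A_1,A_2$ generate the full algebra $M_2(\C)$ of the extracted qubit, $\tilde A_k$ must have the form $I\otimes A_k'$ for an operator $A_k'$ on $H'$. Using $(A_1A_2)^2=-I$ together with the fact that $A_1A_2$ commutes with each $A_k$ for $k\geq 3$, one checks $\tilde A_k^2=I$, $\tilde A_k^{\dagger}=\tilde A_k$ (here the relation $A_kA_2A_1=-A_1A_2A_k$ is what gives Hermiticity), and $\tilde A_k\tilde A_l=-\tilde A_l\tilde A_k$ for $k\neq l$. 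Thus the $A_k'$ form a system of $2n-1=2(n-1)+1$ Hermitian anti-commuting reflections on $H'$, and the inductive hypothesis applies to give $H'\cong(\C^2)^{\otimes(n-1)}\otimes M$, whence $H\cong(\C^2)^{\otimes n}\otimes M$ and $\dim(H)=2\dim(H')\geq 2^n$.

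The main obstacle is the bookkeeping in the third step: one must verify that the dressing $A_k\mapsto i\,A_1A_2A_k$ returns the system to exactly the same category of objects, namely Hermitian operators squaring to $I$ that pairwise anti-commute, so that the induction closes on a genuine rank-$(2n-1)$ system. The scalar $i$ is precisely what restores $\tilde A_k^2=+I$, since $A_1A_2A_k$ by itself squares to $-I$; and the commutation of $\tilde A_k$ with $A_1,A_2$ is what guarantees the dressed operators act only on the freshly separated factor $H'$ rather than re-entangling the extracted qubit. The other point requiring care is the exclusion of one-dimensional Jordan blocks in the second step, which is what makes the qubit factorization uniform.
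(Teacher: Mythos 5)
Your proof is correct and follows essentially the same route as the paper: apply Jordan's lemma to one anti-commuting pair to split off a qubit factor on which the pair acts as $X$ and $Z$, dress the remaining operators by the product of that pair so they commute with it and hence live on the complementary factor, and iterate. The only difference is your inclusion of the scalar $i$ in the dressed operators $\tilde A_k = i\,A_1A_2A_k$, which is a small improvement in rigor: the paper's undressed $B_k = A_iA_jA_k$ actually square to $-I$, so your normalization is what makes the induction close on a genuine anti-commuting system of reflections.
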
 
\begin{proof} 
We show that for every pair of anti-commuting operators $A_{i}, A_{j}$  in the anti-commuting system, there is a factorization of $H = H_{1} \otimes H'$ where $dim(H_{1})=2$ and $H'$ supports a system of mutually anti-commuting operators $B_{k}$ of rank $(2n-1)$ that commute with $A_{i}$ and $A_{j}$. The result follows by applying this construction iteratively. 

Let $A_{i}$ and $A_{j}$ be an arbitrary pair of operators in the anti-commuting system. Jordan's lemma applied to $A_{i}$ and $A_{j}$ yields a decomposition of $H$ into invariant spaces of dimension at most $2$. 
Further, the anti-commutation relation $A_{i} A_{j} + A_{j} A_{i} =0$ implies that all the invariant subspaces are 2 dimensional and that $A_{i}A_{j}$ acts as rotation by $\pm \pi/2$ on all invariant subspaces. 

Define operators $B_{k} = A_{i} A_{j} A_{k}$, then it follows that the $B_{k}$ for $k\neq i, j$ are $(2n-1)$ mutually anti commuting operators that commute with $A_{i}$ and $A_{j}$. 
The Jordan blocks for $A_{i}$ and $A_{j}$ induce a decomposition  $H= H_{1} \otimes H'$ as a tensor product of a qubit with Hilbert space $H'$ with $dim(H')= dim(H)/2$. The operators $A_{i}$ and $A_{j}$ 
can without loss of generality be assumed to be $X$ and $Z$ Pauli matrices acting on $H_{1}$. Operators commuting with $A_{i}$ and $A_{j}$ act as $I$ on $H_{1}$ and the $B_{k}$ 
can be decomposed as $B_{k} = I \otimes B_{k}'$ where $B_{k}'$ are an anti-commuting system of rank $(2n-1)$ supported on $H'$. Iterating the argument, it follows that $H= H_{1} \otimes H_{2} \otimes \cdots H_{n} \otimes H'$, 
where $H_{i}$ are $2$ dimensional Hilbert spaces and $H'$ is some Hilbert space.

\end{proof} 
\noindent Next, we see some canonical examples of anti-commuting operator systems. 
The Pauli matrices are a rank $3$ system of anti-commuting matrices acting on a Hilbert space of dimension $2$.  
\al{ 
I=\en{ \begin{matrix}  &1,&0 \\ &0,&1  \end{matrix} } , X=  \en{ \begin{matrix} &0,&1 \\ &1,&0  \end{matrix}} ,  Y= \en{\begin{matrix}  &0,&i \\ &-i,&0  \end{matrix} } , Z=\en{ \begin{matrix}  &1,&0 \\& 0,&-1  \end{matrix} } 
} 
These matrices satisfy the relations $XY=-YX=iZ, YZ=-ZY=iX$ and $XZ=-ZX=iY$. 

The Weyl-Brauer representation of the Clifford algebra for $\C^{n}$ is generated by the matrices 
$P_{i} =  Z^{\otimes i-1} \otimes  X \otimes I^{\otimes n-i} $ and $Q_{i} = Z^{\otimes i-1} \otimes Y \otimes I^{\otimes n-i}$ acting on $n$ qubits. It is easy to verify that the Weyl Brauer matrices together with $R=Z^{\otimes n}$ form an anti-commuting system of rank $(2n+1)$ supported on a Hilbert space of dimension $2^{n}$. 
The anti commuting systems considered in this work are generated by the operator $P_{i}$, that is the operators generate  the Clifford algebra for $\R^{n}$.

\subsection{Clifford Loaders} 
In this section, we introduce unitary operators corresponding to the representation of unit vectors in the Clifford algebra and show that these operators are efficiently implemented by a class of quantum 
circuits that we term Clifford loaders.  

\begin{defn} \label{Cliff} 
For unit vector $x \in \R^{n}$ with $\norm{x}_{2} =1$, the operator $\Gamma(x)$ is defined as, 
\all{ 
 \Gamma(x)=  \sum_{i \in [n]} x_{i} Z^{\otimes (i-1)} \otimes X \otimes I^{\otimes (n-i)}. 
 } {eq3} 
\end{defn} 
\noindent Note that $\Gamma(x)= \sum_{i} x_{i} P_{i}$ for anti-commuting operators $P_{i}$, the anti-commutation relations imply that $\Gamma(x)^{2} = I$ for unit vectors $x \in \R^{n}$,
\al{ 
\Gamma(x)^{2} = \sum_{i,j} x_{i} x_{j} (P_{i} P_{j} + P_{j} P_{i}) + \sum_{i} x_{i}^{2} P_{i}^{2} =  I 
} 
It follows that for unit vectors $x \in \R^{n}$, the operator $\Gamma(x)$ is unitary, in fact it is a reflection as its eigenvalues are constrained to be $\pm 1$. 
The action of $\Gamma(x)$ on standard basis states is given by, 
\all{ 
\Gamma(x) \ket{S} = \sum_{i \in [n]} (-1)^{ \oplus_{j<i} s_{j}} x_{i} \ket{S \oplus i } . 
} {cliff} 
The operators $\Gamma(x)(x)$ are unitary matrices of dimension $2^{n}$ acting on an $n$-qubit Hilbert space,  and there is no a priori reason to expect that they can be implemented efficiently. 
The main result of this section is that these operators can indeed be implemented efficiently with polynomial resources, using quantum circuits that we term Clifford loaders. 

A unary data loader is a parametrized quantum circuit using RBS gates for preparing quantum states corresponding to the unary amplitude encodings of vectors defined in \ref{def1}. 
\begin{defn} \label{dataloader} 
A unary data loader $D(x)$ is a parametrized quantum circuit consisting of RBS gates that for any unit vector  $x \in \R^{n}$ maps the standard basis state $\ket{e_1 }$ to the unary amplitude encoding $\ket{un(x)}$, 
\al{ 
D(x) \ket{e_1 } =  \sum_{i \in [n]} x_{i}  \ket{ e_i }, 
}
where $e_i=0^{i-1}10^{n-i}.$
\end{defn} 
\noindent Optimal constructions of data loaders with $O(n)$ gates and depth $\log n$ have been given \ref{}. Note that the circuit $D'(x)$ obtained by replacing each $RBS$ gate in a unary data loader $D(x)$ 
by the corresponding $FBS$ gate is also a unary data loader as the action of RBS and FBS gates is the same on bit strings with Hamming weight $1$. Clifford loader circuits are obtained by composing such fermionic data loaders, where each $RBS$ gate in the data loader has been replaced with the corresponding $FBS$ gate. 

\begin{defn} \label{cloader} 
Given a unary data loader $D(x)$, the Clifford loader is the circuit $\mathcal{C}(x) = D_F(x)(X\otimes I^{\otimes n-1}) D_F(x)^{*}$ where $D_F(x)$ is obtained by replacing each $RBS$ gate in $D(x)$ by the corresponding $FBS$ gate. 
\end{defn} 
\noindent Unlike a unary data loader $D(x)$ whose behavior is specified only for the input state $\ket{e_1}$ and can be arbitrary on other standard basis states, 
a Clifford loader is determined on the entire Hilbert space. The following result describes the action of the Clifford loader on subspace states of arbitrary dimension, which implies that the Clifford loaders in definition 
\ref{cloader} do not depend on the choice of the unary data loader $D(x)$. 
\begin{theorem} \label{action:cloader} 
Let $x = \cos(\theta) x^{\parallel} + \sin(\theta) x^{\perp}$ be the decomposition of unit vector $x \in \R^{n}$ into components orthogonal and parallel to a $d$-dimensional subspace $Col(Y)$. Then, 
\al{ 
\mathcal{C} (x) \ket{Col(Y)} = \cos(\theta) \ket{Col(Y' )} + \sin(\theta)  \ket{Col( (Y , x^{\perp}) )}
} 
where the columns of $Y'\in \R^{n\times (d-1)}$ together with $x^{\parallel}$ are an orthonormal basis for $Col(Y)$.  
\end{theorem}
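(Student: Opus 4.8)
The plan is to exploit the fact that $D_F(x)$ is itself a Givens circuit and to conjugate the problem into a frame in which $x$ becomes the first standard basis vector $e_1$. Writing $\mathcal{C}(x) = D_F(x)\,P_1\,D_F(x)^{*}$ with $P_1 = X\otimes I^{\otimes n-1} = \Gamma(e_1)$, I first record the generalization of Theorem~\ref{thm: action} obtained by applying Theorem~\ref{givens} one FBS gate at a time: if $\mathcal{G}$ is a Givens circuit with unitary $U = U(\mathcal{G})$, then $\mathcal{G}\ket{Col(W)} = \ket{Col(UW)}$ for every $W$ with orthonormal columns, and likewise $\mathcal{G}^{*}\ket{Col(W)} = \ket{Col(U^{T}W)}$ since $FBS_{ij}(\theta)^{*}=FBS_{ij}(-\theta)$ and the conjugate circuit realizes $U^{T}$. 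Because $D(x)\ket{e_1}=\ket{un(x)}$ and FBS and RBS gates agree on Hamming weight one, $D_F(x)$ is the Givens circuit of an orthogonal $U$ whose first column is $Ue_1 = x$. Applying $D_F(x)^{*}$ therefore sends $\ket{Col(Y)}$ to $\ket{Col(W)}$ with $W = U^{T}Y$, and the decomposition $x = \cos\theta\,x^{\parallel}+\sin\theta\,x^{\perp}$ relative to $Col(Y)$ is carried to $e_1 = \cos\theta\,w^{\parallel}+\sin\theta\,w^{\perp}$ relative to $Col(W)$, with $w^{\parallel}=U^{T}x^{\parallel}\in Col(W)$, $w^{\perp}=U^{T}x^{\perp}\perp Col(W)$, and the same angle $\theta$.

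The core is then to compute $P_1\ket{Col(W)} = \Gamma(e_1)\ket{Col(W)}$. Here I would use that $\Gamma$ is linear in its vector argument, so $\Gamma(e_1)=\cos\theta\,\Gamma(w^{\parallel})+\sin\theta\,\Gamma(w^{\perp})$, together with two building-block identities for the Clifford generators. For these I would first establish the Slater-determinant representation $\ket{Col(W)} = \Gamma(w_1)\Gamma(w_2)\cdots\Gamma(w_d)\ket{0^{n}}$ for any orthonormal basis $w_1,\dots,w_d$ of $Col(W)$; this follows by expanding the product using \eqref{cliff} and observing that the fermionic signs assemble precisely into the Leibniz expansion of the determinants $det(W_S)$. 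Granting this, the two identities are immediate from the Clifford relations $\Gamma(v)^{2}=I$ and $\Gamma(u)\Gamma(v)=-\Gamma(v)\Gamma(u)$ for $u\perp v$: if $v\in Col(W)$ is a unit vector I take it as the first basis vector, so $\ket{Col(W)}=\Gamma(v)\ket{Col(W')}$ and hence $\Gamma(v)\ket{Col(W)}=\Gamma(v)^{2}\ket{Col(W')}=\ket{Col(W')}$, where $W'$ completes $v$ to a basis of $Col(W)$; and if $v\perp Col(W)$ is a unit vector then commuting $\Gamma(v)$ past the $d$ generators gives $\Gamma(v)\ket{Col(W)}=(-1)^{d}\Gamma(w_1)\cdots\Gamma(w_d)\Gamma(v)\ket{0^{n}}=(-1)^{d}\ket{Col((W,v))}$.

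Applying the removal identity to the $\Gamma(w^{\parallel})$ term (with $W'$ completing $w^{\parallel}$ in $Col(W)$) and the addition identity to the $\Gamma(w^{\perp})$ term yields $\Gamma(e_1)\ket{Col(W)} = \cos\theta\,\ket{Col(W')}+\sin\theta\,\ket{Col((W,w^{\perp}))}$ up to the signs discussed below. I then conjugate back with $D_F(x)$ using the generalized action of the Givens circuit: $D_F(x)\ket{Col(W')}=\ket{Col(UW')}=\ket{Col(Y')}$ with $Y'=UW'$, and $D_F(x)\ket{Col((W,w^{\perp}))}=\ket{Col((UW,Uw^{\perp}))}=\ket{Col((Y,x^{\perp}))}$ since $UW=Y$ and $Uw^{\perp}=x^{\perp}$. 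Because $w^{\parallel}=U^{T}x^{\parallel}$, the space $Col(Y')=U\,Col(W')$ is exactly the orthogonal complement of $x^{\parallel}$ inside $Col(Y)$, as required, and this reassembles the claimed state $\cos\theta\ket{Col(Y')}+\sin\theta\ket{Col((Y,x^{\perp}))}$; as a by-product it shows $\mathcal{C}(x)=\Gamma(x)$ independently of the chosen data loader.

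I expect the main obstacle to be the fermionic (Koszul) sign bookkeeping rather than any structural difficulty: the addition identity carries a factor $(-1)^{d}$, and the sign of each subspace state $\ket{Col(M)}$ depends on the ordering and orientation of the columns of $M$. The delicate point is to fix a single column-ordering convention for $W'$ and for the appended vectors so that these signs cancel consistently and reproduce the $+\cos\theta$ and $+\sin\theta$ coefficients in the statement; equivalently, one must verify simultaneously that the Hamming weight $d-1$ component vanishes when adding an orthogonal vector and that the relative sign between the two surviving terms is correct. All other steps — the linearity of $\Gamma$, the Clifford relations, and the covariant action of Givens circuits on subspace states — are routine consequences of the results already established.
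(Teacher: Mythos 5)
Your argument is correct, but it routes the key step through different machinery than the paper does. Both proofs conjugate into the frame where $x$ becomes $e_1$ by exploiting that $D_F(x)$ is a Givens circuit with $U_D e_1 = x$, and both conjugate back at the end with $\mathcal{G}\ket{Col(W)} = \ket{Col(U W)}$. The divergence is in how $X\otimes I^{\otimes(n-1)}$ is analyzed in that frame. The paper decomposes the \emph{state}: it writes $\ket{Col(Y)}=\ket{Col((x^{\parallel},Y'))}$, uses linearity of the determinant in the first column to split $x^{\parallel}=\cos\theta\,x+\sin\theta\,x'$ into $\cos\theta\ket{Col((x,Y'))}+\sin\theta\ket{Col((x',Y'))}$, pushes both through $D_F(x)^{*}$ to get matrices $V_1, V_2$ whose first rows are $(1,0^{d-1})$ and $0^{d}$ respectively, and then reads off the action of $X\otimes I$ directly because the first qubit factors out of those two states. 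You instead decompose the \emph{operator}, $\Gamma(e_1)=\cos\theta\,\Gamma(w^{\parallel})+\sin\theta\,\Gamma(w^{\perp})$, and evaluate each term via the Clifford relations acting on the Slater representation $\ket{Col(W)}=\Gamma(w_1)\cdots\Gamma(w_d)\ket{0^n}$. Your route buys more: it delivers $\mathcal{C}(x)=\Gamma(x)$ as a by-product (which the paper proves separately as Theorem~\ref{cliff1} by an explicit spherical-coordinate loader and $RBS$ conjugation identities), and the removal/addition identities are exactly the creation/annihilation picture the paper only states informally in the introduction. The cost is that you must first prove the Slater representation, which is the one substantive piece of work your outline defers; the paper establishes precisely this fact later (Lemma~\ref{signs} and Corollary~\ref{det} in Section 5) by the path-counting/Leibniz argument you sketch, and disposes of the sub-leading Hamming-weight components by the Cauchy--Binet normalization $\sum_{|S|=d}\det(W_S)^2=1$ rather than by direct cancellation — you will need one of these two arguments to make your building blocks rigorous. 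Your sign worries are legitimate but resolvable by convention: with the ordering $\ket{Col(W)}=\Gamma(w_d)\cdots\Gamma(w_1)\ket{0^n}$ (first column applied first, matching the paper's Algorithm~\ref{QDS}), left-multiplication by $\Gamma(v)$ appends $v$ as the \emph{last} column with no $(-1)^{d}$ at all, and taking $w^{\parallel}$ as the last basis vector makes the removal identity sign-free too; note the statement itself only pins down $Y'$ up to an orthogonal change of basis, so the same latitude the paper uses is available to you.
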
 
\begin{proof} 
The unitary invariance of subspace states implies that $\ket{Col(Y)} = \ket{Col( (x^{\parallel}, Y') )}$. 
Let $x'$ be the orthogonal complement of $x$ in the two dimensional subspace spanned by $(x^{\parallel}, x^{\perp})$ so that $x^{\parallel} = \cos(\theta) x + \sin(\theta)x'$. Let $Y_{1}= (x, Y')$ and $Y_{2} = (x', Y')$ be the 
matrices obtained by adjoining vectors $x, x'$ to the columns of $Y'$.

Let the Clifford loader be $ D_F(x)(X\otimes I^{\otimes (n-1)}) D_F(x)^{*}$, note that $D_F(x)$ is a Givens circuit for some unitary $U_{D} \in \R^{n\times n}$ such that $U_{D} e_{1} = x$. The result of applying $D_F(x)^{*}$ on $\ket{ Col(Y)}$ can be computed using 
Theorem \ref{givens}. 
\al{ 
D_F(x)^{*} \ket{Col(x^{\parallel}, Y')} =  \ket{ Col( \cos(\theta) U_{D}^{*} Y_{1}  + \sin(\theta) U_{D}^{*} Y_{2} )  } 
} 
From the definition of $U_{D}$ it follows that the first row of the matrix $V_{1} = U_{D}^{*} Y_{1}$ is the vector $(1, 0^{d-1})$  and that the first row of $V_{2} = U_{D}^{*} Y_{2}$ is the all zeros vector 
$0^{d}$. The remaining part of the Clifford loader circuit $D_F(x)(X\otimes I^{\otimes (n-1)})$ can therefore be computed as, 
\al{ 
D_F(x)(X\otimes I^{\otimes (n-1)}) \ket{ Col( \cos(\theta) V_{1} + \sin(\theta) V_{2} )  } = \cos(\theta)  \ket{ Col( Y') }  + \sin(\theta) \ket{ Col(Y, x^\perp) }   
} 
The first part of the superposition has Hamming weight $(d-1)$ and represents the subspace obtained by removing $x^{\parallel}$ from $Y$ while the second part of the superposition has Hamming weight $(d+1)$ and represents the subspace obtained by adding 
$x^{\perp}$ to $Y$.  
\end{proof} 

\noindent We next show that the Clifford loader implements the unitary $\Gamma(x)$ defined in equation \eqref{eq3}. In order to prove this, we compute the unitary implemented by a Clifford loader explicitly using the a simple construction of a unary data loader based on spherical coordinates and the following identities on conjugation by the $RBS$ gate.

\begin{lemma} \label{bsconj} 
The following identities hold: 
\enum{ 
\item $RBS(\theta) (X\otimes I) RBS(\theta)^{*} =  \cos(\theta) X\otimes I + \sin(\theta) Z \otimes X$. 
\item  $RBS(\theta) (Z\otimes Z) RBS(\theta)^{*} = Z\otimes Z$. 
} 
\end{lemma}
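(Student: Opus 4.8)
The plan is to prove both identities by direct verification, using the fact that $RBS(\theta)$ acts non-trivially only on the two-dimensional space $\mathrm{span}\{\ket{01},\ket{10}\}$ and as the identity on $\mathrm{span}\{\ket{00},\ket{11}\}$. For the first identity I would fix the basis ordering $\ket{00},\ket{01},\ket{10},\ket{11}$, use $RBS(\theta)^{*}=RBS(-\theta)$ together with the explicit matrix \eqref{btheta}, and compute the image of each standard basis state under conjugation. A short computation gives $\ket{00}\mapsto \sin\theta\,\ket{01}+\cos\theta\,\ket{10}$, $\ket{01}\mapsto \sin\theta\,\ket{00}+\cos\theta\,\ket{11}$, $\ket{10}\mapsto \cos\theta\,\ket{00}-\sin\theta\,\ket{11}$, and $\ket{11}\mapsto \cos\theta\,\ket{01}-\sin\theta\,\ket{10}$. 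I would then check that $\cos\theta\,(X\otimes I)+\sin\theta\,(Z\otimes X)$ produces exactly these images, which is immediate from $X\otimes I\,\ket{00}=\ket{10}$, $Z\otimes X\,\ket{00}=\ket{01}$, and the analogous relations on the other basis states; agreement on a basis establishes the operator identity.

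For the second identity the cleanest route is to observe that $Z\otimes Z$ is a scalar on each invariant subspace of $RBS(\theta)$: it equals $+1$ on $\mathrm{span}\{\ket{00},\ket{11}\}$ and $-1$ on $\mathrm{span}\{\ket{01},\ket{10}\}$. Since $RBS(\theta)$ preserves both subspaces and $Z\otimes Z$ is a multiple of the identity on each, the two operators commute, whence $RBS(\theta)(Z\otimes Z)RBS(\theta)^{*}=Z\otimes Z$. A direct multiplication of the three matrices gives the same conclusion.

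As an alternative, more conceptual derivation of the first identity I would write $RBS(\theta)=\exp(\theta G)$ with generator $G=\tfrac{i}{2}(Y\otimes X-X\otimes Y)$ read off from the rotation block, and apply the Hadamard lemma $e^{\theta G}B\,e^{-\theta G}=\sum_{k\ge 0}\tfrac{\theta^{k}}{k!}\,\mathrm{ad}_{G}^{k}(B)$. The only inputs are the commutators $[G,X\otimes I]=Z\otimes X$ and $[G,Z\otimes X]=-X\otimes I$, which close up into a rotation of the plane $\mathrm{span}\{X\otimes I,\,Z\otimes X\}$ and reproduce $\cos\theta\,(X\otimes I)+\sin\theta\,(Z\otimes X)$; the identity $[G,Z\otimes Z]=0$ then recovers the second statement as well.

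I do not expect a genuine obstacle here, as the lemma is a finite-dimensional verification that each route settles in a few lines. The only thing requiring care is bookkeeping of conventions — keeping the basis ordering fixed, applying $RBS(\theta)^{*}=RBS(-\theta)$ rather than $RBS(\theta)$, and tracking the sign that $Z$ contributes on the $\ket{1}$ components — so I would pin these down at the start and then simply read off the action on the basis states.
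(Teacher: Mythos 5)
Your proposal is correct and follows essentially the same route as the paper, which also proves both identities by direct computation in the ordered basis $\ket{00},\ket{01},\ket{10},\ket{11}$ (the paper multiplies out the $4\times 4$ matrices explicitly, you track the images of basis states under conjugation --- the same verification, and your computed images match the paper's resulting matrix). Your supplementary arguments --- that $Z\otimes Z$ commutes with $RBS(\theta)$ because it is a scalar on each invariant block, and the Hadamard-lemma derivation from $[G,X\otimes I]=Z\otimes X$, $[G,Z\otimes X]=-X\otimes I$, $[G,Z\otimes Z]=0$ --- are both sound and would serve equally well.
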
 
\begin{proof} 

The matrices $X\otimes I$ and $Z \otimes X$ have the following representation in the standard basis, 
\al{ 
X\otimes I = \begin{pmatrix} 
&0  &0 & 1 & 0 \\
&0  & 0 &0  &1 \\ 
&1 & 0   & 0    & 0 \\
&0 & 1 & 0     & 0 
\end{pmatrix} , \hspace{50pt} 
Z \otimes X = \begin{pmatrix}  
&0  &1 & 0 & 0 \\ 
&1  & 0 &0  &0 \\
&0 & 0   & 0    & -1 \\
&0 & 0 & -1     & 0 
\end{pmatrix}. 
}  
By direct computation we have the first identity, 
\all{ 
RBS(\theta) (X \otimes I) RBS(\theta)^{*} =  \begin{pmatrix} 
&0  &\sin(\theta) & \cos(\theta) & 0 \\
&\sin(\theta)  &0 & 0  &\cos(\theta) \\
&\cos(\theta) & 0   & 0  & -\sin(\theta) \\
&0 & \cos(\theta) & -\sin(\theta) & 0 
\end{pmatrix}  = \cos(\theta) X \otimes I + \sin(\theta) Z \otimes X. 
} {bconj} 
The second identity also follows by direct computation, 
\all{ 
RBS(\theta) (Z \otimes Z) RBS(\theta)^{*} =  \begin{pmatrix} 
&1  &0 & 0 & 0 \\
&0  & -1 &0  &1 \\ 
&0 & 0   & -1    & 0 \\
&0 & 0 & 0     & 1 
\end{pmatrix} = Z \otimes Z. 
}{i2} 
\end{proof} 
\noindent As noted before the $RBS$ gate can be viewed as a special case of the $FBS$ gate with $|i-j|=1$. 
The conjugation identities can be generalized to conjugation by $FBS$ gates for arbitrary $i, j$. However, conjugation 
identities for the $RBS$ gates suffice to establish that $\Gamma(x)$ is the unitary implemented by the Clifford loader circuits. 

\begin{figure}
	\centering
	\includegraphics[scale = 0.8]{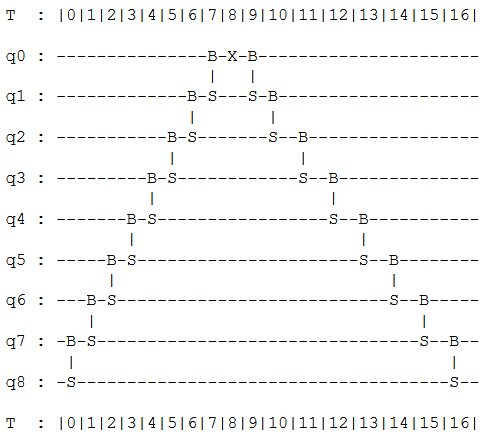}
	\caption{Quantum circuit corresponding to the first construction of the Clifford loader $\mathcal{C}(x)$ on 8 qubits. } \label{fig1} 
\end{figure}

\begin{theorem} \label{cliff1} 
For all unit vectors $x \in \R^{n}$ with $\norm{x}_{2}=1$, the Clifford loader circuit $\mathcal{C}(x)$ implements the unitary $\Gamma(x)$. 
\end{theorem}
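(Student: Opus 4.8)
The plan is to evaluate $\mathcal{C}(x)$ explicitly for one particularly convenient choice of unary data loader, namely the nearest-neighbor spherical-coordinates cascade $D(x) = \prod_{k} RBS_{k,k+1}(\theta_k)$ whose angles are chosen so that the associated product of Givens rotations $U_D$ satisfies $U_D e_1 = x$. Because every gate in this loader acts on adjacent qubits, we have $FBS_{k,k+1}(\theta) = RBS_{k,k+1}(\theta)$, so $D_F(x) = D(x)$ and the Clifford loader is simply $\mathcal{C}(x) = D(x)\,(X\otimes I^{\otimes(n-1)})\,D(x)^{*}$. Since $X\otimes I^{\otimes(n-1)} = P_1 = \Gamma(e_1)$, it suffices to track how conjugation by $D(x)$ transforms $\Gamma(e_1)$, and then to appeal to Theorem \ref{action:cloader} (every standard basis state $\ket{S}$ is itself a subspace state, so the action on subspace states pins down the operator on all of $H$) to conclude that the value obtained is independent of the chosen data loader.

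The key step is a conjugation lemma describing the action of a single $RBS_{k,k+1}(\theta)$ on the generators $P_i = Z^{\otimes(i-1)}\otimes X\otimes I^{\otimes(n-i)}$: conjugation rotates the plane spanned by $P_k$ and $P_{k+1}$ and fixes every other generator. I would verify the four cases separately. For $i<k$ the generator does not touch qubits $k,k+1$ and is fixed. For $i=k$, the prefix $Z^{\otimes(k-1)}$ commutes through and identity (1) of Lemma \ref{bsconj} gives $RBS_{k,k+1}(\theta)\,(X\text{ on qubit }k)\,RBS_{k,k+1}(\theta)^{*} = \cos\theta\,(X\text{ on }k) + \sin\theta\,(Z\text{ on }k)(X\text{ on }k{+}1)$, i.e. $P_k \mapsto \cos\theta\,P_k + \sin\theta\,P_{k+1}$. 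For $i=k+1$ an analogous direct computation—which can also be read off from the fact that conjugation by a unitary is an automorphism preserving the anticommutation relations and therefore must act as a genuine rotation on the plane $\langle P_k,P_{k+1}\rangle$—yields $P_{k+1}\mapsto -\sin\theta\,P_k + \cos\theta\,P_{k+1}$. For $i>k+1$ the generator contains the factor $Z\otimes Z$ on qubits $k,k+1$, which is invariant by identity (2) of Lemma \ref{bsconj}, while everything else commutes with the gate, so $P_i$ is fixed. In short, conjugation by $RBS_{k,k+1}(\theta)$ sends $\Gamma(v)$ to $\Gamma(Gv)$, where $G$ is the Givens rotation acting in coordinates $k$ and $k+1$ (with the orientation fixed by the sign convention for $\theta$).

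Given this lemma, the conclusion is immediate by composing over the gates of the cascade: $D(x)\,\Gamma(e_1)\,D(x)^{*} = \Gamma(U_D e_1) = \Gamma(x)$, and since $\Gamma(e_1) = P_1 = X\otimes I^{\otimes(n-1)}$ this is exactly the statement $\mathcal{C}(x) = \Gamma(x)$. I expect the main obstacle to be the $i=k+1$ case, since Lemma \ref{bsconj} as stated only supplies the image of $X\otimes I$ and not of $Z\otimes X$; this requires either one extra explicit $4\times 4$ conjugation computation or the automorphism/anticommutation argument sketched above. A secondary bookkeeping point is to pin down the orientation of the induced rotation so that it agrees with the paper's convention for $G(i,j,\theta)$, and to confirm that the spherical-coordinates angles indeed realize $U_D e_1 = x$; both are routine once the conjugation lemma is in hand.
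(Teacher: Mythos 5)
Your proposal is correct and follows essentially the same route as the paper: both fix the adjacent-qubit spherical-coordinates cascade (so that $D_F(x)=D(x)$) and push $X\otimes I^{\otimes(n-1)}$ through the gates using the conjugation identities of Lemma \ref{bsconj}. The only difference is that you prove the slightly stronger per-gate statement $RBS_{k,k+1}(\theta)\,\Gamma(v)\,RBS_{k,k+1}(\theta)^{*}=\Gamma(Gv)$, which requires the extra case $P_{k+1}\mapsto-\sin\theta\,P_{k}+\cos\theta\,P_{k+1}$ (the image of $Z\otimes X$, not supplied by Lemma \ref{bsconj}); the paper sidesteps this because in its iterative expansion the operator conjugated at step $k$ never contains a generator beyond $P_{k}$, while your automorphism/anticommutation argument (with the sign pinned down by preservation of the product $P_kP_{k+1}$, or by one more explicit $4\times 4$ computation) closes that case correctly.
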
 
\begin{proof}
Given a unit vector $x \in \R^{n}$ it is possible to compute in time $O(n)$, a sequence of angles $\theta_{i}$ for $i\in [n-1]$ such that, 
\all{ 
x_{1} &= \cos(\theta_{1}) \nl
x_{2} &= \cos(\theta_{2}) \sin(\theta_{1})  \nl
x_{3} &= \cos(\theta_{3})  \sin(\theta_{1}) \sin(\theta_{2}) \nl
& \cdots \nl 
x_{n-1} &= \cos(\theta_{n-1}) \prod_{1 \leq i < (n-1)} \sin(\theta_{i}) \nl 
x_{n} &= \sin(\theta_{n-1}) \prod_{1 \leq i < (n-1)} \sin(\theta_{i}). 
} {angles} 
The angle sequence $(\theta_{1}, \theta_{2}, \cdots , \theta_{n-1})$ for $x$ is in correspondence with the spherical coordinate representation for $x$. 
A unary data loader $D(x):= RBS_{n, n-1}(\theta_{n-1})  RBS_{n-1, n-2}(\theta_{n-2}) \cdots RBS_{1,2} (\theta_{1})$ can be constructed as a sequence of $RBS$ gates with angles given as above. 
This data loader circuit $D(x)$ is a sequence of $(n-1)$ $RBS$ gates applied sequentially to consecutive qubits. 

As these $RBS$ gates are being applied to consecutive qubits, they are identical to the $FBS$ gates and $D(x)=D_F(x)$. The quantum circuit $D(x) (X \otimes I^{\otimes n-1}) D(x)^{*}$ is therefore a Clifford loader, and this circuit is illustrated in Figure \ref{fig1}. 
It remains to show that this Clifford loader circuit correctly implements the unitary $\Gamma(x)$.

Lemma \ref{bsconj} is used to establish the correctness of the construction by iteratively computing the results of the conjugations by $RBS$ gates, 
\al{ 
RBS_{1,2} (\theta_{1}) (X \otimes I^{\otimes (n-1)}) (RBS_{1,2} (\theta_{1}) )^{*} &= \cos(\theta_{1}) X \otimes I^{\otimes (n-1)} + \sin(\theta_{1}) Z\otimes X \otimes I^{\otimes (n-2)} \nl 
&= x_{1}  X \otimes I^{\otimes (n-1)} + \sin(\theta_{1})  Z \otimes X \otimes I^{\otimes (n-2)}. 
} 
Iterating this procedure and conjugating by $RBS_{2,3}(\theta_{2})$, 
\al{ 
RBS_{2,3} (\theta_{2}) (x_{1}  X \otimes I^{\otimes (n-1)} + \sin(\theta_{1})  Z \otimes X \otimes I^{\otimes (n-2)}) (RBS_{2,3} (\theta_{2}) )^{*} = &x_{1} X \otimes I^{\otimes (n-1)}  + x_{2} Z \otimes X \otimes I^{\otimes (n-2)} +  \nl 
&\sin(\theta_{1}) \sin(\theta_{2}) Z\otimes Z\otimes X \otimes I^{\otimes (n-3)}. 
} 
Continuing iteratively and using equation \eqref{angles} for the final step, it follows that after the $(n-1)$ conjugations by $RBS_{j, j+1} (\theta_{j})$ the operator $\Gamma(x)$ has been implemented.  

\end{proof} 

\noindent The above theorem provides an explicit construction for a linear depth Clifford loader, this circuit is illustrated in Figure \ref{fig1}. The angles for the $RBS$ gates are computed as in equation \eqref{angles}. 
A logarithmic depth Clifford loader circuit can be obtained using logarithmic depth unary data loaders, this construction is detailed in the following section.

\subsection{Logarithmic depth Clifford loaders} 
The logarithmic depth Clifford loader is obtained from the logarithmic depth construction of unary data loaders (see Figure \ref{fig3}). The $FBS$ gates can be implemented as described in proposition \ref{fbsimp}. 

However, the number of gates in the logarithmic depth Clifford loader can be further optimized by amortizing the cost of computing the parities. Let $D(x)$ be the logarithmic depth $n$-dimensional unary data loader with $FBS$ gates and let $\tilde{D}(x)$ be the circuit $D(x)$ composed with a sequence of $CX$ gates so that qubit 2 contains the parity  of all qubits from 2 through $n$ at the end of the computation. The Clifford loader circuit $\mathcal{C}(x)$ uses one half of the $CX$ gates 
in the implementation of the $FBS$ gate in Claim \ref{fbsimp}. Note that for a two-dimensional unit vector, we have $D(x)= \tilde{D}(x) = RBS(\theta)$.

The circuits $D(x)$ and $\tilde{D}(x)$ are then specified by the following recursive definitions,  
\al{ 
D(x) &= ( \tilde{D}(x^{1}) || D(x^2), FBS_{1, (n/2)+1} ( \theta_0 ) ) \nl 
\tilde{D}(x) &=  ( \tilde{D}(x_{1}) || \tilde{D}(x_{2}), FBS_{1, (n/2)+1} (\theta_0 ) , CX_{(n/2+2, n/2+1)} , CX_{( n/2+1, 2)}  ) 
} 
The unitary $D(x)$ implements the Clifford loader with $\mathcal{C}(x) = D(x) (X \otimes I^{\otimes (n-1)}) D(x)^{*})$, note that this construction optimizes the logarithmic depth Clifford loader circuit eliminating pairs of redundant $FBS$ gates. 

The circuit depth for $D(x)$ can be obtained from the recursive relations, let $d(n)$ and $d'(n)$ be the circuit depths for $D(x)$ and $\tilde{D}(x)$ as a function of dimension. We have $d(2)=d'(2)=1$ and from the recursion we have $d(n)= d'(n/2) + 3$ and $d'(n) = d'(n/2)+4$. Note that the gadget G has depth 3 and the $CX_{(n/2+2, n/2+1})$ can be performed in parallel with the third layer of $FBS_{1,(n/2)+1} ( \theta_0 )$, when we implement the circuit using these recursive relations. The explicit solution for these recurrences is $d(n) = 4 (\log n -1)$ where $n>2$ is a power of 2. The quantum circuits corresponding to this construction are illustrated in Figure \ref{fig2}.

\begin{figure}[H] \label{fig2} 
	\centering
	\includegraphics[scale = 0.8]{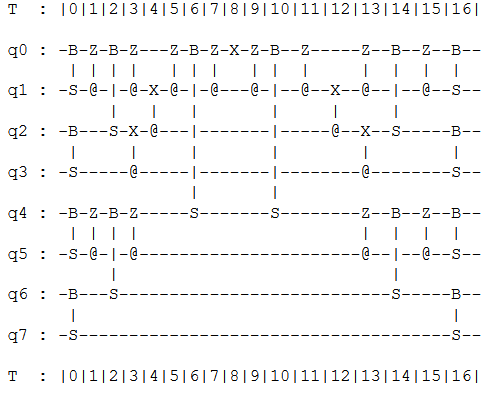}
	\caption{Quantum circuit corresponding to logarithmic depth construction of the Clifford loader for 8 qubits. The circuits to the left and right are the unary data loader in Figure \ref{fig3} with $RBS$ gates replaced by implementations of the corresponding $FBS$ gates. }
\end{figure}

\section{Quantum Determinant Sampling} 
We will now use the tools introduced in the previous sections to obtain three new applications in quantum machine learning. We start with
Determinant Sampling which is closely related to the topic of Determinantal Point Processes (DPPs) in machine learning \cite{derezinski2021determinantal}. Determinant sampling is an importance sampling technique with several applications to classical linear algebra and machine learning including least squares problems and 
low rank approximation.  The determinant sampling problem is defined for a full rank matrix $A \in \R^{n\times d}$ as follows. 
\begin{defn} 
Given a full rank matrix $A \in \R^{n \times d}$  the determinant sampling problem for $A$ is the problem of sampling from the probability distribution on $S \subset [n], |S| =d$ such that $p(S) = \frac{ det(A_{S})^{2} } { det(A^{T}A) }$ where $A_{S}$ denotes the sub-matrix of $A$ with rows indexed by $S$. 
\end{defn} 
\noindent  Note that the full rank condition is equivalent to $det(A^{T}A) > 0$ and required for the determinant distribution to be well-defined.  The Cauchy-Binet identity stated as Theorem \ref{cbi} shows that this is indeed a probability distribution as the probabilities sum up to $1$. 

For all  $S \subset [n], |S| =d$, the squared determinants $ det(A_{S})^{2} $ are invariant under sign flips and column operations applied to $A$. The determinant distribution therefore depends only on the underlying subspace $\mathcal{A}= Col(A)$ and not on the representing matrix $A$. It follows from the definition \ref{def2} of subspace states, that measuring the subspace state $\ket{Col(A)}$ in the standard basis produces exactly a sample from the determinant distribution for $A$. 

Classical determinant sampling algorithms also utilize the invariance of the determinant distribution under column operations. The first step for the classical algorithms is to orthogonalize the matrix $A$, that is to find an orthonormal matrix $\tilde{A}$ with the same column space as $A$. The algorithms then perform determinant sampling from the orthogonal matrix $\tilde{A}$. The best known classical determinant sampling algorithm with this approach requires $O(nd^{2})$ time to pre-process the matrix to have orthogonal columns and requires $O(d^{3})$ arithmetic operations to generate each sample from the determinant distribution \cite{derezinski2019minimax}. This algorithm was discovered recently and is an improvement on previous approaches that required $O(nd^{2})$ pre-processing and $O(d^{4})$ arithmetic operations for generating samples. There are also alternative classical algorithms for determinant sampling using Markov chain Monte Carlo methods that incur larger polynomial overheads. 

The pre-processing for the quantum determinant algorithms is the same as that for the classical algorithm, namely the matrix $A$ is orthogonalized to $\tilde{A}$ such that $Col(A)= Col(\tilde{A})$ in time $O(nd^{2})$. 

Note that the subspace state preparation algorithm \ref{SSP1} also provides an algorithm for quantum determinant sampling given an embedding of $A$ into the columns of a unitary matrix $U$. 
In this case as well, there is a pre-processing overhead of $O(nd^{2})$ for finding the embedding by orthogonalization of $A$ followed by computation of the Givens decomposition for $U$. 
For $d=O(n)$, the worst case Givens complexity is $O(n^{2})$, so the quantum determinant sampling algorithm has complexity $O(n^{2})$ compared to the classical $O(n^{3})$ algorithm. 
If an embedding of $A$ into a unitary $U$ having Givens complexity $O(n \log n)$ is known, then there is a potentially cubic speedup  as the quantum algorithm requires $O(n \log n)$ operations for generating a sample from the determinant distribution compared to the best known classical algorithm that still requires $O(n^{3})$ operations.

In this section we present another quantum determinant sampling algorithm (Algorithm \ref{QDS}) using the  Clifford loaders defined in section \ref{cliffload} for subspace state preparation. 
The quantum determinant sampling algorithm  \ref{QDS} is stated for matrices with orthonormal columns, the Clifford loaders used in the algorithm can be constructed in a single pass over the 
matrix in time $O(nd)$. 

\begin{algorithm} [H]
\caption{Quantum Determinant Sampling} \label{QDS} 
\begin{algorithmic}[1]
\REQUIRE Matrix $A \in \R^{n \times d}$ such that $A^{T} A = I_{d}$, let $a^{1}, a^{2}, \cdots, a^{n}$ be the columns of $A$. 
\ENSURE A sample from the determinant distribution $\Pr[S] = \frac{ det(A_{S})^{2} }  { det(A^{T}A) }$. 
\STATE Starting with state $\ket{0^{n}}$ apply the Clifford loaders corresponding to the $d$ columns of $A$, to get
\al{ 
\ket{\mathcal{A}}  = \mathcal{C}(a^{d})  \cdots \mathcal{C}(a^{2}) \mathcal{C}(a^{1}) \ket{0^{n} } 
} 
\STATE Measure $\ket{\mathcal{A}} $ in the standard basis to obtain bit string corresponding to $S \subset [n]$. Output $S$. 
\end{algorithmic}
\end{algorithm}

\noindent It follows from the logarithmic depth Clifford loader construction that the quantum circuit used in algorithm \ref{QDS} has depth $O(d \log n)$ and uses $O(nd \log n)$ gates. The correctness follows from Lemma \ref{action:cloader} 
which shows that $\mathcal{C}(a^{d})  \cdots \mathcal{C}(a^{2}) \mathcal{C}(a^{1}) \ket{0^{n} }$ is the subspace state corresponding to the column space of $A$
for matrices $A$ with orthonormal columns. 

We provide a different proof for the correctness of Algorithm \ref{QDS} in this section, this proof enables the efficient computation of the amplitudes for sequential compositions of the Clifford loaders. 
For sets $S, T \subset [n]$ let $P_{k}(T, S)$ denote the set of length $k$ paths from $T$ to $S$ on the hypercube $\mathcal{H}$. A path $p \in P_{k}(T, S)$ is given as a sequence $(p(1), p(2), \cdots, p(k)) \in [n]^{k}$, where the value $p(k)$ represents the index of the bit flipped at the $k$-th step of the path. 
\begin{lemma} \label{signs} 
Let $x_{1}, x_{2}, \cdots, x_{k} \in \R^{n}$, then the coefficient of $e_{T \oplus S}$ in $(\prod_{i \in [k]} \mathcal{C}(x_{i}) ) \ket{e_{T}}$ is, 
\al{ 
\sum_{p \in P_{k}(T, S)} (-1)^{Sgn(p, T)} (x_{1})_{p(1)}, (x_{2})_{p(2)}, \cdots (x_{k})_{p(k)} 
} 
where $Sgn(p,T)$ is the parity of the number of swaps made to sort $(p, T)$ in ascending order where $T$ is already sorted in ascending order. 
\end{lemma}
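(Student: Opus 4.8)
The plan is to expand the operator product directly using the explicit action of a single loader and to organize the resulting monomials by the lattice path they trace on the hypercube. By Theorem \ref{cliff1} we have $\mathcal{C}(x)=\Gamma(x)$, and \eqref{cliff} tells us that a single factor acts on a standard basis state $\ket{S}$ by flipping one coordinate $i$, scaling by $x_{i}$, and contributing the sign $(-1)^{\oplus_{j<i}s_{j}}$, i.e. the parity of the occupied coordinates lying strictly below $i$. Applying the $k$ loaders in sequence, a choice of flipped coordinate at each step specifies a length-$k$ path $p=(p(1),\dots,p(k))$ issuing from $T$; its endpoint is $T\oplus p(1)\oplus\cdots\oplus p(k)=T\oplus S$ and its scalar weight is $(x_{1})_{p(1)}\cdots(x_{k})_{p(k)}$ times the product of the $k$ per-step signs. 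Collecting all paths with a common net displacement $S$ reproduces the stated sum, provided the accumulated sign equals $(-1)^{Sgn(p,T)}$.

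Thus everything reduces to a single parity identity. Writing $c_{m}$ for the sign exponent contributed by the $m$-th factor, namely the parity of the occupied coordinates below $p(m)$ in the intermediate state present when that factor acts, I would compute $\sum_{m}c_{m}$ combinatorially. The occupied coordinates below $p(m)$ split into those inherited from $T$ and those produced by the other flips; because every intermediate occupation is a symmetric difference of $T$ with a set of flips, the flip contribution telescopes to a count of the relevant earlier flips landing below $p(m)$, giving $c_{m}\equiv |\{t\in T:t<p(m)\}| + |\{\text{relevant }m':p(m')<p(m)\}|\pmod 2$. Summed over $m$, the first terms total the inversions between the $p$-block and the (sorted) $T$-block, while the second terms total the inversions internal to $p$. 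On the other side, $Sgn(p,T)$ is the parity of the inversion count of the concatenation $(p,T)$, which—since $T$ is already sorted—is exactly the sum of those same two inversion counts, so the two parities coincide. An equivalent route is induction on $k$, inserting one flip at a time and checking that the incremental change of $Sgn$ matches the newly introduced per-step sign.

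The step I expect to be the main obstacle is precisely this sign bookkeeping, together with getting the ordering conventions exactly right. Two points require care. First, the identity comes out correctly only if the pairing between the factor $x_{m}$ and the path index $p(m)$ is consistent with the order in which the loaders are applied: the internal inversion count of $p$ is reproduced by the flip contribution only under the correct convention, whereas the opposite convention would replace inversions by ascents and inject a spurious $\binom{k}{2}$-type term, so the application order must be fixed at the outset. Second, the argument must be robust to repeated indices, since a coordinate may be flipped several times along a path and may also coincide with an element of $T$ or with an earlier flip; I would handle this by working throughout with parities of indicator bits, so that repeated flips cancel automatically, and by breaking ties in the sorting sequence according to original position, so that $Sgn(p,T)$ is unambiguously the inversion count and equal entries contribute neither an inversion nor a per-step sign.
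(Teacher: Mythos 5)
Your proposal follows essentially the same route as the paper: expand $\prod_i \mathcal{C}(x_i)=\prod_i\Gamma(x_i)$ term by term, index the resulting monomials by length-$k$ hypercube paths from $T$, and identify the accumulated per-step parity signs with the number of swaps needed to sort the concatenation $(p,T)$. Your inversion-count bookkeeping is just a more explicit rendering of the paper's ``bubble sort'' argument (and your attention to the application-order convention and to repeated indices addresses exactly the points the paper leaves implicit), so the proof is correct and no further comparison is needed.
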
 
\begin{proof} 
Expanding the terms using the fact that $\mathcal{C}(x_{i})= \Gamma(x_{i})$, it follows that $(\prod_{i \in [k]} \mathcal{C}(x_{i}) ) \ket{e_{T}}$ is a linear combination of terms of the form $\pm (x_{1})_{p(1)}, (x_{2})_{p(2)}, \cdots (x_{k})_{p(k)}  \ket{T \oplus \chi(p)} $ where $\chi(p) \subset [n]$ is the 
characteristic vector of the length $k$ path $p$. 

It follows that if $\ket{T \oplus \chi(p) } = \ket{ T \oplus S }$ then $p \in P_{k}(T,S)$. In order to determine the sign, observe that $x_{p(k), k}$ incurs a sign factor 
of $(-1)^{ T \cap [p(k)-1]}$ which is the same as the parity of swaps needed to sort $T': =(p(k), T)$ in ascending order given that $T$ is already sorted. The argument can be continued iteratively, this is analogous to 
sorting the list $(p, T)$ where $T$ is a sorted list using the Bubble sort algorithm. The overall sign factor 
is $Sgn(p,T)$, the parity of the number of swaps needed to sort $(p, T)$ in ascending order. 
\end{proof} 
\noindent Let $\mathcal{S}_{d}$ denote the symmetric group on $d$ symbols and $e_{\O}=0^n$ represent the empty set. A useful corollary of the above Lemma is the following: 
\begin{corollary}  \label{det} 
Let $S \subset [n], |S|=d$, then the coefficient of $e_{S}$ in $(\prod_{i \in [d]} \mathcal{C}(x_{i}) ) \ket{e_{\O} }$ is given by $\sum_{\sigma \in \mathcal{S}_{d}} (-1)^{Sgn(\sigma)} (x_{1})_{\sigma(s_1)} (x_{2})_{\sigma(s_2)} \cdots 
(x_{d})_{\sigma(s_d)}$.  
\end{corollary}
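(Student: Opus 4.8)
The plan is to obtain this as the specialization of Lemma \ref{signs} to the case $T = \emptyset$ (so that $\ket{e_T} = \ket{0^n}$) and $k = d = |S|$, and then to show that in this regime the paths counted by the lemma are exactly the orderings of $S$, so that the path-sum collapses into the Leibniz expansion of a determinant. Applying Lemma \ref{signs} directly gives that the coefficient of $e_{\emptyset \oplus S} = e_S$ in $(\prod_{i \in [d]} \mathcal{C}(x_i)) \ket{0^n}$ equals $\sum_{p \in P_d(\emptyset, S)} (-1)^{Sgn(p, \emptyset)} (x_1)_{p(1)} \cdots (x_d)_{p(d)}$, so the entire task reduces to understanding the set $P_d(\emptyset, S)$ and the sign $Sgn(p, \emptyset)$.

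First I would characterize the length-$d$ paths from $\emptyset$ to $S$. Each step of a path flips a single bit and hence changes the Hamming weight by exactly $\pm 1$; since the path must travel from weight $0$ to weight $d$ in exactly $d$ steps, every step is forced to increase the weight by $1$. Consequently no index is flipped twice, and the indices flipped are precisely the elements of $S$, each exactly once. Thus $p \mapsto (p(1), \ldots, p(d))$ identifies $P_d(\emptyset, S)$ with the set of orderings of $S$: writing $S = \{s_1 < s_2 < \cdots < s_d\}$ in ascending order, each path corresponds to a unique $\sigma \in \mathcal{S}_d$ via $p(i) = s_{\sigma(i)}$, and this correspondence is a bijection.

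Next I would match the monomials and the signs under this bijection. Under $p(i) = s_{\sigma(i)}$ the product $(x_1)_{p(1)} \cdots (x_d)_{p(d)}$ becomes $\prod_{i \in [d]} (x_i)_{s_{\sigma(i)}}$, which is exactly the monomial appearing in the corollary. For the sign, $Sgn(p, \emptyset)$ is by definition the parity of the number of swaps needed to sort the sequence $(p(1), \ldots, p(d)) = (s_{\sigma(1)}, \ldots, s_{\sigma(d)})$ into ascending order; since $j \mapsto s_j$ is order preserving, the inversions of this sequence coincide with those of $(\sigma(1), \ldots, \sigma(d))$, so the parity equals $Sgn(\sigma)$ and $(-1)^{Sgn(p,\emptyset)} = (-1)^{Sgn(\sigma)}$. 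Summing over the bijection $P_d(\emptyset, S) \leftrightarrow \mathcal{S}_d$ then turns the path-sum into $\sum_{\sigma \in \mathcal{S}_d} (-1)^{Sgn(\sigma)} (x_1)_{s_{\sigma(1)}} \cdots (x_d)_{s_{\sigma(d)}}$, as claimed.

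I expect the only genuine content to be the weight-monotonicity observation of the second paragraph, namely that reaching Hamming weight $d$ in $d$ steps forbids any bit from being flipped back, which is what forces the contributing paths to be permutations of $S$. After that, the sign bookkeeping is simply the remark that the order-preserving relabeling $s_j \mapsto j$ preserves the inversion count, and everything else is direct substitution into Lemma \ref{signs}.
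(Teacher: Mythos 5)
Your proposal is correct and follows the same route as the paper: the paper's proof simply notes that there is exactly one length-$d$ path from $\emptyset$ to $S$ up to permutation and invokes Lemma \ref{signs}, which is precisely your bijection between $P_d(\emptyset,S)$ and orderings of $S$. Your write-up just makes explicit the weight-monotonicity argument and the sign bookkeeping that the paper leaves implicit.
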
 
\begin{proof} 
There is exactly one length $d$ path from $\O \to S$ up to permutation. The result now follows directly from Lemma \ref{signs}. 
\end{proof}

\noindent We are now ready to demonstrate the main result for this section, a quantum algorithm for determinant sampling for matrices with orthonormal rows,   
\begin{lemma} \label{signs} 
Let $A \in \R^{n \times d}$ be such that $A^{T} A = I_{d}$, then, 
\all{ 
\en{ \prod_{i \in [d]} \mathcal{C}(a^{i})  }  \ket{e_{\O}} = \sum_{|S|=d} det(A_{S}) \ket{e_{S}}  
} {six} 
\end{lemma}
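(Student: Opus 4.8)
The plan is to prove \eqnref{six} as an equality of vectors by computing every nonzero amplitude of the left-hand side and matching it against the right-hand side. Write $\Psi = \left(\prod_{i\in[d]}\mathcal{C}(a^i)\right)\ket{e_{\O}}$. First I would fix a size-$d$ set $S=\{s_1<\cdots<s_d\}\subset[n]$ and read off the amplitude $\braket{e_S}{\Psi}$ directly from Corollary \ref{det} with $x_i=a^i$, obtaining $\braket{e_S}{\Psi}=\sum_{\sigma\in\mathcal{S}_d}(-1)^{Sgn(\sigma)}\prod_{k=1}^d (a^k)_{s_{\sigma(k)}}$. Since $(a^k)_{s_j}=A_{s_j,k}=(A_S)_{jk}$, this is exactly the Leibniz expansion of $det(A_S)$, so $\braket{e_S}{\Psi}=det(A_S)$ for every $S$ with $|S|=d$. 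Hence the Hamming-weight-$d$ component of $\Psi$ is precisely $\sum_{|S|=d}det(A_S)\ket{e_S}$, the right-hand side of \eqnref{six}.

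It then remains to show that $\Psi$ equals its weight-$d$ component, i.e. that all other components vanish. For this I would use a normalization argument. Each $\mathcal{C}(a^i)=\Gamma(a^i)$ is a reflection by Theorem \ref{cliff1}, hence unitary, so $\Psi$ is a unit vector. On the other hand, by the computation above the squared norm of its weight-$d$ component equals $\sum_{|S|=d}det(A_S)^2$, which by the Cauchy-Binet identity (Claim \ref{cbi}) with $X=Y=A$ is $det(A^TA)=det(I_d)=1$. Since the weight-$d$ component alone already carries the full unit norm of $\Psi$, every other component must vanish, establishing \eqnref{six}. As a sanity check one also sees directly that no weight can exceed $d$: by \eqref{cliff} each $\Gamma(a^i)$ flips a single bit, so $d$ applications starting from the weight-$0$ state $\ket{e_{\O}}$ reach Hamming weight at most $d$.

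The main obstacle is the sign bookkeeping: matching the permutation sum of Corollary \ref{det} to the Leibniz formula requires confirming that $(-1)^{Sgn(\sigma)}=\mathrm{sgn}(\sigma)$, where $Sgn(\sigma)$ is the parity of the number of swaps needed to sort $(s_{\sigma(1)},\dots,s_{\sigma(d)})$, and this hinges on the ascending ordering of both $S$ and the path being tracked consistently. The other point worth emphasizing is exactly where the hypothesis $A^{T}A=I_d$ enters: orthonormality of the columns is what makes the weight-$d$ mass sum to $1$ through Cauchy-Binet, which is in turn what forces the lower-weight amplitudes to cancel. An alternative to the normalization argument would be to expand $\braket{e_T}{\Psi}$ for $|T|<d$ via the path lemma and note that the backtracking paths group into inner-product factors $\braket{a^i}{a^j}=\delta_{ij}$ that annihilate all such terms; but the global norm argument is cleaner and sidesteps this case analysis entirely.
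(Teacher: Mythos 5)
Your proof is correct and follows essentially the same route as the paper: both compute the weight-$d$ amplitudes via Corollary \ref{det} and the Leibniz expansion to get $\det(A_S)$, then invoke Cauchy--Binet with $A^TA=I_d$ to conclude exact equality. Your version usefully makes explicit the step the paper leaves implicit, namely that unitarity of the Clifford loaders forces all other Hamming-weight components to vanish once the weight-$d$ part carries unit norm.
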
 
\begin{proof} 
Corollary \ref{det} implies that for $|S|=d$ the coefficient of $\ket{e_{S}} $ in $\en{ \prod_{i \in [d]} \mathcal{C}(a^{i})  } \ket{e_{\phi}} $  is given by $det(A_{S}) = \sum_{\sigma \in \mathcal{S}_{d}} (-1)^{Sgn(\sigma)}
a_{1, \sigma(s_1)} a_{2, \sigma(s_2)} \cdots a_{d, \sigma(s_d)}$ using the Laplace expansion of the determinant. Further, by the Cauchy-Binet identity 
$\sum_{|S|=d} det(A_{S})^{2} = det(A^{T} A ) = 1$ as the matrix has orthonormal rows. It follows that equation \eqref{six} is an exact equality. 
\end{proof} 

\noindent This establishes the correctness of the quantum determinantal sampling algorithm \ref{QDS}. Each of the $d$ circuits 
can be implemented with $O(n \log n)$ gates and depth $O(\log n)$, yielding the following result.

\begin{theorem} (Quantum Determinal Sampling)
Given $A \in \R^{n \times d}$, there is a quantum determinant sampling algorithm that requires $O(nd^{2})$ pre-processing to orthogonalize $A$, and generates subsequent samples 
using a quantum circuit with $O(nd \log n)$ gates and with total depth $O(d \log n)$. 
\end{theorem}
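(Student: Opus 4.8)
The plan is to assemble the theorem from pieces that are already in place: reduce to the orthonormal case by a standard pre-processing step, invoke equation~\eqref{six} for correctness, and then read off the resource bounds from the logarithmic-depth Clifford loader construction of the previous section.

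First I would handle the pre-processing and the reduction to orthonormal columns. Since the squared sub-determinants $\det(A_S)^2$ are invariant under column operations, the determinant distribution $p(S) = \det(A_S)^2/\det(A^T A)$ depends only on the column space $Col(A)$. Hence I would orthogonalize $A$, producing $\tilde{A} \in \R^{n \times d}$ with $\tilde{A}^T \tilde{A} = I_d$ and $Col(\tilde{A}) = Col(A)$; this costs $O(nd^2)$ and does not change the target distribution. It then suffices to sample from the determinant distribution of $\tilde{A}$, which by Claim~\ref{cbi} satisfies $\det(\tilde{A}^T \tilde{A}) = 1$, so that $p(S) = \det(\tilde{A}_S)^2$.

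Next I would establish correctness. Running Algorithm~\ref{QDS} on $\tilde{A}$ applies the Clifford loaders $\mathcal{C}(\tilde{a}^1), \dots, \mathcal{C}(\tilde{a}^d)$ in sequence to $\ket{e_{\O}} = \ket{0^n}$. By equation~\eqref{six} the resulting state equals $\sum_{|S|=d} \det(\tilde{A}_S) \ket{e_S}$, i.e.\ the subspace state $\ket{Col(\tilde{A})}$. Measuring in the standard basis therefore returns $S$ with probability $\det(\tilde{A}_S)^2 = p(S)$, as required. The same conclusion follows by iterating Theorem~\ref{action:cloader}, since each successive column of the orthonormal $\tilde{A}$ is orthogonal to the span of the previous ones, so each Clifford loader appends exactly one basis vector to the growing subspace.

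Finally I would carry out the resource count, which is where the only real work lies. Each Clifford loader $\mathcal{C}(x) = D_F(x)(X \otimes I^{\otimes(n-1)})D_F(x)^*$ is built from the logarithmic-depth unary data loader, whose $O(n)$ gates act in $O(\log n)$ layers, with every constituent $FBS_{ij}$ compiled as in Proposition~\ref{fbsimp} using an $RBS$ gate plus $O(|i-j|)$ two-qubit $CX$/$CZ$ gates arranged in depth $O(\log|i-j|)$. The main obstacle is to argue that the parity gadgets do not blow up the per-loader gate count: naively their cost is $\sum_{ij}|i-j|$, which is too large, so I would appeal to the amortized construction of the previous subsection, which shares the parity computations across the loader and establishes that each $\mathcal{C}(x)$ uses $O(n \log n)$ gates in depth $O(\log n)$. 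Composing the $d$ loaders sequentially then yields total depth $O(d \log n)$ and $O(nd \log n)$ gates, which together with the $O(nd^2)$ pre-processing gives the stated bounds.
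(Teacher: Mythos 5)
Your proposal is correct and follows essentially the same route as the paper: orthogonalize in $O(nd^{2})$ time using the column-operation invariance of the determinant distribution, invoke the identity $\prod_{i}\mathcal{C}(a^{i})\ket{0^{n}}=\sum_{|S|=d}\det(A_{S})\ket{S}$ (equivalently Theorem~\ref{action:cloader}) for correctness, and read off the $O(nd\log n)$ gate count and $O(d\log n)$ depth from the logarithmic-depth Clifford loader. The only slight imprecision is your worry that the naive parity-gadget cost is too large; for the logarithmic-depth loader the sum $\sum|i-j|$ over its $FBS$ gates is already $O(n\log n)$, so the amortized construction improves constants rather than the asymptotic bound.
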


Note that given as input the matrix $A \in \R^{n \times d}$ with orthonormal columns, the quantum determinant sampling algorithm \ref{QDS} uses $\tilde{O}(nd)$ operations for pre-processing (reading the input matrix and computing the parameters of the Clifford loaders), while the algorithm based on the Givens decomposition needs to find the Givens decomposition decomposition of the input matrix $A$ which takes $O(nd^2)$ operations.

\section{Singular value estimation for compound matrices} 
The efficient construction of Givens circuits $\mathcal{G}(U)$ corresponding to a unitary $U$ can be used to obtain quantum linear algebra 
algorithms that operate on subspaces. The notion of a compound matrix \cite{horn2012matrix} is key to obtaining this new family of quantum algorithms. 
\begin{defn} \label{def:cmpd} 
[Compound matrix] Given a matrix $A \in \R^{n \times n}$, the compound matrix $\mathcal{A}^{k}$ for $k \in [n]$ is the $\binom{n}{k}$ dimensional matrix with entries 
$\mathcal{A}^{k}_{IJ} = det(A_{IJ})$ where $I$ and $J$ are subsets of rows and columns of $A$ with size $k$. 
\end{defn} 

\noindent Compound matrices preserve the property of being unitary (orthogonal), that is the compound matrices $\mathcal{U}^{k}$ for a unitary (orthogonal) matrix $U$ are also unitary. 
Compound matrices are multiplicative, this property is a consequence of the Cauchy Binet identity, 
\begin{claim} 
For matrices $A, B \in \R^{n \times n}$ and $0\leq d\leq n$, the multiplicative property $ (\mathcal{AB})^{k}= \mathcal{A}^{k} \mathcal{B}^{k} $ holds. 
\end{claim}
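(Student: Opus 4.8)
The plan is to verify the identity entrywise. Fixing size-$k$ subsets $I, J \subseteq [n]$ indexing rows and columns, I would compute the $(I,J)$ entry of each side. By the definition of the compound matrix, $((\mathcal{AB})^{k})_{IJ} = \det((AB)_{IJ})$, while ordinary matrix multiplication gives $(\mathcal{A}^{k} \mathcal{B}^{k})_{IJ} = \sum_{|K|=k} \det(A_{IK}) \det(B_{KJ})$, the sum ranging over size-$k$ subsets $K \subseteq [n]$. Hence it suffices to prove the single scalar identity $\det((AB)_{IJ}) = \sum_{|K|=k} \det(A_{IK}) \det(B_{KJ})$.

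The key observation is that the $k \times k$ submatrix $(AB)_{IJ}$ factors as a product of a $k \times n$ and an $n \times k$ matrix. Writing $A_{I,:}$ for the $k \times n$ block of rows $I$ of $A$ and $B_{:,J}$ for the $n \times k$ block of columns $J$ of $B$, the entry $(AB)_{ab} = \sum_{l} A_{al} B_{lb}$ for $a \in I$, $b \in J$ shows directly that $(AB)_{IJ} = A_{I,:}\, B_{:,J}$. This is the structural fact that lets Cauchy--Binet enter.

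Now I would apply the Cauchy--Binet identity (Claim \ref{cbi}). To match its stated form $\det(X^{T} Y) = \sum_{|K|=k} \det(X_K)\det(Y_K)$ for $X, Y \in \R^{n \times k}$, set $X = (A_{I,:})^{T}$ and $Y = B_{:,J}$, both $n \times k$, so that $X^{T} Y = A_{I,:}\, B_{:,J} = (AB)_{IJ}$. Then $X_K$, the restriction of $(A_{I,:})^{T}$ to rows $K$, has determinant $\det((A_{IK})^{T}) = \det(A_{IK})$, and $Y_K = B_{KJ}$, so $\det(Y_K) = \det(B_{KJ})$. Substituting yields exactly $\det((AB)_{IJ}) = \sum_{|K|=k} \det(A_{IK}) \det(B_{KJ})$, and since $I, J$ were arbitrary this establishes $(\mathcal{AB})^{k} = \mathcal{A}^{k} \mathcal{B}^{k}$.

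The only real care needed is index bookkeeping: one must confirm that selecting rows $I$ and columns $J$ of the product commutes with the factorization into $A_{I,:}$ and $B_{:,J}$, and that the Cauchy--Binet summation index $K$ ranges over the shared size-$k$ set linking the columns of $A_{I,:}$ to the rows of $B_{:,J}$, so that the summand determinants are genuinely $\det(A_{IK})$ and $\det(B_{KJ})$ and not mismatched minors. There is no analytic difficulty here — the result is a purely algebraic consequence of Cauchy--Binet — so the main (modest) obstacle is simply aligning the transpose convention of the paper's version of the identity with the factorization of the submatrix.
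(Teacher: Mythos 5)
Your proof is correct and follows exactly the route the paper intends: the paper gives no explicit proof, only the remark that multiplicativity ``is a consequence of the Cauchy Binet identity,'' and your entrywise argument via the factorization $(AB)_{IJ} = A_{I,:}\,B_{:,J}$ followed by Cauchy--Binet is the standard way to fill in that one-line justification. The index bookkeeping you flag (matching $X_K = (A_{IK})^T$ and $Y_K = B_{KJ}$ to the paper's transposed statement of Claim~\ref{cbi}) is handled correctly.
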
 
\noindent The relation between the Givens circuits $\mathcal{G}(U)$ and the compound matrices for a unitary $U$ is given by the next Lemma which shows that 
compound matrices embed into the unitaries implemented by the Givens circuits. 

\begin{lemma} \label{lem:givens} 
\noindent The Givens circuit $\mathcal{G}(U)$ implements the unitary $\oplus_{k =0}^{n} \mathcal{U}^{k}$, i.e. the Givens circuit is the direct sum of the compound matrices. 
\end{lemma}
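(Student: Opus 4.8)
The plan is to exploit the Hamming-weight grading of the Hilbert space. Writing $H = \bigoplus_{k=0}^{n} H_k$ where $H_k$ is spanned by the Hamming-weight-$k$ strings, I first observe that every $FBS$ gate preserves Hamming weight, so the Givens circuit $\mathcal{G}(U)$ is block diagonal with respect to this grading: $\mathcal{G}(U) = \bigoplus_{k=0}^{n} M_k$ for some unitaries $M_k$ on $H_k$. Since $H_k$ carries an orthonormal basis indexed by the size-$k$ subsets $S \subset [n]$, which is exactly the index set of the compound matrix $\mathcal{U}^k$ (Definition \ref{def:cmpd}), it suffices to show that $M_k$ and $\mathcal{U}^k$ have the same matrix entries in this basis.

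To compute the entries of $M_k$, I would invoke Theorem \ref{thm: action}: for $|S| = k$ it gives $\mathcal{G}(U)\ket{S} = \ket{Col(U_S)}$, where $U_S$ is the $n \times k$ matrix of columns of $U$ indexed by $S$. Expanding the subspace state by Definition \ref{def2} and writing $U_{TS}$ for the $k\times k$ submatrix of $U$ on rows $T$ and columns $S$,
\[
\mathcal{G}(U)\ket{S} = \sum_{|T|=k} det((U_S)_T)\,\ket{T} = \sum_{|T|=k} det(U_{TS})\,\ket{T}.
\]
Reading off the coefficient gives $\bra{T}\mathcal{G}(U)\ket{S} = det(U_{TS})$, which is precisely the entry $\mathcal{U}^k_{TS}$ of the compound matrix. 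Matching entries over all pairs $S, T$ of size $k$ identifies $M_k = \mathcal{U}^k$, and taking the direct sum over $k$ yields the claim.

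The extreme blocks $k=0$ and $k=n$ I would dispatch directly: the states $\ket{0^n}$ and $\ket{1^n}$ are fixed by every $FBS$ gate, while $\mathcal{U}^0$ is the empty determinant $1$ and $\mathcal{U}^n = det(U)=1$ since each Givens rotation has determinant $1$; both scalar blocks therefore agree with $\mathcal{G}(U)$. The only real delicacy is bookkeeping of conventions: one must check that the row/column roles in $\mathcal{U}^k_{IJ} = det(A_{IJ})$ align with the standard matrix-element convention $\bra{T}\cdot\ket{S}$ and, crucially, that the sign carried by $det((U_S)_T)$ — which depends on listing the rows of $T$ in ascending order — matches the ordering used in the compound matrix. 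Once these sign and ordering conventions are fixed consistently (as they already are in Definition \ref{def2} and the proof of Theorem \ref{givens}), the identification is immediate.

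As an alternative that bypasses Theorem \ref{thm: action}, I could verify the statement for a single $FBS_{ij}(\theta)$ gate, where the corresponding $\mathcal{U}^k$ is computed directly from an elementary two-row Givens rotation, and then bootstrap to arbitrary Givens circuits using the multiplicativity $(\mathcal{A}\mathcal{B})^k = \mathcal{A}^k\mathcal{B}^k$ together with the fact that composing Givens circuits composes their defining unitaries. I expect the direct computation above to be cleaner, however, since Theorem \ref{thm: action} already encapsulates exactly the work needed.
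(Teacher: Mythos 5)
Your proof is correct and follows essentially the same route as the paper's: both reduce to each Hamming-weight block via the weight-preservation of $FBS$ gates, invoke Theorem \ref{thm: action} to get $\mathcal{G}(U)\ket{S} = \ket{Col(U_S)}$, and identify the amplitudes $det(U_{TS})$ with the compound-matrix entries. You simply make explicit the expansion step and the edge cases $k=0,n$ that the paper states more tersely.
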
 
\begin{proof} 
The Givens circuit $\mathcal{G}(U)$ is Hamming weight preserving, it is therefore sufficient to prove that $\mathcal{G}(U)= \mathcal{U}^{k}$ when restricted to the space $\mathcal{H}_{k}$
for $0\leq k\leq n$. This holds for $k=0$ and $k=n$ as these one dimensional subspaces are invariant under both $\mathcal{G}(U)$ and $ \mathcal{U}^{k}$.
The definition of the compound matrix  \ref{def:cmpd} implies that $\mathcal{U}^{k} \ket{S} = \ket{Col(U_{S})}$ for $1\leq k< n$, this is identical to the action of the Givens circuit $\mathcal{G}(U)$ by Theorem 
\ref{thm: action}.  
\end{proof} 

\noindent The eigenvectors and eigenvalues of a compound matrix $\mathcal{A}^{k}$ are determined completely by the spectrum of $A$. If the eigenvalues and eigenvectors of $A$ are $(v_{i}, \lambda_{i})$ then
the spectrum of $\mathcal{A}^{k}$ is indexed by subsets $S \subset [n], |S|=k$, the eigevectors are the subspace states $\ket{ Col(V_{S})}$ and the corresponding eigenvectors are $\prod_{i \in S} \lambda_{i}$.  
That is the eigenvectors for $\mathcal{A}^{k}$ are the subspace states corresponding to a group of $k$ eigenvectors for $A$ and the corresponding eigenvalue is the product of the corresponding eigenvalues for $A$. 

The above Lemma \ref{lem:givens} thus provides an explicit description of the spectrum for the Givens circuits in terms of the eigenvectors and eigenvalues of $U$. 

\begin{lemma} 
Let $(v_{i}, e^{i\theta_{i}})$ be the eigenvectors and eigenvalues of $U$, the eigenvectors of the Givens circuit $\mathcal{G}(U)= \oplus_{i =0}^{n} U^{k}$ are 
$\ket{Col(V_{S})}$ and the corresponding eigenvalues are $e^{i \sum_{i\in S} \theta_{i}}$. 
\end{lemma}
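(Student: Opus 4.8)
The plan is to exploit the block structure provided by \lemref{lem:givens}, which identifies $\mathcal{G}(U)$ with the direct sum $\bigoplus_{k=0}^{n}\mathcal{U}^{k}$ acting block-diagonally across the Hamming-weight sectors $\mathcal{H}_{k}$. Since each $\mathcal{H}_{k}$ is invariant under the Givens circuit, it suffices to diagonalize each compound matrix $\mathcal{U}^{k}$ separately and then assemble the resulting eigenbases via the direct sum.

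First I would diagonalize $U$ itself: writing $U = VDV^{-1}$ where $V$ has the eigenvectors $v_{i}$ as columns and $D = \mathrm{diag}(e^{i\theta_{1}},\dots,e^{i\theta_{n}})$, I would apply the multiplicative property $(\mathcal{AB})^{k} = \mathcal{A}^{k}\mathcal{B}^{k}$ twice to obtain $\mathcal{U}^{k} = \mathcal{V}^{k}\,\mathcal{D}^{k}\,(\mathcal{V}^{-1})^{k}$. The key simplification is that $\mathcal{V}^{k}(\mathcal{V}^{-1})^{k} = (VV^{-1})^{k} = \mathcal{I}^{k} = I$, so $(\mathcal{V}^{-1})^{k} = (\mathcal{V}^{k})^{-1}$ and the expression is a genuine similarity transformation $\mathcal{U}^{k} = \mathcal{V}^{k}\,\mathcal{D}^{k}\,(\mathcal{V}^{k})^{-1}$; this route reads off the eigenvectors directly from the conjugating matrix and avoids any separate argument about conjugate transposes.

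Next I would compute $\mathcal{D}^{k}$ explicitly. For a diagonal $D$, any submatrix $D_{IJ}$ with $|I|=|J|=k$ and $I\neq J$ contains a zero row, so $\det(D_{IJ})=0$; hence $\mathcal{D}^{k}$ is diagonal with $(\mathcal{D}^{k})_{SS} = \det(D_{SS}) = \prod_{i\in S} e^{i\theta_{i}} = e^{i\sum_{i\in S}\theta_{i}}$. Thus $\ket{S}$ is an eigenvector of $\mathcal{D}^{k}$ with this eigenvalue, and the similarity by $\mathcal{V}^{k}$ shows $\mathcal{V}^{k}\ket{S}$ is an eigenvector of $\mathcal{U}^{k}$ with the same eigenvalue. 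Finally I would invoke $\mathcal{V}^{k}\ket{S} = \ket{Col(V_{S})}$, established in the proof of \lemref{lem:givens} via \thmref{thm: action}, to identify these eigenvectors as exactly the subspace states $\ket{Col(V_{S})}$.

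To conclude, ranging over $k=0,\dots,n$ and all size-$k$ subsets $S$, the states $\ket{Col(V_{S})}$ form a complete eigenbasis of $\mathcal{G}(U)$ with eigenvalues $e^{i\sum_{i\in S}\theta_{i}}$: within each sector $\mathcal{H}_{k}$ the unitary $\mathcal{V}^{k}$ carries the orthonormal standard basis $\{\ket{S}\}_{|S|=k}$ to the $\binom{n}{k}$ states $\{\ket{Col(V_{S})}\}_{|S|=k}$, and the direct-sum structure of \lemref{lem:givens} stitches the sectors together. The whole argument is light once the tools are in place; the only points needing care are the vanishing of the off-diagonal compound entries of $D$ (the sole genuine computation) and completeness across sectors, which is immediate from the invertibility of each $\mathcal{V}^{k}$ on $\mathcal{H}_{k}$.
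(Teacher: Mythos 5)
Your proof is correct, but it takes a genuinely different route from the paper's. The paper's proof is a one-line direct computation: it applies \thmref{givens} (the fact that the Givens circuit sends $\ket{Col(X)}$ to $\ket{Col(UX)}$) to the candidate eigenvector itself, obtaining $\mathcal{G}(U)\ket{Col(V_{S})} = \ket{Col(UV_{S})}$, and then observes that $UV_{S} = V_{S}\,\mathrm{diag}(e^{i\theta_{i}})_{i \in S}$ so that the determinants all pick up the common factor $e^{i\sum_{i\in S}\theta_{i}}$; it must separately remark that \thmref{givens} continues to hold for complex column vectors. You instead work entirely at the level of compound matrices: starting from \lemref{lem:givens} you diagonalize each block via the multiplicativity $\mathcal{U}^{k} = \mathcal{V}^{k}\mathcal{D}^{k}(\mathcal{V}^{k})^{-1}$, compute $\mathcal{D}^{k}$ explicitly, and read off the eigenvectors as $\mathcal{V}^{k}\ket{S} = \ket{Col(V_{S})}$. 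Your version is longer but buys two things the paper leaves implicit: completeness of the eigenbasis (immediate from invertibility of $\mathcal{V}^{k}$ on each $\mathcal{H}_{k}$, whereas the paper only exhibits eigenvectors and relies on the count $\sum_{k}\binom{n}{k}=2^{n}$), and the extension to complex eigenvectors comes for free since compound-matrix multiplicativity holds over $\C$ with no extra remark needed. The paper's version buys brevity and stays closer to the operational picture of the circuit acting on subspace states. Both are valid; the only computation you should make sure to keep is the vanishing of the off-diagonal entries of $\mathcal{D}^{k}$, which you do correctly.
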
 
\begin{proof}
The action of $\mathcal{G}(U)$ on $\ket{Col(V_{S})}$ can be computed using Lemma \ref{givens}, 
\al{ 
\mathcal{G}(U) \ket{Col(V_{S})} = \ket{ Col( UV_{S}) } = e^{i \sum_{i\in S} \theta_{i}} \ket{ Col(V_{S}) }. 
} 
Note that in general the eigenvectors $v_{i}$ are complex vectors, Lemma \ref{givens} continues to hold for complex vectors. 
\end{proof} 
\noindent This result yields a phase estimation algorithm for subspace states instead of vectors. Given a Givens circuit $\mathcal{G}(U)$ for a unitary $U$ and an arbitrary subspace state $\ket{Col(X)}$, the 
quantum phase estimation algorithm for $\mathcal{G}(U)$ decomposes it into a superposition over eigenstates of $\mathcal{U}^{k}$, that is as a linear combination of the subspace states $\ket{Col(V_{S})}$. 

Classically, an algorithm for carrying out such a decomposition would need to work with exponentially large compound matrices which do not have a good low-rank approximation, so it seems plausible that the phase estimation algorithm for subspace states can offer 
high degree polynomial or even exponential quantum speedups in some settings. In order to establish such an exponential speedup, it needs to be seen if the subspace state phase estimation algorithm can be dequantized, that is if there is an polynomial time classical algorithm to sample from the output of the phase estimation algorithm for subspace states. 

There are two reasons for expecting phase estimation for compound matrices to be more resistant to dequantization than low rank quantum linear algebra methods. First $\ell_{2}$-sampling from the rows or columns of compound matrices is equivalent to the determinant sampling problem, which has cubic cost in the classical case, increasing polynomial overheads for possible 'quantum inspired' algorithms. Second, even if the matrix $A$ has a rank $r$ approximation for $r \ll n$, the compound matrix $\mathcal{A}^{k}$ will have a low rank approximation with rank $r^{k}$ which is exponential in $r$ for large enough $k$. 

In addition, the singular value estimation and transformation algorithms  \cite{KP16, gilyen2020improved} also generalize to subspace states as we show next. First we generalize the notion of a block encoding, principal vectors and principal angles 
which are used to derive the singular value estimation and transformation algorithms to the setting of compound matrices. 

\begin{claim} \label{d5} 
(Generalized block encoding) Let $P, Q \in \R^{n\times n}$ be orthogonal matrices and let $A=(P^{T} Q)_{IJ}$ for $I, J \subseteq [n]$ be a block encoding of $A$ as a submatrix of $P^{T}Q$. Then, there is a block encoding for the compound matrix given by $\mathcal{A}^{k}= (\mathcal{P^{T}Q} )^{k}_{I_{k}, J_{k}} $ for all $k \leq \min(|I|, |J|)$ with index sets $I_{k}, J_{k}$ consisting of $k$-tuples from elements of $I$ and $J$. 
\end{claim}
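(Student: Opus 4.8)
The plan is to reduce the claim to an elementary observation about minors: a minor of a minor coincides with the corresponding minor of the original matrix, so the compound-matrix entries must agree. Throughout I would write $M = P^{T}Q$, which is orthogonal because it is a product of orthogonal matrices, and note that the hypothesis says precisely $A = M_{IJ}$, i.e. $A$ is the submatrix of $M$ with rows indexed by $I$ and columns indexed by $J$.

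First I would fix the indexing. For $k \leq \min(|I|,|J|)$, let $I_{k}$ and $J_{k}$ be the collections of $k$-element subsets of $I$ and of $J$ respectively, regarded as a subcollection of the $k$-subsets of $[n]$ that label the rows and columns of $\mathcal{M}^{k}$. For any $S \in I_{k}$ and $T \in J_{k}$ I would compare the relevant minors. Since $S \subseteq I$ and $T \subseteq J$, selecting rows $S$ and columns $T$ from $A = M_{IJ}$ produces exactly the same matrix as selecting rows $S$ and columns $T$ directly from $M$, with the same order of rows and columns inherited from the natural order on $[n]$; that is, $A_{ST} = M_{ST}$. Taking determinants and invoking Definition \ref{def:cmpd} gives $(\mathcal{A}^{k})_{ST} = \det(A_{ST}) = \det(M_{ST}) = (\mathcal{M}^{k})_{ST}$. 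As this holds for every $S \in I_{k}$ and $T \in J_{k}$, it establishes $\mathcal{A}^{k} = (\mathcal{M}^{k})_{I_{k}J_{k}} = (\mathcal{P^T Q})^{k}_{I_{k}J_{k}}$, which is the asserted identity.

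Then I would check that this really is a block encoding, namely that $\mathcal{A}^{k}$ sits inside an orthogonal matrix of the same product form as in the statement. Using the multiplicative property $(\mathcal{P^T Q})^{k} = (\mathcal{P^T})^{k}(\mathcal{Q})^{k}$, together with the elementary identity $(\mathcal{P^T})^{k} = (\mathcal{P}^{k})^{T}$ (immediate from the definition of the compound matrix, since $\det((P^{T})_{ST}) = \det(P_{TS})$), I get $\mathcal{M}^{k} = (\mathcal{P}^{k})^{T}\mathcal{Q}^{k}$. Because compounds of orthogonal matrices are orthogonal, $\mathcal{P}^{k}$ and $\mathcal{Q}^{k}$ are orthogonal, so $\mathcal{A}^{k}$ is block-encoded by the pair $\mathcal{P}^{k}, \mathcal{Q}^{k}$ in exactly the same sense that $A$ is block-encoded by $P, Q$.

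The only place I expect any friction is bookkeeping with orderings and signs: one must confirm that the row and column order used to form the minors of $\mathcal{A}^{k}$ (inherited from the ambient order on $[n]$) is the same order used in defining $\mathcal{M}^{k}$, so that the determinants match on the nose rather than up to a sign. Since both compound matrices are built with respect to the same underlying order on $[n]$ and the subsets $S, T$ are chosen inside $I, J$ without any reordering, these conventions line up and no stray signs appear. Beyond this consistency check there is no analytic or combinatorial obstacle; the substance of the claim is simply that restricting to $(I,J)$ commutes with passing to the $k$-th compound.
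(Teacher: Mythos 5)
Your argument is correct and matches the route the paper intends: the paper states this claim without a written proof, relying implicitly on exactly the two facts you supply — that a minor of a submatrix equals the corresponding minor of the ambient matrix (so restricting to $(I,J)$ commutes with taking the $k$-th compound), and that multiplicativity plus preservation of orthogonality under compounds puts $\mathcal{A}^{k}$ inside the orthogonal matrix $(\mathcal{P}^{k})^{T}\mathcal{Q}^{k}$. Your bookkeeping remark about inherited orderings is the only point of care, and you resolve it correctly.
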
  
\noindent The matrix $P$ can be taken to be the identity matrix to recover the usual notion of block encoding. The above definition extends the notion of a block encoding to compound matrices, showing that 
an embedding of a matrix $A$ into unitary $U$ also implies an embedding for the corresponding compound matrices. The next result generalizes the relation between the singular value decomposition of $A$ and the principal angles between subspaces 
$Col(P^{T}_{I})$ and $Col(Q_{J})$ to the case of compound matrices.  

\begin{theorem} 
(Compound matrix SVD) Let $A=(P^{T} Q)_{IJ}$ be as in Definition \ref{d5} and let $A= U^{T} \Sigma V$ be the singular value decomposition of $A$. 
Then the principal vectors and angles between $\mathcal{P}^{k}_{I_{k}}$ and $\mathcal{Q}^{k}_{ J_{k}}$ are given by $( \ket{Col(PU_{S})},  \ket{Col(QV_{S})} )$ and $\cos(\theta_{S}) = \prod_{i \in S} \sigma_{i}$
for  $S \in \mathcal{H}_{k}$. 
\end{theorem}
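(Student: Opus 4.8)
The plan is to reduce the statement to the standard characterization of principal angles through a cross-Gram matrix, and then to exploit the fact that ``the SVD of a compound is the compound of the SVD.''

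First I would fix orthonormal bases for the two subspaces living in the compound space. Because the $k$-th compound of an orthogonal matrix is again orthogonal, the columns of $\mathcal{P}^{k}$ indexed by $I_{k}$ form an orthonormal system, and by the identity $\mathcal{U}^{k}\ket{S}=\ket{Col(U_{S})}$ established in \lemref{lem:givens} these columns are exactly the subspace states $\{\ket{Col(P_{S})}\}_{S\subseteq I,\,|S|=k}$. Hence $\mathcal{P}^{k}_{I_{k}}$ is an orthonormal basis matrix for the subspace it spans, and likewise $\mathcal{Q}^{k}_{J_{k}}$ for the second subspace. By the standard theory of principal angles, for two subspaces with orthonormal basis matrices $B_{1},B_{2}$ the cosines of the principal angles are the singular values of the cross-Gram matrix $B_{1}^{T}B_{2}$, and the principal vector pairs are $(B_{1}\hat u_{s},\,B_{2}\hat v_{s})$, where $\hat u_{s},\hat v_{s}$ are the corresponding left and right singular vectors. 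Thus the entire theorem reduces to analyzing the SVD of $(\mathcal{P}^{k}_{I_{k}})^{T}\mathcal{Q}^{k}_{J_{k}}$.

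The next step is to identify this cross-Gram matrix and its SVD. Using the multiplicative property of compound matrices together with the fact that taking compounds commutes with transposition, $(\mathcal{P}^{k}_{I_{k}})^{T}\mathcal{Q}^{k}_{J_{k}}$ is precisely the $(I_{k},J_{k})$ submatrix of $(\mathcal{P}^{T}Q)^{k}$, which by \claimref{d5} equals $\mathcal{A}^{k}$. I would then read off the SVD of $\mathcal{A}^{k}$ directly from $A=U^{T}\Sigma V$: again by multiplicativity, $\mathcal{A}^{k}=(\mathcal{U}^{k})^{T}\,D\,\mathcal{V}^{k}$, where $D$ is the $k$-th compound of the diagonal matrix $\Sigma$, itself diagonal with entries $\prod_{i\in S}\sigma_{i}$ over $|S|=k$. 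Since $\mathcal{U}^{k}$ and $\mathcal{V}^{k}$ are orthogonal, this is already an SVD, so the singular values give $\cos(\theta_{S})=\prod_{i\in S}\sigma_{i}$, while the left and right singular vectors are the $S$-indexed columns of $(\mathcal{U}^{k})^{T}$ and of $\mathcal{V}^{k}$. Transporting these back into the ambient space, $\mathcal{P}^{k}_{I_{k}}$ applied to the $S$-th column of $\mathcal{U}^{k}$ is the $S$-th column of $(P_{I}U)^{k}$ by the rectangular generalization of the Cauchy--Binet identity (\claimref{cbi}), hence equals $\ket{Col((P_{I}U)_{S})}=\ket{Col(PU_{S})}$; the symmetric computation on the $Q$ side yields $\ket{Col(QV_{S})}$.

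I expect the main obstacle to be bookkeeping rather than any conceptual difficulty. One must justify the rectangular compound identity $(P_{I}U)^{k}=\mathcal{P}^{k}_{I_{k}}\,\mathcal{U}^{k}$ for the nonsquare factor $P_{I}\in\R^{n\times|I|}$, since the square multiplicativity claim does not literally apply and one needs the generalized Cauchy--Binet statement. One must also track the transpose carefully: the stated decomposition $A=U^{T}\Sigma V$ places the left singular vectors as the columns of $U^{T}$, so the matching of $\mathcal{U}^{k}$ against $(\mathcal{U}^{k})^{T}$ in the principal-vector formula has to be carried out consistently to land on $\ket{Col(PU_{S})}$. A secondary point is that the products $\prod_{i\in S}\sigma_{i}$ should be sorted in decreasing order to be read as ordered principal angles, and one should record that the hypothesis $k\le\min(|I|,|J|)$ guarantees $\mathcal{A}^{k}$ is a genuine nonempty block.
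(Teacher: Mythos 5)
Your proposal is correct, but it frames the argument differently from the paper. The paper proves the theorem via Jordan's lemma (\lemref{jordan}): it forms the projectors $\Pi_{P}^{k}=\mathcal{P}_{I_k}\mathcal{P}_{I_k}^{T}$ and $\Pi_{Q}^{k}=\mathcal{Q}_{J_k}\mathcal{Q}_{J_k}^{T}$ and uses multiplicativity of compound matrices to verify directly that $\Pi_{P}^{k}\ket{Col(QV_{S})}=\bigl(\prod_{i\in S}\sigma_{i}\bigr)\ket{Col(PU_{S})}$ and symmetrically, so that each candidate pair spans an invariant two-dimensional Jordan block with angle $\cos(\theta_{S})=\prod_{i\in S}\sigma_{i}$. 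You instead use the Bj\"orck--Golub characterization: the principal angles are the singular values of the cross-Gram matrix $(\mathcal{P}^{k}_{I_k})^{T}\mathcal{Q}^{k}_{J_k}$, which you identify with $\mathcal{A}^{k}$ by multiplicativity, and whose SVD you read off as the compound of the SVD of $A$. The two characterizations of principal angles are equivalent, and the decisive ingredient is the same in both proofs --- multiplicativity of compounds, which makes the cross-Gram of the lifted bases equal to $\mathcal{A}^{k}$ and makes ``the compound of an SVD'' an SVD of the compound. What your route buys is an explicit global SVD of $\mathcal{A}^{k}$, which makes it transparent that the family $\bigl(\ket{Col(PU_{S})},\ket{Col(QV_{S})}\bigr)_{S}$ is exhaustive; the paper exhibits each pair as a Jordan block and leaves completeness to an implicit dimension count. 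The overheads you flag --- the rectangular Cauchy--Binet identity $(P_{I}U)^{k}=\mathcal{P}^{k}_{I_k}\mathcal{U}^{k}$ needed to transport singular vectors back to subspace states, and the transpose bookkeeping forced by the convention $A=U^{T}\Sigma V$ --- are genuine but routine, and the paper glosses over the same points.
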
 
\begin{proof} 
Let us first recall the proof for $d=1$ using Jordan's lemma \ref{jordan} for the projectors $\Pi_{P}= P_{I}P_{I}^{T}$ and $\Pi_{Q}= Q_{J}Q_{J}^{T}$. The principal vectors pairs $(Pu_{i}, Qv_{i})$ for $i\in [n]$ are invariant under the action of $\Pi_{P}$
and $\Pi_{Q}$. The principal angles $\theta_{i}$ are the angles between the principal vectors, 
\al{ 
\cos(\theta_{i}) = \braket{Pu_{i} } {Qv_{i}} = \braket{u_{i} } {Av_{i}} = \sigma_{i}.
} 

For the general case define the projectors $\Pi_{P}^{k} = \mathcal{P}_{I_{k}}\mathcal{P}_{I_{k}}^{T}$ and $\Pi_{Q}^{k} = \mathcal{Q}_{I_{k}}\mathcal{Q}_{I_{k}}^{T}$. 
In order to prove that $( \ket{Col(PU_{S})},  \ket{Col(QV_{S})} )$ are a pair of principal vectors, it suffices to show that the subspace is invariant under the action of 
$\Pi_{P}^{k}$ and  $\Pi_{Q}^{k}$.  The multiplicativity of compound matrices yields the relations,  
\al{ 
\Pi_{P}^{k} \ket{Col(QV_{S})}  = Col( PP^{T} Q V_{S})= (\prod_{i \in S} \sigma_{i}) Col( P U_{S}) \nl 
\Pi_{Q}^{k} \ket{Col(PV_{S})}  = Col( QQ^{T} P V_{S})= (\prod_{i \in S} \sigma_{i}) Col( Q U_{S}). 
}  
These relations show that $( \ket{Col(PU_{S})},  \ket{Col(QV_{S})} )$ is a pair of principal vectors for Jordan's lemma and that angles between the vectors is given by 
$\cos(\theta_{S})= (\prod_{i \in S} \sigma_{i})$.

\end{proof} 
\noindent We next present a quantum algorithm generalizing the quantum singular value estimation algorithm \cite{KP16} to a subspace analog of SVD, that is given a matrix $A$ and a subspace state we can decompose the subspace 
into individual components corresponding to their projections on the dimension $d$ spaces of the singular vectors of $A$. 

\begin{algorithm} [H]
\caption{Subspace SVE algorithm.} \label{SVE} 
\begin{algorithmic}[1]
\REQUIRE Embedding of $A \in \R^{m \times n}$ with singular value decomposition $A= U \Sigma V^{T}$ in an orthogonal matrix $U=(P^{T}Q) \in \R^{N \times N}$. Givens circuits for applying $P$ and $Q$. 
\REQUIRE State $\ket{\phi_{k}} =  \sum_{|S|=k} \alpha_{S} \ket{Col(V_{S}) }$, a linear combination of subspace states corresponding to subspaces spanned by sets of $k$ singular vectors os $A$. 
\ENSURE State $\sum_{S \in \mathcal{H}_{k}} \alpha_{S} \ket{Col(V_{S}) } \ket{ \overline{\theta_{S}} }$ s.t. $|\cos(\overline{\theta_{S}}) - \prod_{i \in S} \sigma_{i} | = O(\epsilon)$. 
\STATE Apply Givens circuit $G(Q)$ to obtain subspace state $\ket{Col(QX)}$ of dimension $N$. 
\STATE Apply the phase estimation algorithm for $\mathcal{G}(U)$ on  $\ket{Col(QX)}$ to obtain additive error $\epsilon$ estimates of the eigenvalues to obtain the quantum state
$ \sum_{S \in \mathcal{H}_{k}} \alpha_{S} \ket{Col(V_{S}) } \ket{ \overline{\theta_{S}} }$. 
\end{algorithmic}
\end{algorithm}
\noindent The algorithm stated above is the analog of singular value estimation for subspace states. The result can be formally stated as, 
\begin{theorem} 
(Subspace SVE) Let $A=U\Sigma V^{T}$ be the singular value decomposition for matrix $A \in \R^{m \times n}$ embedding in unitary $U \R^{N \times N}$. Then the mapping, 
\al{ 
\ket{\phi_{k}} := \sum_{S \in \mathcal{H}_{k}} \alpha_{S} \ket{Col(V_{S}) } \to  \sum_{S \in \mathcal{H}_{k}} \alpha_{S} \ket{Col(V_{S}) } \ket{ \overline{\theta_{S}} }
} 
such that $|\cos(\overline{\theta_{S}}) - \prod_{i \in S} \sigma_{i} | = O(\epsilon)$ has complexity $O(T(U)/\epsilon)$ where $T(U)$ is the complexity of implementing the Givens circuits for $U$. 
\end{theorem}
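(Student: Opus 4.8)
The plan is to mirror the proof of the standard ($k=1$) singular value estimation algorithm of \cite{KP16}, substituting the principal vectors and angles identified by the Compound matrix SVD theorem for their vector analogues. Correctness then reduces to two ingredients already established in the excerpt: the action of Givens circuits on subspace states, and the two-dimensional Jordan decomposition induced by a pair of projectors.

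First I would verify Step 1. By Theorem \ref{thm: action} (equivalently Theorem \ref{givens}) a Givens circuit acts on subspace states as $\mathcal{G}(Q)\ket{Col(X)} = \ket{Col(QX)}$, so applying $\mathcal{G}(Q)$ to $\ket{\phi_{k}} = \sum_{S} \alpha_{S}\ket{Col(V_{S})}$ yields $\sum_{S}\alpha_{S}\ket{Col(QV_{S})}$ by linearity. The Compound matrix SVD theorem identifies each $\ket{Col(QV_{S})}$ as one member of the principal-vector pair $\big(\ket{Col(PU_{S})},\ket{Col(QV_{S})}\big)$ for the projectors $\Pi_{P}^{k} = \mathcal{P}_{I_{k}}\mathcal{P}_{I_{k}}^{T}$ and $\Pi_{Q}^{k} = \mathcal{Q}_{J_{k}}\mathcal{Q}_{J_{k}}^{T}$, with principal angle $\cos(\theta_{S}) = \prod_{i\in S}\sigma_{i}$.

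Next I would set up the phase-estimation target via qubitization. I would form the walk operator $W_{k} = (2\Pi_{P}^{k}-I)(2\Pi_{Q}^{k}-I)$, where each reflection is built from the supplied Givens circuits as $2\Pi_{Q}^{k}-I = \mathcal{G}(Q)\,R_{J}\,\mathcal{G}(Q)^{*}$ (and analogously for $P$), with $R_{J}$ the reflection about $\mathrm{span}\{\ket{S} : S\subseteq J,\ |S|=k\}$; this is valid because $\mathcal{G}(Q)\ket{S} = \ket{Col(Q_{S})}$. Jordan's lemma (Lemma \ref{jordan}) applied to $\Pi_{P}^{k}$ and $\Pi_{Q}^{k}$ decomposes $\mathcal{H}_{k}$ into the two-dimensional invariant subspaces spanned by each principal-vector pair, and on the subspace attached to $S$ the product of reflections acts as a rotation by $2\theta_{S}$, so the eigenvalues of $W_{k}$ restricted there are $e^{\pm 2 i\theta_{S}}$. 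Phase estimation of $W_{k}$ therefore extracts an estimate $\overline{\theta_{S}}$; since each $\ket{Col(QV_{S})}$ is supported on the single Jordan subspace indexed by $S$, linearity propagates the tagging across the superposition to give $\sum_{S}\alpha_{S}\ket{Col(QV_{S})}\ket{\overline{\theta_{S}}}$, after which I would apply $\mathcal{G}(Q)^{*}$ to restore $\ket{Col(V_{S})}$.

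For accuracy and cost, since $|\cos(\overline{\theta_{S}})-\cos(\theta_{S})|\le|\overline{\theta_{S}}-\theta_{S}|$ it suffices to run phase estimation to additive phase precision $O(\epsilon)$, requiring $O(1/\epsilon)$ applications of $W_{k}$; each application invokes the Givens circuits for $P$ and $Q$ and their inverses at cost $O(T(U))$, giving total complexity $O(T(U)/\epsilon)$. The main obstacle I anticipate is the $\pm\theta_{S}$ ambiguity intrinsic to the Jordan picture: a principal vector is a superposition of the two walk eigenvectors with opposite phases, so phase estimation a priori produces a superposition over $\overline{\pm\theta_{S}}$ rather than a single estimate. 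This is handled exactly as in the scalar SVE proof by recording $\cos(\overline{\theta_{S}})$ (equivalently $|\overline{\theta_{S}}|$), which is insensitive to the sign, so both branches report the same estimate $\prod_{i\in S}\sigma_{i}$ and recohere; I would make this reduction explicit and otherwise inherit the correctness and error analysis of the $k=1$ case.
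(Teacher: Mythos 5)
Your proposal is correct and follows the route the paper intends: combine the Compound matrix SVD theorem (principal vector pairs $\bigl(\ket{Col(PU_{S})},\ket{Col(QV_{S})}\bigr)$ with $\cos(\theta_{S})=\prod_{i\in S}\sigma_{i}$) with the standard Jordan-lemma phase-estimation argument of the $k=1$ SVE algorithm. In fact you supply more than the paper does: the paper gives no explicit proof of this theorem, only Algorithm \ref{SVE}, and its step 2 as literally written says to phase-estimate the Givens circuit $\mathcal{G}(U)$ for the embedding unitary $U=P^{T}Q$, whose eigenphases are sums of eigenphases of $P^{T}Q$ rather than the quantities $\arccos\bigl(\prod_{i\in S}\sigma_{i}\bigr)$ being estimated. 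Your identification of the correct object to phase-estimate --- the walk operator $W_{k}=(2\Pi_{P}^{k}-I)(2\Pi_{Q}^{k}-I)$ with the reflections implemented by conjugating a reflection about the relevant standard basis states with the Givens circuits --- together with your explicit handling of the $\pm\theta_{S}$ ambiguity by recording $\cos(\overline{\theta_{S}})$, is the right reading of the construction and fills a genuine gap in the paper's exposition rather than deviating from it.
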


The Givens circuits can be also used in a black box manner to implement singular value transformations
for projection and matrix inversion using known techniques \cite{gilyen2020improved}. Similar to the application of the original SVE algorithm to recommender systems, we believe these techniques can have applications to collaborative filtering based recommender systems where recommendations are made jointly to groups of $k$ users. Finding more quantum and quantum-inspired machine learning applications for compound matrix $SVD$ and subspace SVE remains a direction for future work.

\section{Topological data analysis} 
Topological data analysis is an area closely related to the constructions presented in this work. In this section, we show that the Clifford loaders introduced in section \ref{cliffload} can be used to reduce exponentially the depth for the best known quantum algorithm for topological data analysis. The result relies on the observation that the Dirac operator of an arbitrary simplicial complex embeds into the unitary matrix $\Gamma(x)$ for 
the vector $x= \frac{1}{ \sqrt{n}}  1^{n}$. The Clifford loader constructions thus provide a logarithmic depth implementation of a block encoding for the Dirac operator which can be utilized for TDA. The resulting algorithm 
reduces the depth for the topological data analysis algorithm from $O(n)$ to $O(\log n)$. 

We note that the circuit depth for the original LGZ algorithm \cite{lloyd2016quantum} for quantum TDA was $O(n^{5})$, a series of recent works \cite{gyurik2020towards, hayakawa2021quantum} on quantum TDA have improved on this by more efficient encodings for the Dirac operator and brought down the depth to $O(n)$ \cite{ubaru2021quantum}. Using logarithmic depth Clifford loaders to implement a block encoding for the Dirac operator provides an exponential improvement in terms of the circuit depth over these constructions as well.

We next recall the definitions of a simplicial complex and the Laplacian and the Dirac operator for a simpliclal complex and prove the main result on the embedding of the Dirac operator into the Clifford loader unitary. 
\begin{defn} 
A rank $r$ simplicial complex is a downward closed collection of subsets $V_{k} \subset \mathcal{H}_{k}$ for $0\leq k \leq r$. The elements of $V_{k}$ are referred to as $k$ simplices. 
\end{defn} 
\noindent We consider two specific examples of simplicial complexes relevant to machine learning. First, the Vietoris Rips complex $VR(X, d)$ of a set of points $X$ in a metric space consists of all simplices having diameter at most $d$. A second example is graph-based simplicial complex where $V_{k}$ be the set of $k$-cliques contained in a graph $G(V,E)$. 

An orientation is a choice of permutation (an ordering) for each simplex, a simplicial complex is said to be orientable if there is a consistent choice of orientation under intersections. A triangulation of a Mobius strip is an example of a non-orientable complex, we will work with orientable complexes in this work. A choice of orientation is needed to explicitly write the matrix representation of the Dirac operator, but the spectral properties of the Dirac operator are 
independent of the orientation. The orientation is a permutation on the vertices of the simplex. 
\begin{defn} \label{orient}
If a $k$-simplex has orientation $\sigma$, then the sub-simplex with 
the $\sigma_j$ deleted has induced orientation $(-1)^{j-1} \sigma^j$ where $\sigma^j$ is the permutation obtained by deleting $\sigma_{j}$ from $\sigma$. 
\end{defn}
\noindent Note that $(-1)^{j-1}$ counts the parity of the number of swaps to move $\sigma_j$ to $\sigma_1$. Fix a global orientation so that the vertices of all the simplices are in increasing order. Let $x^j$ be the simplex obtained from $x$ by deleting the $j$-th element in this order. The explicit description of the non-zero entries of the Dirac operator is as follows,  
\all{ 
d(x,x^j) &= (-1)^{j-1}  \hspace{30pt} \text{$\forall j\in [k]$} \nl
D(x,y) &= d(x,y) + d^{*}(x,y)
} {one} 
\noindent Note that $D$ is an exponentially large sparse matrix, a fact that has been used to obatin for quantum linear algebra algorithms for working with $d$. Let us establish some further properties of the Dirac operator.

\begin{claim} 
The matrix $d$ is a lower triangular matrix such that $d^{2}=0$. 
\end{claim}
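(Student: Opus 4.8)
The plan is to prove the two assertions separately, both by direct inspection of the entries $d(x,x^j)=(-1)^{j-1}$ from equation \eqref{one}. For lower triangularity, I would fix the ordering of the standard basis $\ket{x}$ by the integer value of the bit string $x\in\{0,1\}^n$ (the same ordering that indexes $\Gamma(x)$). By definition the entry $d(x,y)$ is nonzero only when $y=x^j$ for some $j$, that is, when $y$ is obtained from $x$ by deleting a single vertex, which amounts to resetting one $1$-bit of $x$ to $0$. Resetting a set bit strictly decreases the integer value of the string, so $y<x$ whenever $d(x,y)\neq 0$. Hence every nonzero entry of $d$ lies strictly below the diagonal, so $d$ is (strictly) lower triangular and in particular has vanishing diagonal.

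For $d^2=0$ I would compute the matrix product entrywise as $(d^2)(x,z)=\sum_y d(x,y)\,d(y,z)$. A term survives only if $y$ is a codimension-$1$ face of $x$ and $z$ is a codimension-$1$ face of $y$, so $z$ must be obtained from $x$ by deleting two vertices. Writing the vertices of $x$ in increasing order and letting these two deleted vertices occupy positions $a<b$, there are exactly two surviving intermediate faces $y$, according to which vertex is removed first, and the whole computation reduces to showing that their signs are opposite.

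The key step is the sign bookkeeping, and this is the one point that needs care. If I delete the vertex at position $a$ first, it contributes $(-1)^{a-1}$; in the resulting simplex the vertex originally at position $b$ has shifted down to position $b-1$, so deleting it next contributes $(-1)^{b-2}$, for a total sign $(-1)^{a+b-3}$. If instead I delete the vertex at position $b$ first, it contributes $(-1)^{b-1}$, and since $a<b$ the vertex at position $a$ is unmoved, contributing $(-1)^{a-1}$, for a total sign $(-1)^{a+b-2}$. These two signs differ by a factor of $-1$, so the two paths to the same $z$ cancel and $(d^2)(x,z)=0$ for all $x,z$, giving $d^2=0$. The single sign flip comes precisely from the index shift $b\mapsto b-1$ caused by removing the earlier vertex first; identifying this shift as the sole source of asymmetry is the main (and essentially only) obstacle, after which the result is the standard fact that the simplicial boundary operator squares to zero.
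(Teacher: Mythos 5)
Your proof is correct and follows essentially the same route as the paper: lower triangularity because deleting a vertex strictly decreases the basis index, and $d^2=0$ because the two orders of deleting a pair of vertices produce opposite signs. The only difference is that you explicitly carry out the sign computation (the index shift $b\mapsto b-1$) that the paper dismisses as ``a basic fact'' about the induced orientations of $x^{ij}$ and $x^{ji}$, and your bookkeeping is accurate.
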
 
\begin{proof} 
The $d$ matrix is lower triangular as multiplication by $d$ reduces the number of elements in each simplex by $1$, similarly $d^{*}$ is upper triangular. 
Let $x$ be a simplex, then $d^{2} x$  is a sum over sub-simplices of $x$ with two vertices deleted and these occur with opposite signs (it is a basic fact 
that according to Definition \ref{orient} , the orientations for the simplices $x^{ij}$ and $x^{ji}$ are different), implying that $d^{2}=0$. 
\end{proof} 
\noindent The Dirac operator $D= (d + d^{*})$ squares to the Laplacian $\Delta = dd^{*} + d^{*} d$.  
\begin{claim} 
The Laplacian is a block diagonal matrix with diagonal entries $L_{p}(xx)= p+ 1+ext_{p}(x)$ with $ext_{p}(x)$ being the 
number of extensions of $p$-simplex $x$ to a $(p+1)$ simplex. 
\end{claim}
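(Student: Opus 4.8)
The plan is to exploit the grading of the $n$-qubit Hilbert space by Hamming weight, under which $d$ lowers the simplex dimension by one and $d^{*}=d^{T}$ raises it by one. Since both $dd^{*}$ and $d^{*}d$ therefore preserve the dimension of a simplex, the Laplacian $\Delta = dd^{*}+d^{*}d$ maps each graded piece $\mathcal{H}_{p}$ into itself and is block diagonal, $\Delta = \bigoplus_{p} L_{p}$, where $L_{p}$ is the restriction to $p$-simplices. This immediately yields the block-diagonal part of the claim, and it then remains only to evaluate the diagonal entries $L_{p}(x,x)$ for a fixed $p$-simplex $x$ of the complex.

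For the diagonal entry I would compute the two summands separately, using that the entries of $d$ are real so that $d^{*}=d^{T}$. For the first term, $(dd^{*})(x,x) = \sum_{z} d(x,z)\, d^{*}(z,x) = \sum_{z} d(x,z)^{2}$, where the sum runs over the simplices $z$ with $d(x,z)\neq 0$, namely the codimension-one faces $z=x^{j}$ obtained by deleting a single vertex of $x$. By downward-closure of a simplicial complex all of these faces lie in the complex, and by \eqnref{one} each entry $d(x,x^{j})=(-1)^{j-1}$ squares to $1$; since a $p$-simplex has exactly $p+1$ such faces, this gives $(dd^{*})(x,x)=p+1$.

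For the second term, $(d^{*}d)(x,x) = \sum_{z} d^{*}(x,z)\, d(z,x) = \sum_{z} d(z,x)^{2}$, where now $z$ ranges over the simplices having $x$ as a codimension-one face, i.e.\ the cofaces of $x$ obtained by adjoining one new vertex to form a $(p+1)$-simplex that belongs to the complex. Each such coface again contributes a squared $\pm 1$ entry equal to $1$, and the number of them is precisely $ext_{p}(x)$ by definition. Hence $(d^{*}d)(x,x)=ext_{p}(x)$, and adding the two contributions gives $L_{p}(x,x)=(p+1)+ext_{p}(x)$, as stated.

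The argument is essentially bookkeeping once the conventions are pinned down, so the only real subtlety to guard against is the dimension convention: with the standard topological convention a $p$-simplex has $p+1$ vertices and hence exactly $p+1$ codimension-one faces, which is what produces the additive $p+1$ term rather than $p$. I would also emphasize that every face contribution survives because the orientation signs enter only through the factor $(-1)^{j-1}$, which disappears upon squaring; this is exactly why the diagonal of $\Delta$ is orientation-independent, consistent with the earlier remark that the spectral properties of the Dirac operator do not depend on the chosen orientation.
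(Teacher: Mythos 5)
Your proposal is correct and follows essentially the same route as the paper: block-diagonality from the grading by simplex dimension, then explicit evaluation of the diagonal entries of $dd^{*}$ and $d^{*}d$ as the number of codimension-one faces ($p+1$, using downward closure) and the number of cofaces ($ext_{p}(x)$), with the orientation signs cancelling upon squaring. Your write-up is somewhat more detailed than the paper's terse computation, but the underlying argument is the same.
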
 
\begin{proof} 
Let us compute the entries of the Laplacian by computing explicitly the entries of $d_{p} d_{p}^{*}$ and  $d_{p}^{*} d_{p}$. Then, $L_{p}(xx)= p+ 1+ |ext(x)|$ as there are $p+1$ simplices of size $p-1$ in $x$. For the off diagonal entries we have $L_{p} (xy) = \pm 1$ if $|x \cap y|= p-1$ and $x \cup y$ 
is not a simplex and $L_{p} (xy) = 0, \pm 2$ if $|x \cap y|= p-1$ and $x \cup y$ is a simplex. 
\end{proof} 

We give a representation for the Dirac operator for the complete simplex and shoe that is a scaled unitary operator as it squares to $nI$. The representation is used later  
to design efficient quantum circuits for implementing the Dirac operator. 

\begin{lemma} 
The Dirac operator $D$ for the complete simplicial complex on $n$ vertices can be represented as, 
\al{ 
D =  \sum_{i \in [n]} Z^{\otimes (i-1)}\otimes  X \otimes  I^{\otimes (n-1)}.
} 
\end{lemma}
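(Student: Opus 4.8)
The plan is to identify the $2^{n}$-dimensional $n$-qubit Hilbert space with the chain space of the complete simplicial complex: since every subset of $[n]$ is a simplex, the standard basis state $\ket{S}$ for $S \subseteq [n]$ corresponds to the simplex $S$, and the Hamming weight of $S$ records the simplex dimension. Under this identification the claimed matrix is exactly $\sqrt{n}\,\Gamma(\tfrac{1}{\sqrt{n}}1^{n})$ (consistent with the remark that it squares to $nI$, since $\Gamma(x)^{2}=I$ for the unit vector $\tfrac{1}{\sqrt n}1^{n}$), so it suffices to show that $D$ and $\sqrt{n}\,\Gamma(\tfrac{1}{\sqrt n}1^{n})$ agree as operators. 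I would do this by computing the action of each on an arbitrary basis state $\ket{S}$ and matching coefficients term by term.

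For the Clifford loader side, equation \eqref{cliff} gives directly
\[
\sqrt{n}\,\Gamma(\tfrac{1}{\sqrt{n}}1^{n})\ket{S} = \sum_{i \in [n]} (-1)^{\oplus_{j<i} s_{j}}\ket{S \oplus i},
\]
so each term either deletes $i$ from $S$ (when $i \in S$) or adjoins $i$ to $S$ (when $i \notin S$), carrying the phase $(-1)^{p_{i}(S)}$ where $p_{i}(S) = |\{\, j \in S : j < i \,\}|$ counts the elements of $S$ below $i$. For the Dirac side I would use $D = d + d^{*}$ together with the entries in \eqref{one}: the boundary $d$ acts on $\ket{S}$ by deleting each vertex with its orientation sign $(-1)^{m-1}$ (for the $m$-th vertex), while the adjoint $d^{*}$ acts by adjoining each $i \notin S$ with the sign it carries as a facet of the enlarged simplex $S \cup \{i\}$. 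Hence both operators are supported precisely on the single-flips $S \mapsto S \oplus i$, and the whole claim reduces to comparing signs.

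The crux — and the only genuine content — is this sign match. If $i \in S$ is the $m$-th smallest element, then deleting it is deleting the $m$-th vertex, so \eqref{one} assigns $(-1)^{m-1}$; since $p_{i}(S) = m-1$, the Clifford term carries the identical phase. If $i \notin S$, then in $S \cup \{i\}$ the element $i$ occupies position $m = p_{i}(S)+1$, so $d(S\cup\{i\}, S) = (-1)^{m-1}$ and therefore $d^{*}(S, S\cup\{i\}) = (-1)^{m-1} = (-1)^{p_{i}(S)}$ by transposition. In both cases the orientation sign $(-1)^{m-1}$ equals the phase $(-1)^{p_{i}(S)}$ produced by the string of $Z$'s, so the coefficients agree for every $S$ and every $i$, establishing
\[
D = \sqrt{n}\,\Gamma(\tfrac{1}{\sqrt n}1^{n}) = \sum_{i \in [n]} Z^{\otimes(i-1)} \otimes X \otimes I^{\otimes(n-i)}.
\]
I expect the bookkeeping that ``number of smaller elements'' equals ``position minus one'' — consistently across both the deletion and adjunction cases — to be the step most prone to off-by-one errors, and I would verify it directly against Definition \ref{orient} and its observation that $(-1)^{j-1}$ counts the transpositions moving $\sigma_{j}$ to the front.
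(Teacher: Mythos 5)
Your proof is correct and follows essentially the same route as the paper: both verify the identity entrywise on each basis state $\ket{S}$, splitting into the deletion case ($i \in S$, where the orientation sign $(-1)^{m-1}$ for the $m$-th vertex matches the $Z$-string phase since $m-1$ elements of $S$ lie below $i$) and the adjunction case ($i \notin S$, handled via $d^{*}$ and the position of $i$ in $S \cup \{i\}$). Your framing through $\Gamma$ and equation \eqref{cliff} is only a cosmetic repackaging of the paper's direct computation of $Z^{\otimes(k-1)} \otimes X \ket{x}$.
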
 
\begin{proof} 
We verify that the rows of $D$ and for the operator $\sum_{i \in [n]} Z^{\otimes (i-1)} \otimes X \otimes I^{\otimes (n-1)}$ are the same for all simplices $x$, that is $\forall x \subseteq [n]$ 
\al{ 
D(x, .) = X \ket{x} + Z\otimes X \ket{x} + \cdots + Z^{\otimes (n-1)} \otimes X \ket{x}. 
} 
In order to verify this note that (i) Let $k$ be the $j$-th index in sorted order such that $x_{k}=1$, then $Z^{\otimes (k-1)} \otimes X \ket{x} = (-1)^{j-1} \ket{x^{j}}= d(x, x^{k})$ as the $(j-1)$ indices less than 
$k$ where $x_{j}=1$ each contribute a $(-1)$. (ii) Let $k$ be an index such that $x_{k}=0$, then $ Z^{\otimes (k-1)} \otimes X \ket{x} = (-1)^{|i< k| x_{i}=1|} \ket{x^k}= d(x^{k}, x)= d^{*} (x, x^{k})$. It follows that the Dirac operator for the complete simplex is $\sum_{i \in [n]} Z^{\otimes (i-1)} \otimes X \otimes I^{\otimes (n-1)}$. 
\end{proof} 

\noindent The Dirac operator $\frac{1}{\sqrt{n}} \sum_{i \in [n]} Z^{\otimes (i-1)}\otimes  X \otimes I^{\otimes (n-1)}$ for the complete simplicial complex is unitary. In fact it also has a logarithmic depth implementation as it is identical to the unitary $\Gamma(x)$ implemented by the Clifford loader circuit  for the vector $x= \frac{1}{\sqrt{n}} 1^{n}$ with uniform coordinates.
\begin{theorem} 
There is a quantum circuit with $n$ qubits, $O(n \log n)$ gates and depth $O(\log n)$ for the Dirac operator $\frac{1}{\sqrt{n}} \sum_{i \in [n]} Z^{\otimes (i-1)}\otimes  X \otimes I^{\otimes (n-1)}$ for the complete simplicial complex. 
\end{theorem}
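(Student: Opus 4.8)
The plan is to recognize the stated operator as a single instance of the Clifford loader unitary $\Gamma(x)$ and then invoke the logarithmic-depth construction already developed. First I would observe that the prefactor $\tfrac{1}{\sqrt n}$ is precisely what makes the uniform vector $x = \tfrac{1}{\sqrt n}\,1^{n}$ a unit vector, $\norm{x}_2 = 1$. With this choice, \defref{Cliff} gives
\[
\Gamma(x) = \sum_{i\in[n]} \frac{1}{\sqrt n}\, Z^{\otimes(i-1)}\otimes X\otimes I^{\otimes(n-i)},
\]
which by the preceding lemma is exactly the normalized Dirac operator of the complete simplicial complex on $n$ vertices. Hence it suffices to exhibit an efficient circuit for $\Gamma(x)$ at this one distinguished $x$.

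Second, I would appeal directly to \thmref{cliff1}, which asserts that for every unit vector the Clifford loader $\mathcal{C}(x) = D_F(x)(X\otimes I^{\otimes(n-1)})D_F(x)^{*}$ implements $\Gamma(x)$. Specializing to the uniform vector removes any need for a fresh correctness argument: the identity $\mathcal{C}(x) = \Gamma(x)$ holds verbatim, so the task reduces entirely to bounding the gate count and depth of one concrete Clifford loader.

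Third, I would instantiate $D_F(x)$ with the logarithmic-depth unary data loader of the preceding subsection, implementing each $FBS$ gate via \propref{fbsimp}. Reading off the resource bounds established there, the recursive construction yields depth $4(\log n - 1) = O(\log n)$ for $D_F(x)$, hence depth $O(\log n)$ for the sandwiched circuit $\mathcal{C}(x)$, while the $O(n)$ beam-splitter gates together with their parity gadgets contribute $O(n\log n)$ elementary gates on $n$ qubits. This matches the claimed bounds and completes the argument.

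The main obstacle is the gate-count bookkeeping in the last step rather than any conceptual difficulty, since the correctness is inherited wholesale from \thmref{cliff1}. A nonlocal $FBS_{ij}(\theta)$ gate realized naively through \propref{fbsimp} costs $O(|i-j|)$ controlled-$X$ and controlled-$Z$ gates, and summing these over the butterfly layers of the data loader would a priori overshoot $O(n\log n)$; the amortization described in the logarithmic-depth construction, which maintains the running parity on a dedicated qubit and cancels redundant $FBS$ pairs, is what keeps both the depth at $O(\log n)$ and the total gate count at $O(n\log n)$. Checking that this amortization goes through unchanged for the uniform vector, where all rotation angles are the fixed values determined by \eqref{angles}, is the only point that warrants genuine care.
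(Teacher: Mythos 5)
Your proposal is correct and follows essentially the same route as the paper, which states this theorem as an immediate consequence of identifying the normalized Dirac operator with $\Gamma(x)$ for the uniform unit vector $x=\frac{1}{\sqrt{n}}1^{n}$ and then invoking the logarithmic-depth Clifford loader construction with its $O(n\log n)$ gate count and $4(\log n-1)$ depth. Your closing remark about the amortized parity computation being what keeps the gate count at $O(n\log n)$ rather than the naive $\sum O(|i-j|)$ bound is exactly the right point of care and matches the paper's optimized recursive construction.
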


\noindent For the case of a more general simplicial complex, the Dirac operator is no longer a scaled unitary. However, the Dirac operator is a scaled submatrix of the unitary U
as we show in the next claim.  
\begin{theorem} \label{enc} 
The Dirac operator of an orientable simplicial complex $C$ is the submatrix of  $U= \frac{1}{\sqrt{n}} \sum_{i \in [n]} Z^{\otimes (i-1)}\otimes  X \otimes I^{\otimes (n-1)}$ indexed by the simplices that belong to $C$. 
\end{theorem}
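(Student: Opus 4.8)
The plan is to reduce the statement to the representation of the Dirac operator for the \emph{complete} simplicial complex, which the preceding lemma has already established, by observing that every matrix entry of the Dirac operator is local, i.e.\ it depends only on the pair of simplices involved and on the fixed global order on $[n]$, and never on the ambient complex in which they are regarded as living. Concretely, the first step I would take is to rewrite that lemma at the level of individual entries: for any two subsets $x,y\subseteq[n]$, the $(x,y)$ entry of $\sqrt{n}\,U=\sum_{i\in[n]}Z^{\otimes(i-1)}\otimes X\otimes I^{\otimes(n-1)}$ equals the Dirac entry $D(x,y)$ of \eqref{one}; it is nonzero only when $x$ and $y$ differ in a single coordinate, and in that case equals the orientation sign $(-1)^{j-1}$ prescribed by Definition \ref{orient}. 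Since this sign is defined relative to the global ordering of $[n]$ rather than to any enumeration internal to a subcomplex, the entry is the same whether computed in the complete complex or in $C$.

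The second step is to invoke downward closure to see that the principal submatrix of $U$ indexed by $C$ reproduces $d+d^{\ast}$ restricted to $C$ exactly. For the boundary part, every codimension-one face $x^{j}$ of a simplex $x\in C$ again lies in $C$, so row $x$ of the submatrix retains all of its face entries $d(x,x^{j})=(-1)^{j-1}$, which is precisely the boundary of $x$ inside $C$. For the coboundary part, in the complete complex row $x$ of $U$ also carries entries to the cofaces $x\cup\{v\}$; forming the submatrix indexed by $C$ keeps exactly those coface entries for which $x\cup\{v\}\in C$ and discards the rest, and this surviving set is by definition the coboundary of $x$ inside $C$. Hence each kept entry equals the corresponding entry of $D_{C}=d_{C}+d_{C}^{\ast}$, and no nonzero entry of $D_{C}$ is omitted, since $D_{C}$ connects only pairs of simplices that both lie in $C$.

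Combining the two steps, the principal submatrix of $U$ on the index set of simplices of $C$ has $(x,y)$ entry equal to $\tfrac{1}{\sqrt{n}}D_{C}(x,y)$ for all $x,y\in C$, which is the claimed embedding up to the scaling $\tfrac{1}{\sqrt{n}}$ that makes $U$ unitary. I expect the only delicate point to be the sign bookkeeping, namely checking that the face sign $(-1)^{j-1}$ and the transposed sign governing the coboundary fit together consistently so that the submatrix remains symmetric and equals $d_{C}+d_{C}^{\ast}$. This, however, is inherited for free: the full operator $\sqrt{n}\,U$ is real symmetric and the restriction is to a single index set used for both rows and columns, so the principal submatrix is automatically symmetric, and the entrywise sign verification is identical to the one already carried out for the complete complex. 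The hard part is therefore conceptual rather than computational---recognizing that downward closure of $C$ is exactly what guarantees that restricting rows and columns to $C$ loses no boundary relation---and once this is granted the proof follows directly from the lemma for the complete complex.
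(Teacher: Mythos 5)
Your proposal is correct and follows essentially the same route as the paper: both reduce to the lemma identifying $\sqrt{n}\,U$ with the Dirac operator of the complete complex, then use downward closure of $C$ (and the orientation induced by the fixed global ordering of $[n]$) to conclude that the principal submatrix indexed by $C$ reproduces exactly $d_C + d_C^{*}$. Your write-up is somewhat more explicit than the paper's about the coboundary entries and the automatic symmetry of the principal submatrix, but the underlying argument is the same.
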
 
\begin{proof} 
As the entries $D(x,y)=0$ unless $x, y \in C$, it suffices to show that $D(x,y)$ agrees with the restriction of $U$ to the vertices in $C$. 
The non-zero entries of $d(x, x^j)$ in equation \eqref{one} belong to the restricted sub-matrix as $x^{j} \in C$ by downward closure.
If there is an orientation on $C$ compatible with the induced orientations for the complete complex then $d(x, x^j)$ has the same sign for both $C$ 
and the complete complex, establishing the result. 
\end{proof} 
\noindent The Dirac operator $U$ for the complete simplex thus provides a unitary block encoding for the Dirac operator of a general simplicial complex $C$. 
Theorem \ref{enc} shows that the operator $\sqrt{n} U= \sum_{i \in [n]} Z^{\otimes (i-1)}\otimes  X \otimes I^{\otimes (n-1)}$ is a block encoding for the Dirac operator of 
a simplicial complex. 

For TDA, we do not need the full power of a quantum linear system solver, we only need to be able to perform eigenvalue estimation up to precision proportional 
to the spectral gap for the Dirac operator, the cost for the eigenvalue estimation using the qubitization technique is $O( \frac{\sqrt{n}}{\norm{D}} \kappa \log (n\kappa/\epsilon)(T_{U} + T_{O}))$ where $T_{U}+T_{O}$ is the time needed to implement the block encoding for the Dirac operator. 

The detailed running time and the assumptions required for exponential speedups for state of the art quantum topological data analysis algorithms may be found for example in Theorem 7 in \cite{hayakawa2021quantum}. Our contribution to quantum TDA is to provide an exponentially lower-depth block encoding for the Dirac operator by embedding it into a logarithmic depth unitary. The embedding can be used in a black box way and combined with the Dicke state preparation and singular value transformation routines used in the state-of-the-art quantum TDA algorithms to achieve an exponential depth reduction for these approaches.

\bibliographystyle{IEEEtranS} 
\bibliography{b1}

\end{document}